\theoremstyle{remark}
\newenvironment{proof}{{\noindent \it Proof.\thinspace}}{\hfill $\blacksquare$\par}
\newtheorem{theorem}{Theorem}
\newtheorem{lemma}{Lemma}
\newtheorem{definition}{Definition}
\newtheorem{proposition}{Proposition}
\newtheorem{remark}{Remark}
\newtheorem{assumption}{Assumption}
\newtheorem{corollary}{Corollary}
\newtheorem{pot}{Proof of Theorem}
\begin{document}
\begin{frontmatter}
\title{Distributed Resilient Secondary Control for Microgrids with Attention-based Weights against High-density Misbehaving Agents}

%% Authors
\author[university1]{Yutong~Li}
\ead{liyutong@zju.edu.cn}
\author[university2]{Lili~Wang\corref{au2}}
\ead{wangll@sustech.edu.cn}

\cortext[au2]{Corresponding author.}
%% Author affiliation
%\affiliation[university1]{
%organization={College of Electrical Engineering, Zhejiang University},
%state={Zhejiang},
%country={China}
%}
%\affiliation[university2]{
%organization={School of System Design and Inteligent Manufacturing, Southern University of Science and Technology},
%city={Shenzhen},
%country={China}
%}

\begin{abstract}
Microgrids (MGs) have been equipped with large-scale distributed energy sources (DESs), and become more vulnerable due to the low inertia characteristic. In particular, high-density misbehaving DESs caused by cascading faults bring a great challenge to frequency synchronization and active power sharing among DESs. To tackle the problem, we propose a fully distributed resilient consensus protocol, which utilizes confidence weights to evaluate the level of trust among agents with a first-order filter and a softmax-type function. 
We pioneer the analysis of this nonlinear control system from the system operating range and the graph structure perspectives. Both necessary and sufficient conditions are provided to ensure DACC to be uniformly ultimately bounded, even in a robust network with low connectivity. Simulations on a modified IEEE33-bus microgrid testbed with 17 DESs validate that DACC outperforms existing methods in the presence of 8 misbehaving DESs.
\end{abstract}
%Unlike existing research, this study considers a more challenging condition where the incoming neighbor set of any follower may contain more misbehaving agents than normal agents. 
\begin{keyword}
Leader-follower, resilient distributed secondary control, robust networks, high-density misbehaving agents.
\end{keyword}

%\maketitle
\end{frontmatter}

\section{introduction}
A microgrid (MG) is a small-scale power system capable of operating independently or in conjunction with the main power grid. With its local control and power quality support, MGs enable scalable integration of distributed energy sources (DESs)\citep{DCSSPN}. MGs can operate in two modes: grid-connected and islanded, where the islanded mode is more vulnerable to disturbances, such as renewable energy generation or load changes \citep{RCSCIAC}. To ensure stability during islanded operation, hierarchical control, comprising primary, secondary, and tertiary controls, is widely applied in MGs \citep{FDHC}. Primary control maintains voltage and frequency stability, while secondary control restores these parameters to nominal values if deviations occur \citep{DCSC}. Tertiary control facilitates economically optimal operation over longer timescales and calculates reference values for DESs. To enhance reliability, scalability, and communication networks' sparseness in MGs, researchers have extensively studied distributed secondary control architectures \citep{SFVC}, where each DES is considered as an agent. The secondary control is analogous to a leader-follower tracking problem, where all DESs act as followers synchronizing to leaders, represented by the reference values. Leaders are connected to only a few followers \citep{REZAEE2021109384}.

The distributed manner relies on a reliable cyber layer to facilitate information sharing among agents \citep{DRCESS}. Due to the low inertia of DESs, they can respond swiftly to disturbances \citep{SDFLSC}. Consequently, security against misbehaving agents is becoming a critical issue. Misbehaving agents, including faulty, Byzantine, and malicious agents, always exhibit misbehaving states and use unknown update mechanisms \citep{RACR}. Incidents such as the WannaCry ransomware attack \citep{Wannacry} and the Brazil power blackouts \citep{Brazil} have demonstrated the potential for cascading failures or attacks on power systems. A cascading failure occurs when one or more DES failures trigger a sequence of events that ultimately lead to a widespread system collapse \citep{Cascading}. In such cascading failures, high-density misbehaving agents (HDMA) can form within specific segments of the MG, leading to situations where each normal agent may have more misbehaving incoming neighbors than normal ones \citep{SCMS}. Therefore, it is crucial to achieve distributed resilient secondary control against HDMA.

% Resiliency in these problems is to develop distributed control protocols that enable normal agents to achieve consensus amongst themselves and to establish conditions under which such resilient consensus can be attained.% The main difference between them is that at each event time Byzantine agents can send different values to different out-neighbors in a point-to-point network, while malicious agents send the same value to all out-neighbors, regardless of the communication mode. If this difference is not important to the research, the term misbehaving agents can be used to refer to both of them.

To cope with misbehaving events, researchers have proposed three main classes of techniques: 1) The first class consists of algorithms that enable each normal agent to ignore the most deviated agents in the updates, represented by mean-subsequence-reduce (MSR)-type algorithms \citep{RACR}, such as DP-MSR \citep{DPMSR}, SW-MSR \citep{RLFC}, QW-MSR \citep{RRQC} and event-based MSR \citep{RCTEBC}. These algorithms disregard the information received from suspicious agents or those with unsafe values whether or not they are truly misbehaving. In particular, \citep{RVCUC} and \citep{RMDCAE} find the convergence reference by computing an auxiliary point that lies in the convex hull of normal agents' initial states. However, most techniques in this class have certain limitations, such as requiring sorting algorithm execution at every step and high graph connectivity, which can not be satisfied in the network with HDMA. 2) Another common technique is constructing observers to estimate the actual states of neighbor DESs \citep{RSVFAC,OBRIDC,ARDCIST}. These strategies can work in low-connectivity communication network, but often requires false state to be within an upper bound range. An appointed-time observer-based approach in \citep{RCOBA} avoids the network connectivity restriction, but it cannot handle Byzantine attacks. Some strategies \citep{RCSCIAC,FDIARD} require an additional virtual communication network, incurring extra network construction costs. 3) The third class involves assessing the confidence of each neighbor using historical data. RoboTrust in \citep{TMMAC} discards distrusted agents if they deviate from observers. A Q-learning-based consensus algorithm \citep{RLMA} evaluates neighbors based on historical confidences. A stochastic detection compensation based consensus algorithm is developed in \cite{TRAC}. Two types of trust evaluation metrics with different attack indices and time scales are designed in \citep{DEBRSC}. However, most research in this class lacks a straightforward discussion on the robustness and connectivity of communication graphs.

Additionally, many studies have made improvements to tackle resilient leader-follower tracking problems. Misbehaving agents can be regarded as "leaders" attempting to divert normal followers' states to harmful values \citep{RLFC}. To ensure the presence of healthy leaders, the idea of using multiple leaders has been proposed in \citep{multileader}. Unbounded attacks in multi-leader networks are considered in \citep{ROFCHMSA}, but the derivative of the attack is bounded. These techniques still require a communication graph with high connectivity, limiting their effectiveness against HDMA.
 
Motivated by the above discussions, this article addresses a distributed resilient secondary control problem for MGs, formulated as a resilient leader-follower tracking problem. Compared to existing research, the misbehaving events considered here are more challenging, with HDMA potentially present in communication networks with lower connectivity and robustness. These events can impact various components of both followers and leaders, including controllers, actuators, and communication channels. Furthermore, the misbehaving states may have no upper bounds and contain significant channel noise. We propose a discounted attention-type confidence-based consensus (DACC) protocol that employs an exponential confidence function as the weight of the digraph, evaluating the credibility of neighbors' states based on accumulated historical confidence. The main contributions of this article are as follows:
%The convergence process of pur protocol is close to the flocking in Cucker-Smale model\cite{CSFRL}\citep{FCSGD}. 

\begin{enumerate}
\item We develop a discrete-time non-linear distributed protocol to address the resilient secondary control problem against HDMA, focusing on frequency synchronization and active power sharing. The protocol eliminates the need for any additional information beyond the states of neighbors.
%in the case where the lower connectivity condition of the communication graph is satisfied than regular MSR-type algorithms. 

\item We propose a novel one-time design method for the control parameters of DACC based on the expected resilient performance, the connectivity of the graph, and the operating ranges of the system. 
%Our method is resilient to unknown and no upper bounded misbehavior with high-frequency noises generated by both misbehaving leaders and misbehaving followers. 

\item We discuss both the necessary and sufficient conditions for achieving Leader-follower Resilient Uniformly Ultimately Bounded control using the designed parameters. To the best of our knowledge, this article is the first to prove the resilience and convergence of systems with normalized exponential weights from the perspective of graph structure. The effectiveness of DACC is validated by applying it to a modified IEEE 33-bus MG with 17 DESs.
\end{enumerate}

This article is structured as follows. Section II introduces the preliminaries and formulates the research problem. Our consensus protocol and parameter design method are proposed in Section III. Section IV discusses the stability and properties of the proposed algorithm. Section V presents simulation examples to illustrate the design. Section VI concludes this article.

\section{Preliminaries and Problem Formulation}
In this section, some preliminaries on the graph theory and resilience concepts are given, and then the resilient secondary control problem is formulated.
\vspace{-8pt}
\subsection{Notation}
Throughout the paper $\mathbb{R}$, $\mathbb{R}_+$, and $\mathbb{R}_{\geq 0}$ denote the sets of real, positive real, and nonnegative real numbers, respectively; The notations $\mathbb{N}$ and $\mathbb{N}_0$ denote the sets of natural numbers including and excluding zero, respectively. For a scalar $x$, $\vert x \vert$ denotes the absolute value, and for a set $\mathcal{S}$, $ \vert \mathcal{S} \vert$ stands for the cardinality. For two sets $\mathcal{S}_1$ and $\mathcal{S}_2$, the reduction of $\mathcal{S}_1$ by $\mathcal{S}_2$ is denoted by $\mathcal{S}_1 \backslash \mathcal{S}_2$; and for three sets $\mathcal{S}_1$, $\mathcal{S}_2$, and $\mathcal{S}_3$, the reduction of $\mathcal{S}_1\backslash \mathcal{S}_2$ by $\mathcal{S}_3$ is denoted by $\mathcal{S}_1 \backslash \mathcal{S}_2 \backslash \mathcal{S}_3$. For a matrix $ M =[m_{ij}]_{m \times n}$, $\Vert M \Vert$ denotes the infinite norm of $M$, satisfying $\Vert M \Vert=\max_{1\leq i \leq m}\sum_{j=1}^n \vert m_{ij}\vert $. Formally, the standard softmax function $\operatorname{softmax}: \mathbb{R}^K \rightarrow(0,1)^K$, where $K \geq 1$, takes a vector $\mathbf{z}=\left(z_1, \ldots, z_K\right) \in \mathbb{R}^K$ and computes each component of vector $\operatorname{softmax}(\mathbf{z}) \in(0,1)^K$ with
$$
\operatorname{softmax}_i(\mathbf{z})=\frac{\exp(z_i)}{\sum_{j=1}^K \exp(z_j)}
$$
\subsection{Graph theory}
To model an interaction graph among $N$ DESs in a MG, we adopt a digraph (directed graph) $\mathcal{G}=\{\mathcal{V},\ \mathcal{E}\}$ with a finite set of agents $\mathcal{V}=\{1,\ 2, \ \cdots, \ N\}$, and a set of edges $\mathcal{E}\subseteq \mathcal{V} \times \mathcal{V}$. If agent $i$ can receive information from agents $j$, there is a directed edge from $j$ to $i$, that is $(j, i) \in \mathcal{E}$. The neighbor set of agent $i$ is defined by $\mathcal{V}_i^{in}=\{j \in \mathcal{V} \vert (j, i) \in \mathcal{E}\}$. Here the cardinality of $\mathcal{V}_i^{in}$ which is the number of agent $i$'s neighbors is called {\it in-degree} $d_i$. The maximum in-degree of all the agents in $\mathcal{V}$ is denoted as $d_{max}$. An agent $i$ is said to be reachable from an agent $j$ if there exists a directed path from $j$ to $i$. The directed distance from $j$ to $i$, denoted by $\text{dist}(j,i)$, is defined to be the number of edges of the shortest directed paths from $j$ to $i$. Moreover, a nonempty set $\mathcal{S} \subseteq \mathcal{V}$ is {\textit {$r$-reachable}} if $\exists i \in \mathcal{S}$ s.t. $\vert \mathcal{V}_i^{in} \backslash \mathcal{S} \vert \geq r$. 

In this paper, we consider a leader-follower graph where the leaders propagate a desired reference signal for the followers. The set of leaders is defined as $\mathcal{L}$ and the set of followers is defined as $\mathcal{F}$ where $\mathcal{F} \cup \mathcal{L}=\mathcal{V}$, and $\mathcal{F}\cap \mathcal{L}=\emptyset$. 
To model a resilient leader-follower problem, 
we introduce the concepts of graph properties: the depth \citep{CSFRL}, distance-based partition, and $r$-robustness \citep{REZAEE2021109384} for the leader-follower graph $\mathcal{G}$.

\begin{definition}\label{defi_depth} (the depth for a leader-follower graph)
For a digraph $\mathcal{G}$ existing at least one leader that has a directed path to each follower, the depth of $\mathcal{G}$ is the largest distance from a leader in $\mathcal L$ to a follower in $\mathcal{F}$, that is,
\begin{equation}\nonumber
h=\max \{\text{dist} (i, j), i\in \mathcal{L} , j\in \mathcal{F} \}
\end{equation}
\end{definition}

\noindent The distance-based partition is discussed next. The followers can be partitioned according to the distance to the leaders: 
 \begin{equation}\nonumber
\mathcal{V}^{(n)}=\{ j \in \mathcal{F} | \ \min_{i \in \mathcal{L}} \{ \text{dist} (i, j)\} =n\}, \ n=1,2,\cdots, h.
\end{equation} 
\noindent Note that $\mathcal{V}^{(n)}\cap \mathcal{V}^{(n')}=\emptyset$ if $n\neq n'$ for $n,n'\in \{1,2,\ldots,h\}$, and $\cup_{n=1,2,\ldots,h} \mathcal{V}^{(n)}=\mathcal F$.

\begin{definition}($r$-robustness for a leader-follower graph)\label{defi_robust} For a directed graph $\mathcal{G}=\{\mathcal{V},\ \mathcal{E}\}$ with leaders in $\mathcal{L}$, $\mathcal G$ is a $r$-robust leader–follower graph, if $\vert \mathcal{V}^{(1)} \vert \geq r$ and any nonempty set $\mathcal{S} \subseteq \mathcal{V} \backslash \mathcal{L} \backslash \mathcal{V}^{(1)}$ is r-reachable.
\end{definition}

% Depth is related to the leaders' position on the graph, it evaluates the digraph characteristics along with the connectivity from different perspectives. 
\vspace{-8pt}
\subsection{Modelling of MG system} 
In an islanded AC MG consists of several heterogeneous DESs, the set of which are denoted as $\mathcal F$. The control systems of DESs are realized through a hierarchical control structure, consisting of primary, secondary, and tertiary controls. Primary control operates on a fast timescale and maintains the basic frequency stability of the DES. In this level, the frequency is determined by the droop control:
\begin{equation} \label{eq:droop}
\omega_i =\theta_i-m_iP_i,
\end{equation}
where $\omega_i$ and $\theta_i$ are the frequency and the primary control reference of DES $i$. $m_i$ is the droop coefficient and $P_i$ is the active power. Secondary control is deployed to compensate for frequency deviations and achieve active power sharing during fault conditions \citep{DCSC}. The objectives of the secondary control are described as follows:
\begin{itemize}[leftmargin=1em, itemindent=0em]
 \item Synchronization of frequencies to the reference frequency $\omega^l$ received from the tertiary control, 
 \begin{equation} \label{eq:sync_obj}
 \lim\limits_{k \to \infty} (\omega_i(k)-\omega^l)\rightarrow 0,\quad i \in \mathcal{F},
 \end{equation}
 where $k \in \mathbb{N}$ is the secondary control time step.
 \item Proportional active power sharing for maintaining the active power limit of individual DESs, 
 \begin{equation}\label{eq:power_obj}
 \lim\limits_{k \to \infty} (m_i P_i(k)-m_j P_j(k))\rightarrow 0,\quad i,j \in \mathcal{F},
 \end{equation}
 which is equivalent to
 \begin{equation}
 \lim\limits_{k \to \infty} (m_i P_i(k)-\frac{\sum_{j\in\mathcal{F}} P_j(k)}{\sum_{j\in\mathcal{F}} m_j^{-1}})\rightarrow 0,\quad i \in \mathcal{F},
 \end{equation}
\end{itemize}

The total active power can be described by $\sum_{i\in\mathcal{F}} P_i(k)= P_{load}+P_{loss}$, where $P_{load}$ and $P_{loss}$ are the total load demand and power loss of the MG, respectively. Based on data from energy information administration \citep{DSCFAPS}, the power loss can be estimated by approximately $6\%$ of $P_{load}$. Thus, the reference active power can be estimated as $P^l=1.06 P_{load} /\sum_{i\in\mathcal{F}} m_i^{-1}$, which is also provided by the tertiary controller. 

These objectives can be achieved by setting $\theta_i$ for primary control. Utilizing feedback linearization, we have 
\begin{equation}\label{dynamic}
\begin{split}
&\theta_i(k+1)-\theta_i(k)=u_{i}(k) \quad k \in \mathbb{N},\ i \in \mathcal{F}
\end{split} 
\end{equation} 
where $u_i$ is the secondary control input to be designed. 
%Due to geographical and communication distance limitations, the tertiary controller cannot send its reference values to all DESs. 
With a communication network represented by a leader-follower digraph $\mathcal{G}=\{ \mathcal{V},\mathcal{E} \}$, the secondary control input in \eqref{dynamic} can be achieved in a distributed manner. To ensure the security of reference values, the tertiary controller sends the same reference messages $\omega^l$ and $P^l$ to the DESs in $\mathcal{V}^{(1)}$ via multiple channels. Each channel is regarded as a leader in $\mathcal{L}$. Besides, to eliminate the estimation errors of the total active power in the tertiary controller, all DESs in $\mathcal{V}^{(1)}$ are interconnected to form a fully connected sub-graph, which means $(j,i) \in \mathcal{E}$ for all $j,i \in \mathcal{V}^{(1)}$. Since high reliability multi-channel communication has a high cost, other communication links, which do not involve the tertiary controller, only adopt single-channel communication. Then the control inputs are designed as:
\begin{equation} \label{foll_al}
\begin{split}
u_{i}=&-\sum_{j \in \mathcal{F}}
a_{ij}(k)(\omega_i(k)-\omega_j(k)+m_iP_i(k)-m_jP_j(k))\\
&-\sum_{j \in \mathcal{L}} a_{ij}(k)(\omega_i(k)-\omega_j^l+ m_iP_i(k)-P_j^l),
\end{split}
\end{equation} 
where $\omega_j^l$ and $P_j^l$ are the reference values received from the $j$th channel. Note that for any intact leader $j \in \mathcal{L}$, we have $\omega_j^l=\omega^l$ and $P_j^l=P^l$. The weight $a_{ij}(k) =0$ if $(j,i)\notin \mathcal{E}$, $a_{ij}(k)>0$ if $(j, i)\in \mathcal{E}$, and $\sum_{j \in \mathcal{V}} a_{ij}(k)=1$. 
The design of the weight $a_{ij}$ is contingent on the values of $\theta_i(k)$ and $\theta_j(k)$ in our protocol, which will be provided in Section \ref{sectionIIIA}.
\begin{remark}
\begin{figure}[!t]
\centering
\subfigure[A modified IEEE33-bus electrical network.]
{\includegraphics[width=2.8in]{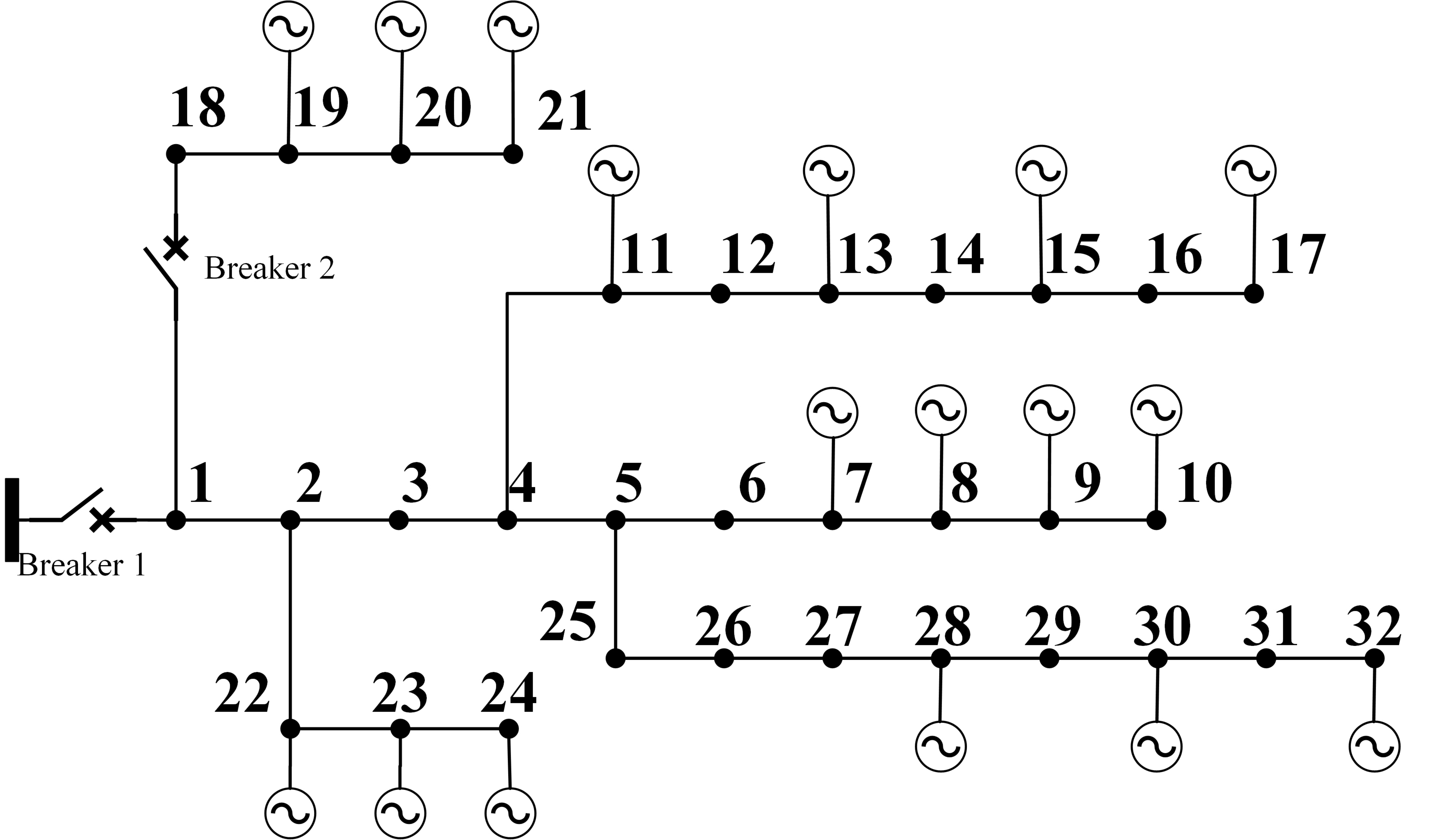}\label{fig:electrical}}
\subfigure[The communication network.]
{\includegraphics[width=2.8in]{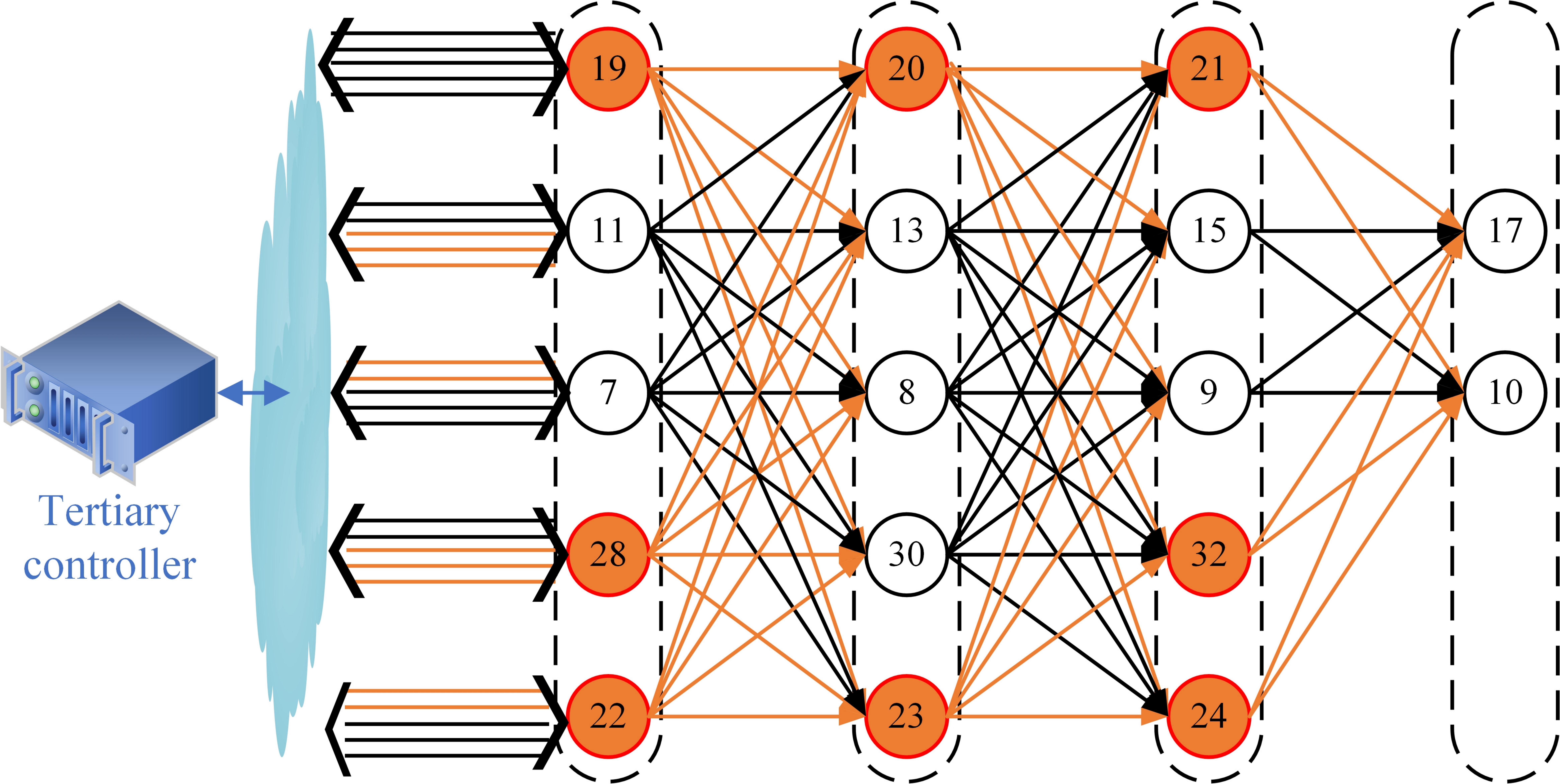}\label{fig:communication}} 
\caption{An example MG system.}\label{fig:MG}
\end{figure}
As shown in Fig.~\ref{fig:MG}, a modified IEEE33-bus electrical network and its communication network are taken as an example. According to Definition~\ref{defi_depth}, the depth of $\mathcal{G}$ is $h=4$ and all DESs can be partitioned into $\mathcal{V}^{(1)}=\{19,11,7,28,22 \}$, $\mathcal{V}^{(2)}=\{30,13,8,30,23 \}$, $\mathcal{V}^{(3)}=\{31,15,9,32,24\}$ and $\mathcal{V}^{(4)}=\{17,10\}$. Each DES in $\mathcal V^{(1)}$ has $5$ neighbors from $\mathcal{L}$ and $4$ neighbors from $\mathcal V^{(1)}$; Each DES in $\mathcal{V}^{(n)}$ has 5 neighbors from $\mathcal{V}^{(n-1)}$ for $n=2,3,4$. According to Definition~\ref{defi_robust}, between any two nonempty disjoint sets, one of them has a agent with at least 5 incoming links from its outside and thus, the graph is a $5$-robust network. The minimum degree for a $r$-robust graph is $r$. 
\end{remark}

\vspace{-6pt}
\subsection{Modeling of misbehaving agents} 
%{\color{cyan}No need to mention the background here, put it in the introduction}
This subsection models the misbehaving agents under reasonable assumptions studied in this study. Two sets of misbehaving agents are considered:
\begin{itemize}[leftmargin=1em, itemindent=0em]
 \item The first set includes DESs with malfunctions in the controller or actuator, represented as $\mathcal{V}_{M1} \subseteq \mathcal{F}$. Their points of common coupling with the MG are cut off by relay protection facilities, and the output power is reduced to 0.
 \item The second set includes misbehaving agents that transmit misbehaving states to others, represented as $\mathcal{V}_{M2}\subseteq \mathcal{V}$. Both followers and leaders may be in this class and for the followers, the internal control loops of the DESs are intact and still undertake power supply tasks of the MG. 
\end{itemize}

Upon the above misbehaving events, the agents can be reclassified into three types of agents: normal leaders, normal followers, and misbehaving agents. The set of $N_L$ normal leaders, $N_F$ normal followers and $N_M$ misbehaving agents are denoted by $ \mathcal{V}_L \subseteq \mathcal{L}, \mathcal{V}_F \subseteq \mathcal{F}$ and $ \mathcal{V}_M =\mathcal{V}_{M1} \cup \mathcal{V}_{M2}$, respectively. Here, $\mathcal{V}_L\cap \mathcal{V}_F\cap \mathcal{V}_M=\emptyset$ and $\mathcal{V}_L\cup \mathcal{V}_F\cup \mathcal{V}_M=\mathcal{V}$. 

\subsubsection{Assumption on topology}Cascading failures tend to occur on DESs, and the misbehaving agents can merge within specific segments of the network. We define that a leader-follower network is with {\it high-density misbehaving agents} if there exists an agent $i \in \mathcal{V}_F$ such that it has more misbehaving neighbors than normal neighbors, i.e., $\vert \mathcal{V}_i^{in} \cap \mathcal{V}_M\vert \geq \vert \mathcal{V}_i^{in} \cap \mathcal{V}_F\vert$. 

\begin{definition}{($f$-local misbehaving)}
A leader-follower network is $f$-local misbehaving if the number of misbehaving agents in the neighborhood of each normal follower is bounded by $f$, i.e., $\forall i \in \mathcal{V}_F, \vert\mathcal{V}_i^{in} \cap \mathcal{V}_M \vert \leq f$.
\end{definition}

%\begin{assumption}\label{as1}
%The leader-follower network is $f$-local misbehaving.
%\end{assumption}
\begin{assumption}\label{as:topology}
In the $f$-local misbehaving leader-follower network, the corresponding graph satisfies that
\begin{itemize}
\item for each agent $i \in \mathcal{V}^{(1)}$, there are at least $g+f$ neighbors from $\mathcal{L}$, i.e., $\vert \mathcal{V}_i^{in}\cap \mathcal{L} \vert \geq g+f$;
\item for each agent $i \in \mathcal{V}^{(n)}\ (n \geq 2)$, there are at least $g+f$ neighbors from $\mathcal{V}^{(n-1)}$, i.e., $\vert \mathcal{V}_i^{in}\cap \mathcal{V}^{(n-1)} \vert \geq g+f$.
\end{itemize}
\end{assumption}
\begin{remark}\label{remark:gf}
Assumption~\ref{as:topology} implies that the leader-follower graph $\mathcal G$ is $(g+f)$-robust. To see this, note that by Definition \ref{defi_robust}, $\mathcal G$ is $(g+f)$-robust if: 1) $\vert \mathcal{V}^{(1)} \vert \geq g+f$, which can be directly observed from Assumption~\ref{as:topology}; 2) Any nonempty set $\mathcal{S} \subseteq \mathcal{F} \backslash \mathcal{V}^{(1)}$ is $(g+f)$-reachable, which is equivalent to the existence of $i\in\mathcal{S}$ such that $|\mathcal{V}_i^{in}\backslash \mathcal{S}|\ge g+f$. To prove the second condition, let $n^* = \min_n \{ n \vert \mathcal{V}^{(n)} \cap \mathcal{S} \neq \emptyset , n\in [2,h]\}$, which implies $\mathcal{V}^{(n^*-1)} \cap \mathcal{S} = \emptyset$.
Therefore, for any $i \in \mathcal{V}^{(n^*)} \cap \mathcal{S}$, it holds that $ \mathcal{V}_i^{in} \cap \mathcal{V}^{(n^*-1)} \subseteq \mathcal{V}_i^{in} \backslash \mathcal{S}$, which, together with $\vert \mathcal{V}_i^{in} \cap \mathcal{V}^{(n^*-1)} \vert \geq g+f$ from Assumption~\ref{as:topology}, implies $\vert \mathcal{V}_i^{in} \backslash \mathcal{S} \vert \geq \vert \mathcal{V}_i^{in} \cap \mathcal{V}^{(n^*-1)}\vert \geq g+f$. Concluding all the above, $\mathcal G$ is $(g+f)$-robust. 
For most existing work\cite{DPMSR,RLFC,RRQC,RCTEBC}, a $(2f+1)$-robust $\mathcal G$ is necessary. Compared to them, Assumption \ref{as:topology} implies the possibility of relaxing the connectivity requirement in the context of HDMA.
\end{remark}
\subsubsection{Assumption on misbehaving states}To resist the above HDMA, we need to make additional assumptions on the misbehaving state range based on real-world scenarios. For example, in an AC three-phase systems with a nominal angular frequency of $100\pi\ rad/s$, according to Chinese national standards, the allowable frequency deviations of power supply are: 1) $\pm 0.4 \%$ for power grid capacity above 3000MW; 2) $\pm 1 \%$ for power grid capacity below 3000MW; 3) $\pm 2 \%$ under irregular operations. 

For the convenience of analyse, define the state ``error" of normal followers as $e_i(k)=\theta_i(k)-\omega^l-P^{lm}$, for $i \in \mathcal{V}_F$, and the state ``error'' of misbehaving agents as $e_i^m(k)=\theta_i(k)-\omega^l-P^{lm}$, for $i \in \mathcal{V}_M$, where $P^{lm}$ is the reference power excluding DESs in $\mathcal{V}_{M1}$, i.e., $P^{lm}=(\sum_{i\in \mathcal{F} \backslash \mathcal{V}_{M1}} P_i)/(\sum_{i\in \mathcal{F} \backslash \mathcal{V}_{M1}} m_i^{-1})$. Based on these, define the safety interval as $\mathcal{I}_S= [-c_n,c_n]$ with a bound $c_n \in \mathbb{R}_+$, which is the maximum value of the initial state`` error" of normal agents, i.e., $c_n=\max_{i\in \mathcal{V}_F}{\vert e_i(0)\vert}$. And for the misbehaving state`` error" beyond the allowable range, we define a lower bound as $c_m \in \mathbb{R}_+$, i.e., $c_m=\min_{i\in \mathcal{V}_M}{\vert e_i^m(k)\vert}$. In this example case, the bound $c_n$ should be smaller than $100\pi \cdot 1 \% \ rad/s$ or $100\pi \cdot 0.4 \% \ rad/s$, and $c_m$ should be larger than $100\pi \cdot 2 \% \ rad/s$. 

However, we should not directly assume all misbehaving state `` errors" outside $(-c_m, c_m)$, because attackers tend to apply false channel noises to interfere with system operation and disguise misbehaving states as normal states. Therefore, we need to conduct the following analysis. The square of the $i$th misbehaving agent's ``error" sequence $(e_i^m(k))^2,\cdots,(e_i^m(k+K_n-1))^2$ is decomposed into a pile of components with different frequencies and amplitudes by Discrete Fourier Transformation,
 \begin{equation}
(e_i^m(\kappa))^2=\chi_i^0 +\sum_{\omega=\omega_0}^{(K_n-1) \omega_0}\chi_i^\omega \exp(\mathrm{j} \omega t_s \kappa), \kappa \in [k,k+K_n)
 \label{DFT}
 \end{equation} 
where $\mathrm{j}$ is the imaginary unit; $K_n$ is the sampling points of window or component number; $t_s$ is the sampling interval of the system~\eqref{dynamic}; $\omega_0=2\pi/(K_n t_s)$ is the fundamental angular frequency; $\chi_{i}^\omega$ is the amplitude corresponding to the angular frequency $\omega$. To deal with noises, we can deploy the following first-order low-pass digital filter,
\begin{equation}\label{eq:filter}
y(k)= \eta (e_{i}^m(k))^2 +(1-\eta) y(k-1), 
\end{equation}
where $\eta \in (0,1)$ is a filter parameter; $(e_{i}^m(k))^2$ is the input and $y$ is the output mitigating the high-frequency noise. The inertia time constant of the filter in the frequency domain is given by $(\eta^{-1}-1)t_s$, and the maximum amplitude with angular frequency $\omega$ of the filter is $[1+((\eta^{-1}-1)t_s \omega)^2]^{-\frac{1}{2}}$. For each Fourier component of the input corresponding to $\omega \in [\omega_0, (K_n-1)\omega_0]$ in \eqref{DFT}, after passing through the filter, the maximum amplitude of this Fourier component signal is $\chi_{i}^\omega[1+((\eta^{-1}-1)t_s \omega)^2]^{-\frac{1}{2}}$. Recombining all Fourier components, the minimum amplitude of the filter's output is $\chi_{i}^0-\sum_{\omega=\omega_0}^{(K_n-1) \omega_0}\chi_{i}^\omega[1+((\eta^{-1}-1)t_s \omega)^2]^{-\frac{1}{2}}$.

%Our study considers misbehaving states beyond the allowable range. We require no upper bound of $e_{i}^m$, but a predetermined lower bound of $y$. The bound is denoted by $c_m \in \mathbb{R}_+$, and should satisfy $c_m>100\pi \cdot2 \% \ rad/s$ in the example. 
Based on the above discussion, we can assume a predetermined lower bound of $y$.
\begin{assumption} \label{as:misbehaving}
For any $c_n \in [0,\infty)$, there exists a bound $c_m>c_n$ and a filter parameter $\eta \in (0,1)$, such that the Fourier components of the misbehaving state ``errors'' satisfy 
$$\chi_{i}^0-\sum_{\omega=\omega_0}^{(K_n-1) \omega_0}\chi_{i}^\omega[1+((\eta^{-1}-1)t_s \omega)^2]^{-\frac{1}{2}} \geq c_m^2.$$
\end{assumption}
\begin{remark}
This assumption means that although misbehaving agents may also produce normal states at some time, but they still need to be isolated. An example of the states of a misbehaving agent can be seen in Fig.~\ref{fig:noise}. In addition, Assumption~\ref{as:topology} does not imply that misbehaving agents can be identified based on the safety interval. Note that normal followers do not know whether the received reference values are intact and consequently, cannot directly determine whether the states of their neighbors fall within the safe interval. 
\end{remark}

\subsection{Problem formulation}
Considering the above misbehaving agents, the system~(\ref{dynamic}) with (\ref{foll_al}) can be expressed as an ``error" system:
\begin{equation}
 \mathbf{e}(k+1)=A(k) \mathbf{e}(k)+ L_m(k)\mathbf{e}^m(k)- L_a(k) \mathbf{e}(k),
\label{eq:error1}
\end{equation}
where $\mathbf{e}(k)=[e_i(k)]_{N_F \times 1}$ and $ \mathbf{e}^m(k)=[e_i^m(k)]_{N_M \times 1}$ are the ``error" vectors of all agents in $\mathcal{V}_F$ and $\mathcal{V}_M$, respectively. $A(k)=[a_{ij}(k)]_{N_F \times N_F}$, while $L_m(k)=[ a_{ij}(k)]_{N_F \times N_M}$, $j \in \mathcal{V}_M$ and $L_a(k)=\text{diag}[ \sum_{j \in \mathcal{V}_{i}^{in} \cap \mathcal{V}_M} a_{ij}(k) ]_{N_F \times N_F}$. Obviously, the objectives of secondary control are transformed into $ \lim\nolimits_{k \to \infty}\Vert\mathbf{e}(k) \Vert\rightarrow 0$ in the system \eqref{eq:error1}.

Our research problem is described based on the concept of bounded consensus for a leader-follower network.

\begin{definition}(LRUUB)\label{RAC}
Consider a leader-follower network with $\mathcal{G}$.
For any agents in $\mathcal V_F$ with initial states in the safety interval $\mathcal{I}_S$, they are said to achieve leader-follower resilient uniformly ultimately bounded (LRUUB) with respect to the reference values, if the following conditions hold:

$\bullet$ Safety condition: $e_i(k) \in \mathcal{I}_S$ for all $i \in \mathcal{V}_F$ and $k \in \mathbb{N}$.

$\bullet$ Cooperative uniformly ultimately bounded (UUB) condition: for any given $\epsilon \in (0,c_n)$, $\exists k_f\in \mathbb{N}$, such that $e_i (k) \in \mathcal{I}_U$ for all $k > k_f, i \in \mathcal{V}_F$, where $\mathcal{I}_U= [-\epsilon,\epsilon]$. 
\end{definition}

\noindent \emph{Problem}: Under Assumption~\ref{as:topology} and \ref{as:misbehaving}, given a directed leader-follower graph $\mathcal{G}=(\mathcal{V},\ \mathcal{E})$, propose a distributed consensus protocol by designing $a_{ij}(k)$ in (\ref{foll_al}), such that the system ~\eqref{eq:error1} can achieve Leader-follower Resilient Uniformly Ultimately Bounded (LRUUB) for the network with HDMA. 

%To cope with misbehaving agents, leaders should maintain their control ability or avoid losing it completely. This implies that the system should have multiple leaders, unless the only leader is completely secure. Moreover,
\section{Resilient Protocol Design}
In this section, a discounted attention-type confidence-base consensus (DACC) protocol is designed to address our problem. It is considered in two aspects, the design of the weights $a_{ij}$, and the design of the parameters involved in designing the weights.
\subsection{Weight design} \label{sectionIIIA}
In the DACC protocol, we apply softmax function to calculate the confidence weights in \eqref{eq:aij}. The softmax function is a classic classification function used in attention mechanism \cite{NMTJL}, which can help each follower focus on normal states and ignore misbehaving states.  For any agent $i \in \mathcal{V}_F$, the weights $a_{ij}(k)$ are designed as 
\begin{equation}\label{eq:aij}
 a_{ij}(k) = \operatorname{softmax}_j (-\sigma \mathbf{r}_i(k)),\ {j \in \mathcal{V}_i^{in}}, 
\end{equation}
where $\sigma\in \mathbb{R}_+$ is a scalar parameter to be designed, and $\mathbf{r}_i(k) \in \mathbb{R}^{\vert \mathcal{V}_i^{in} \vert }$ is the vector of evaluation functions. Each entry of $\mathbf{r}_i(k)$ is denoted by $r_{ij}(k)$, which is an evaluation function of $\theta_{i}(k)-\theta_j(k)$ $j \in \mathcal{V}_i^{in}$, abbreviated as $\theta_{ij}(k)$. The function $r_{ij}(k)$ is defined by
\begin{equation}\nonumber
\begin{aligned}
&r_{ij}(k)=
\left\{
\begin{aligned}
&\theta_{ij}(0)^2 & {k=0},\\
&r_{ij}(k-1)+\eta [\theta_{ij}(k)^2-r_{ij}(k-1)] & {k>0}.\\
\end{aligned} \right.
\end{aligned}
\end{equation}
The evaluation function $r_{ij}$ is the discounted accumulation of the historical state of $\theta_{ij}$, which is a standard first-order low pass filter form as shown in \eqref{eq:filter}.

For the convenience of the foregoing analyse, we defined a confidence function $q_{ij}(k)$:
$$q_{ij}(k)=\exp(-\sigma r_{ij}(k)).$$ And let $q_i^s(k)$ represent the sum of all neighbors' confidence functions: $q_i^s(k)=\sum_{p \in \mathcal{V}_i^{in}} q_{ip}(k)$. Then we can rewrite \eqref{eq:aij} as
\begin{equation}\nonumber
 a_{ij}(k) = \frac{q_{ij}(k)}{q_i^s(k)},\ {j \in \mathcal{V}_i^{in}}.
\end{equation}

The main feature of this algorithm lies in its simplicity. There is no need for information more than that of each agent’s neighbors. 
%Each normal follower reduces the confidence of its potential misbehaving neighbors only referring to the information received from them. 
The parameters $\{\sigma, \ \eta \}$ jointly influence the convergence and the resilience of DACC. As such, we introduce a one-time parameter design method (Algorithm~1) in Section~\ref{sec:parameter}.

\subsection{Parameter design} \label{sec:parameter}
Algorithm~\ref{alg} is provided to design the parameters $\{\sigma,\eta\}$, which can guarantee LRUUB based on Proposition~\ref{prop:sigma} and Theorem~\ref{theorem2} presented in Section \ref{sec:main}. We first define two following functions 
\begin{equation}\nonumber
\begin{aligned}
F_t(x)&=\ln(\frac{hf(c_m-x)}{x})+(h-1)\ln(\overline{q_i^s})-h \ln(g),\\
F_m(x)&=c_m^2-2c_m x-(h-1)x^2, 
\end{aligned}
\end{equation}
where $\overline{q_i^s} =d_{max}-f +f \exp(-\underline{\sigma}_p (c_m-c_n)^2)
$, and $\underline{\sigma}_p =\max \{ \frac{1}{2(c_m-c_n)^2} , \frac{\sqrt{6}+3}{2(c_m-\epsilon)^2} \}$. Then three intervals are defined
\begin{equation} \label{eq:sp}
 \mathcal{I}_\sigma^p=[\underline{\sigma}_p , +\infty),
\end{equation}
\begin{equation} \label{eq:sf}
\mathcal{I}_\sigma^f=\{\sigma \in \mathbb{R}_+ | F_m(c_n) \sigma \geq F_t(c_n)\ \text{and}\ F_m(\epsilon) \sigma \geq F_t(\epsilon)\},
\end{equation}
\begin{equation} \label{eq:ef}
\begin{aligned}
\mathcal{I}_\eta^f=&[\max \{
\frac{(1+\sqrt{h})^2}{2\sigma h c_m^2}, \frac{1}{2\sigma (c_m-c_n)^2},\\ &\frac{\sigma^{-1}F_t(\epsilon)-F_m(c_n)}{(c_n-\epsilon)[2c_m+(h-1)(c_n+\epsilon)]}\},1). 
\end{aligned}
\end{equation}

\begin{algorithm}[htp]
 \caption{Parameter design.}\label{alg}
 \SetAlgoLined
 \SetKwInOut{Input}{Input}
 \SetKwInOut{Output}{Output}
 \DontPrintSemicolon
 \Input{The parameters of $\mathcal{G}$ ($d_{max}, g,f$ and $h$);The parameters of intervals ($\epsilon$, $c_n$ and $c_m$).}
 \Output{The parameters of DACC protocol$\{\sigma,\eta\}$.}
 Calculate $\mathcal{I}_\sigma^p$ in \eqref{eq:sp}, and $\mathcal{I}_\sigma^f$ in \eqref{eq:sf}.
 
 \eIf{$\mathcal{I}_\sigma^f \cap \mathcal{I}_\sigma^p\neq \emptyset$}
 {Select a $\sigma \in \mathcal{I}_\sigma^f \cap \mathcal{I}_\sigma^p$.\\
 Select an $\eta \in \mathcal{I}_\eta^f$ in \eqref{eq:ef}.
 }
 {
 Re-select input parameters $\{g,f,h,c_m,c_n\}$:\\
 Increasing $g,c_m$, or decreasing $f,h,c_n$.
 }
\end{algorithm}

\begin{remark}
If the system~\eqref{eq:error1} meets Assumption~\ref{as:topology} and can achieve LRUUB, then for $\sigma \in \mathcal{I}_\sigma^p$, we have $|e^m-e|\geq c_m-c_n$. Therefore, $\max_{i\in \mathcal{V}_F,j \in \mathcal{V}_M} q_{ij}=\exp(-\sigma_p(c_m-c_n)^2)$. Under Assumption~\ref{as:topology}, the worst damage case for the normal agent occurs when it has $f$ misbehaving agents in its neighborhood, and in that scenario, the maximum of $q_i^s$ is $\overline{q_i^s}$.
\end{remark}

\begin{remark}
% Note that with provided parameters of $\{d_{\max}, g, f, h, \epsilon,c_n, c_m\}$, it is not guaranteed that $ \mathcal{I}_\sigma^p \cap \mathcal{I}_\sigma^f =\emptyset$.
Note that it is not guaranteed that $ \mathcal{I}_\sigma^p \cap \mathcal{I}_\sigma^f \neq \emptyset$ for any provided input parameters. If $\mathcal{I}_\sigma^p \cap \mathcal{I}_\sigma^f =\emptyset$, we have the flexibility to re-select either the parameters of the graph $\mathcal G$ $\{g, f, h\}$ by justifying the communication network, or the safety region related bounds $\{c_m,c_n\}$ by adjusting the system's operational interval.
The extension to the condition on the non-emptiness of $\mathcal{I}_\sigma^p \cap \mathcal{I}_\sigma^f$ is a future direction. While not conceptually difficult, the details may be intricate.
 % {\color{red} Not clear, you may want to rewrite or delete} When dealing with a graph characterized by low robustness or high depth, the leader's control over a normal agent $i$ at the far end $i \in \mathcal{V}^{(h)}$ may be impeded by misbehaving agents in $\mathcal{V}_i^{in}$. In addition, given the proximity of the misbehaving state to the leader's state, the algorithm may face challenges distinguishing between the two states. The interval $\mathcal{I}_\sigma^f$ reflects the above limitations directly. Hence, from a real-world perspective, the existence of $\sigma \in \mathcal{I}_\sigma^f \cap \mathcal{I}_\sigma^p$ is likely to hold in many cases (though not all), warranting a separate study. 
\end{remark}
\vspace{-12pt}
\section{Main Results}\label{sec:main}
In this section, we first provide the results of the network without misbehaving agents. Both necessary and sufficient conditions for LRUUB are discussed. Corollaries to relax graph connectivity assumptions are provided. 
\subsection{Convergence without misbehaving agents}
The convergence of the system without misbehaving agents will be proved in advance. The analysis is not trivial as $a_{ij}(k)$ is a time-varying and nonlinear function of $r_{ij}$.
\begin{theorem}\label{thm:1} Suppose that Assumption~\ref{as:topology} hold with $f=0$. For any $c_n \in [0,\infty)$, $\sigma \in \mathbb{R}_+$ and $\eta \in (0,1]$, the system~(\ref{dynamic}) with \eqref{foll_al} and \eqref{eq:aij} is asymptotically stable at the reference values when there is no misbehaving agents in the network.
\end{theorem}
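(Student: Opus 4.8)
The plan is to reduce the closed loop to a linear but time‑varying recursion on the follower error and then run a layered contraction along the distance partition. Since there are no misbehaving agents, $L_m(k)=0$ and $L_a(k)=0$, so \eqref{eq:error1} becomes $\mathbf e(k+1)=A(k)\mathbf e(k)$ with $A(k)=[a_{ij}(k)]_{N_F\times N_F}$ nonnegative and $i$-th row summing to $1-\sum_{j\in\mathcal L\cap\mathcal V_i^{in}}a_{ij}(k)\le 1$; moreover, because \eqref{eq:aij} is a softmax, \emph{every} weight on an edge of $\mathcal G$ is strictly positive, so the sign pattern of $A(k)$ is time‑invariant and equals the follower–follower part of $\mathcal G$. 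The first consequence is that $M(k):=\Vert\mathbf e(k)\Vert$ is nonincreasing, since $|e_i(k+1)|\le\sum_{j}a_{ij}(k)|e_j(k)|\le M(k)$; hence $M(k)\le M(0)$ for all $k$, which already gives the safety condition and Lyapunov stability, with $M(0)\le c_n$ when $\mathbf e(0)\in\mathcal I_S$.

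The second step turns the softmax into a \emph{uniform} lower bound on the weights. Using $M(k)\le M(0)$, the filtered evaluations are bounded by a constant $B$ determined by $M(0)$ and the (constant) references: on follower edges $\theta_{ij}(k)^2=(e_i(k)-e_j(k))^2\le 4M(0)^2$, and on leader edges $\theta_{ij}(k)$ is $e_i(k)$ shifted by a fixed constant; since $r_{ij}$ is the first‑order filter \eqref{eq:filter} of these quantities—a convex combination of nonnegative terms bounded by $B$, started at $\theta_{ij}(0)^2$—it stays in $[0,B]$. Therefore $q_{ij}(k)=\exp(-\sigma r_{ij}(k))\ge\exp(-\sigma B)$ while $q_i^s(k)\le|\mathcal V_i^{in}|\le d_{\max}$, so $a_{ij}(k)\ge\alpha:=\exp(-\sigma B)/d_{\max}>0$ on every edge of $\mathcal G$, uniformly in $k$, with $\alpha\le 1$.

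The third step is the layered contraction over the distance partition $\mathcal V^{(1)},\dots,\mathcal V^{(h)}$, which is well defined under Assumption~\ref{as:topology} with $f=0$ since every follower is then reachable from a leader ($h$ the depth). Each $i\in\mathcal V^{(1)}$ has a leader neighbor, so its follower‑row‑sum is $\le 1-\alpha$ and $|e_i(k+1)|\le(1-\alpha)M(k)$; each $i\in\mathcal V^{(n)}$, $n\ge 2$, has (by the shortest‑path definition of the partition) a neighbor $j_0\in\mathcal V^{(n-1)}$ with $a_{ij_0}(k)\ge\alpha$. I will prove by induction on $n$ that $|e_i(k')|\le(1-\alpha^{n})M(k)$ for every $i\in\mathcal V^{(n)}$ and every $k'\ge k+n$: the base case is the $\mathcal V^{(1)}$ bound together with monotonicity of $M$, and the inductive step bounds the $j_0$‑term by the hypothesis at time $k'-1\ge k+(n-1)$, the remaining terms by $M(k'-1)\le M(k)$, and uses $\sum_j a_{ij}(k'-1)\le 1$ to get $|e_i(k')|\le(1-\alpha^{n})M(k)$. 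Taking $k'=k+h$ gives $M(k+h)\le(1-\alpha^h)M(k)$ (using $\alpha^n\ge\alpha^h$ for $n\le h$), hence $M(mh)\le(1-\alpha^h)^mM(0)\to 0$, and by monotonicity $M(k)\to 0$; combined with Lyapunov stability this is asymptotic stability at the reference values.

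The main obstacle is exactly the step that neutralizes the nonlinearity: a priori the weights $a_{ij}(k)$ could decay to zero (some $r_{ij}(k)$ drifting to $+\infty$), which would break the contraction, so one cannot merely quote a fixed‑weight consensus theorem. The resolution lies in the order of the estimates: the softmax normalization $\sum_j a_{ij}(k)=1$ forces $\Vert\mathbf e(k)\Vert$ to be nonincreasing \emph{before} any weight bound is invoked, that bound caps the filtered evaluations $r_{ij}$, and only then do the weights become uniformly bounded below—breaking the apparent circularity. After that the layered estimate is routine; the only care needed is treating a leader's "state" as its reference value on leader edges and noting that the degenerate cases $M(0)=0$ or $\alpha=1$ give convergence in at most $h$ steps.
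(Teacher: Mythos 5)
Your proof is correct, but it takes a genuinely different route from the one in the paper. The paper proves Theorem~\ref{thm:1} by linearization: it invokes Lemma~\ref{lemmades}, verifies that the Jacobian of $F(k,\mathbf e)=A(k)\mathbf e$ is bounded and Lipschitz on $\mathcal I_S$ uniformly in $k$, and then argues that $J(k,\mathbf 0)$ has induced norm less than one, so exponential stability of the linearized system transfers to the nonlinear one. You instead run a direct nonlinear contraction: first the row-stochasticity of the softmax forces $\Vert\mathbf e(k)\Vert$ to be nonincreasing, which caps the filtered evaluations $r_{ij}$ and hence yields a uniform lower bound $\alpha>0$ on every edge weight (this ordering of estimates is the key step that breaks the apparent circularity and neutralizes the nonlinearity), and then a layered induction over the distance partition gives $\Vert\mathbf e(k+h)\Vert\le(1-\alpha^h)\Vert\mathbf e(k)\Vert$. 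Your argument is essentially the same machinery the paper deploys only later, in Lemma~\ref{xilemma} and the proof of Theorem~\ref{theorem2}, so it is more uniform with the rest of the paper; it also buys more than the paper's proof does, namely an explicit geometric rate and a global guarantee on all of $\mathcal I_S$ (invariance plus convergence), rather than the local exponential stability that Lemma~\ref{lemmades} certifies, and it sidesteps the Jacobian computation entirely (where the paper's claim that every row sum of $J(k,\mathbf 0)$ is strictly below one is delicate for followers with no leader neighbor). What the paper's route buys in exchange is brevity once Lemma~\ref{lemmades} is granted. The only points to keep explicit in a polished write-up are that $g\ge1$ is needed for the distance partition to cover $\mathcal F$, and that the in-neighbor $j_0\in\mathcal V^{(n-1)}$ exists by the shortest-path definition (or directly from Assumption~\ref{as:topology}); both are as you indicated.
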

Without misbehaving agents, the ``error'' system~\eqref{eq:error1} is transformed into
\begin{equation}
 \mathbf{e}(k+1)=A(k) \mathbf{e}(k).
\label{eq:error0}
\end{equation}
By proving that the system~\eqref{eq:error0} is asymptotically stable at the equilibrium point $\mathbf{e} = \mathbf{0}$ ($\mathbf{0}=[0]_{N_F \times 1}$) using Theorem~5.14 in \citep{LTDT}. The proof of Theorem~\ref{thm:1} is in Appendix.
\begin{lemma}\citep{LTDT}\label{lemmades}
Let $ \textbf{e} = \textbf{0}$ be an equilibrium point for the nonlinear autonomous system 
\[\textbf{e}(k + 1) = F(k, \textbf{e}(k))\]
where $F: \mathbb{N}_0 \times \mathcal{I}\rightarrow \mathbb{R}^{N}$ is locally Lipschitz in $\mathbf{e}$, %on $\mathbb{N} \times \mathcal{I}$, 
$\mathcal{I}=\{\textbf{e} \in \mathbb{R}^N| \Vert \textbf{e} \Vert \leq b\}$ for some $b\in\mathbb{R}_{+}$. Suppose that the Jacobian matrix $J(k, \textbf{e})=\frac{ \partial F(k, \textbf{e})}{ \partial \textbf{e}}$ is bounded and Lipschitz on $\mathcal{I}$, uniformly in $k$. 
% Let \[J(k,\textbf{0})=\left[\frac{ \partial F}{ \partial \textbf{e} }(k, \textbf{e})\right]\bigg\vert_{ \textbf{e}=\textbf{0}}.\] 
Then, the origin is an exponentially stable equilibrium point for the nonlinear system if it is an exponentially stable equilibrium point for the linear system $ \textbf{e}(k + 1) = J(k,\textbf{0}) \textbf{e}(k)$.
\end{lemma}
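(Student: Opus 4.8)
The plan is to prove this as the discrete-time version of the Lyapunov indirect (linearization) method, following the architecture of the classical continuous-time result adapted to difference equations: split $F$ into its linear part and a quadratically small remainder, transport a Lyapunov function from the exponentially stable linear system onto the nonlinear one, and show that the remainder is dominated on a sufficiently small ball. First I would write $F(k,\mathbf{e}) = A(k)\mathbf{e} + g(k,\mathbf{e})$ with $A(k) := J(k,\mathbf{0})$ and $g(k,\mathbf{e}) := F(k,\mathbf{e}) - A(k)\mathbf{e}$. Since $\mathbf{0}$ is an equilibrium we have $F(k,\mathbf{0})=\mathbf{0}$, so the fundamental theorem of calculus along the segment $s\mapsto s\mathbf{e}$ gives $g(k,\mathbf{e}) = \int_0^1 [J(k,s\mathbf{e}) - J(k,\mathbf{0})]\mathbf{e}\,ds$. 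The hypothesis that $J$ is Lipschitz in $\mathbf{e}$ uniformly in $k$, say with constant $L$, then yields the quadratic estimate $\Vert g(k,\mathbf{e})\Vert \le \tfrac{L}{2}\Vert\mathbf{e}\Vert^2$ on $\mathcal{I}$; in particular $\Vert g(k,\mathbf{e})\Vert/\Vert\mathbf{e}\Vert \to 0$ uniformly in $k$ as $\Vert\mathbf{e}\Vert\to 0$.

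Next I would invoke the converse Lyapunov construction for (uniformly) exponentially stable discrete linear time-varying systems. Because $\mathbf{e}(k+1)=A(k)\mathbf{e}(k)$ is exponentially stable and $\Vert A(k)\Vert$ is bounded by the bounded-Jacobian hypothesis, its transition matrix satisfies $\Vert\Phi(m,k)\Vert\le \alpha\lambda^{m-k}$ with $\lambda\in(0,1)$ uniformly in $k$, so the series $P(k):=\sum_{m\ge k}\Phi(m,k)^{\top}\Phi(m,k)$ converges to a symmetric matrix with $c_1 I \preceq P(k) \preceq c_2 I$ for constants $0<c_1\le c_2$ independent of $k$. Setting $V(k,\mathbf{e})=\mathbf{e}^{\top} P(k)\mathbf{e}$, a telescoping identity gives the exact decrease $V(k+1,A(k)\mathbf{e})-V(k,\mathbf{e}) = -\Vert\mathbf{e}\Vert^2$ along the linear dynamics, while the bounds on $P$ give $c_1\Vert\mathbf{e}\Vert^2\le V\le c_2\Vert\mathbf{e}\Vert^2$.

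Then I would use $V$ as a candidate for the full system and estimate $\Delta V$ along nonlinear trajectories. Writing the successor as $A(k)\mathbf{e}+g$ and expanding the quadratic form,
\[
V(k+1,A(k)\mathbf{e}+g)-V(k,\mathbf{e}) = -\Vert\mathbf{e}\Vert^2 + 2(A(k)\mathbf{e})^{\top} P(k+1) g + g^{\top} P(k+1) g.
\]
Bounding with $\Vert P(k+1)\Vert\le c_2$, $\Vert A(k)\Vert\le a$, and $\Vert g\Vert\le \tfrac{L}{2}\Vert\mathbf{e}\Vert^2$ produces $\Delta V \le -\Vert\mathbf{e}\Vert^2 + \kappa\Vert\mathbf{e}\Vert^3$ for a uniform constant $\kappa$. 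Restricting to a ball $\Vert\mathbf{e}\Vert\le\rho$ with $\kappa\rho\le\tfrac12$ gives $\Delta V\le -\tfrac12\Vert\mathbf{e}\Vert^2 \le -\tfrac{1}{2c_2}V$, hence $V(k+1)\le(1-\tfrac{1}{2c_2})V(k)$. Geometric decay of $V$, together with $c_1\Vert\mathbf{e}\Vert^2\le V\le c_2\Vert\mathbf{e}\Vert^2$, yields $\Vert\mathbf{e}(k)\Vert\le\sqrt{c_2/c_1}\,\mu^{k}\Vert\mathbf{e}(0)\Vert$ for some $\mu\in(0,1)$ and every initial condition in a small enough ball, small enough that the corresponding sublevel set of $V$ sits inside $\{\Vert\mathbf{e}\Vert\le\rho\}$, which the decrease of $V$ then renders forward invariant. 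This is precisely local exponential stability of the origin for the nonlinear system.

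The main obstacle is the second step: assembling the uniform converse Lyapunov function for the time-varying linear part with all constants ($c_1,c_2,a,\alpha,\lambda$) independent of $k$. This is exactly where the two standing hypotheses are indispensable — uniform boundedness of $J$ guarantees that $\Phi(m,k)$ and the matrix series $P(k)$ are uniformly bounded, while the uniform Lipschitz bound is what demotes the remainder $g$ to a genuinely cubic perturbation so that the negative quadratic Lyapunov term dominates on a small ball. Once these uniform bounds are secured, the splitting in the first step and the $\Delta V$ algebra in the third step are routine.
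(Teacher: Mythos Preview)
The paper does not supply its own proof of this lemma: it is quoted verbatim as a standard result (Theorem~5.14 of the cited reference \texttt{LTDT}) and then invoked as a black box in the Appendix proof of Theorem~\ref{thm:1}. So there is no in-paper argument to compare against.

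Your proposal is the classical discrete-time Lyapunov linearization proof and it is correct. The decomposition $F(k,\mathbf{e})=A(k)\mathbf{e}+g(k,\mathbf{e})$ with the integral remainder bound $\Vert g\Vert\le\tfrac{L}{2}\Vert\mathbf{e}\Vert^2$, the converse-Lyapunov construction $P(k)=\sum_{m\ge k}\Phi(m,k)^{\top}\Phi(m,k)$ (which indeed satisfies $P(k)\succeq I$ from the $m=k$ term and $P(k)\preceq c_2 I$ from the exponential decay of $\Phi$), the telescoping identity $A(k)^{\top}P(k+1)A(k)-P(k)=-I$, and the final $\Delta V\le -\Vert\mathbf{e}\Vert^2+\kappa\Vert\mathbf{e}\Vert^3$ estimate are all standard and correctly assembled. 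Your identification of where the uniform-in-$k$ hypotheses enter is also accurate: boundedness of $J$ is what makes the $P(k)$ series converge uniformly, and the Lipschitz bound on $J$ is what makes the remainder genuinely quadratic so that the cubic perturbation is absorbed on a small ball.
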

\subsection{Necessary condition for LRUUB}
In this subsection, the resilient mechanism of DACC for system~(\ref{eq:error1}) with misbehaving agents is intuitively illustrated by deriving the necessary conditions for LRUUB.
\begin{figure}[!b]
\centering
\includegraphics[width=2.6in]{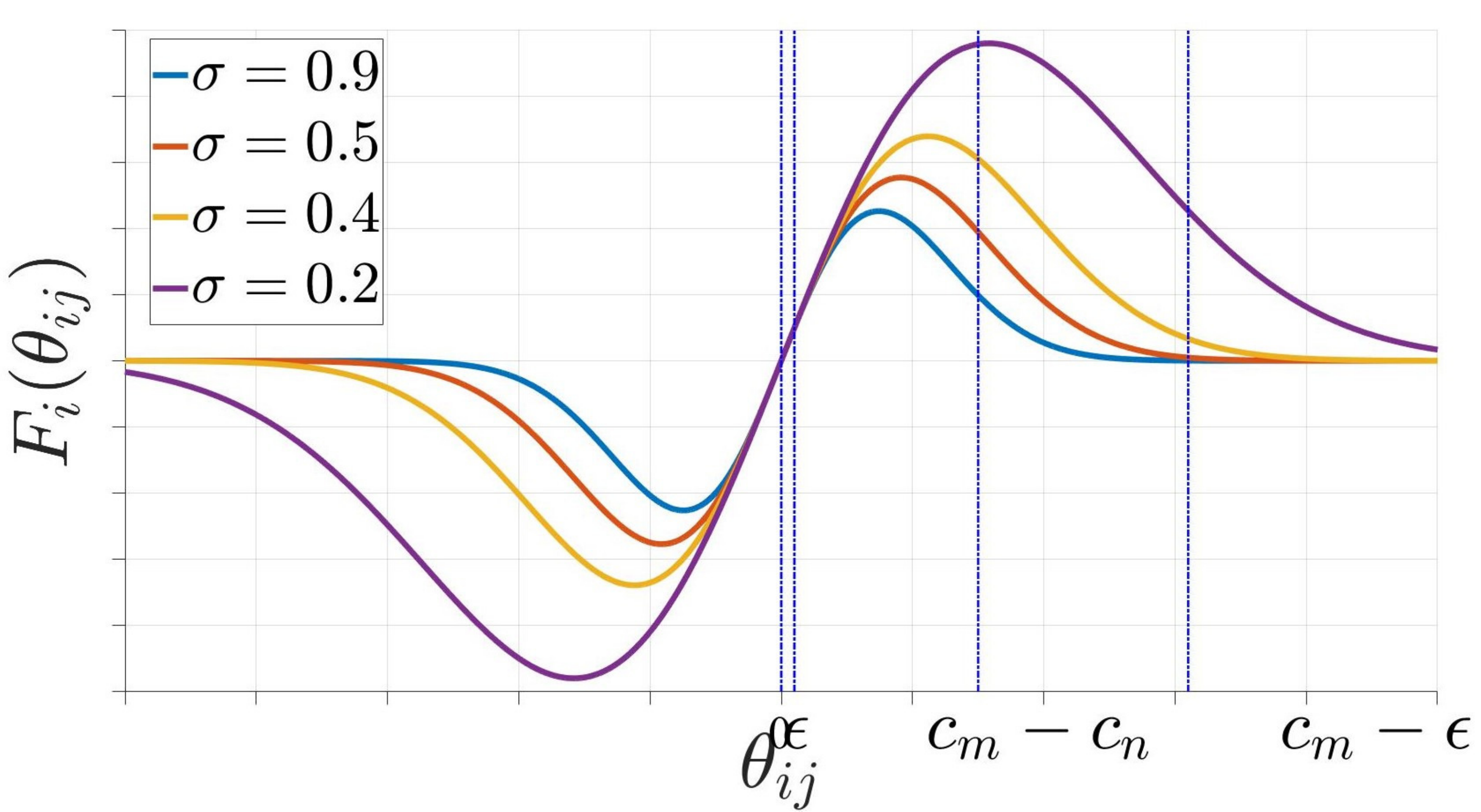}
\caption{Influence function $F_i(\theta_{ij})$.}
\label{fi}
\end{figure}

The resilient mechanism of DACC relies on attenuating the influence of misbehaving states through $a_{ij}\theta_{ij}$ for $i\in \mathcal{V}_F, j \in \mathcal{V}_M$ in the control input $u_i$ defined in \eqref{foll_al} so that the normal agents are not misled by misbehaving agents. We introduce an influence function of $\theta_{ij}$, denoted by 
\begin{equation}\label{eq:fi}
F_i(\theta_{ij})= \exp(-\sigma \theta_{ij}^2) \theta_{ij}.
\end{equation}
Note that $F_i(\theta_{ij})$ is a scaled function of $a_{ij}\theta_{ij}$ when $\eta=1$, i.e., $q_i^sF_i(\theta_{ij})=a_{ij}\theta_{ij}$. The function $F_i$ is a bounded continuous odd function with two poles $((2\sigma)^{-\frac{1}{2}}$, $(2\sigma)^{-\frac{1}{2}} \exp(-\frac{1}{2}))$ and $(-(2\sigma)^{-\frac{1}{2}}, -(2\sigma)^{-\frac{1}{2}} \exp(-\frac{1}{2}))$. As depicted in Fig.~\ref{fi}, if $\vert x_{ij}\vert > (2\sigma)^{-\frac{1}{2}}$, a larger misbehaving neighbor's error can actually decay the influence on the agent.

Further, we say the misbehaving influence term $|a_{ij}\theta_{ij}|$, $i\in \mathcal{V}_F,\ j \in \mathcal{V}_M$ {\it decay rapidly} if $F_i'(|\theta_{ij}|) < 0$, $F_i''(|\theta_{ij}|) > 0$, and $F_i^{'''}(|\theta_{ij}|) < 0$. This implies that a misbehaving neighbor's error can accelerates the decay rate of the influence on the agent.
%This implies that for a larger misbehaving error, $|a_{ij}\theta_{ij}|$, $i\in \mathcal{V}_F,\ j \in \mathcal{V}_M$ in the system ~(\ref{dynamic}) with \eqref{foll_al} and \eqref{eq:aij} becomes smaller.
% Now the necessary conditions for LRUUB is proposed as follows.

\begin{proposition} \label{prop:sigma}
Under Assumption~\ref{as:topology} with $f>0$ and \ref{as:misbehaving}, if the system~(\ref{dynamic}) with \eqref{foll_al} and \eqref{eq:aij} can achieve LRUUB, the parameter $\sigma$ satisfies that:

\begin{itemize}
\item The inequality $\sigma \geq \frac{1}{2(c_m-c_n)^2}$ holds, the misbehaving influence term $|a_{ij}\theta_{ij}|$ for any $i \in \mathcal{V}_F$, $j \in \mathcal{V}_M$ decay;
\item If the inequality $\sigma \geq \frac{\sqrt{6}+3}{2(c_m-\epsilon)^2}$ holds, the misbehaving influence term $|a_{ij}\theta_{ij}|$ for any $i \in \mathcal{V}_F$, $j \in \mathcal{V}_M$ decay rapidly after the system \eqref{dynamic} achieves UUB.
\end{itemize}
\end{proposition}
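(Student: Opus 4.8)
The plan is to derive the two inequalities on $\sigma$ directly from the analytic properties of the influence function $F_i(\theta_{ij})=\exp(-\sigma\theta_{ij}^2)\theta_{ij}$ defined in \eqref{eq:fi}, combined with the a~priori bounds guaranteed by LRUUB. First I would establish the key geometric fact: assuming LRUUB holds, after transient behavior the normal-agent errors $e_i(k)$ lie in the safety interval $\mathcal{I}_S=[-c_n,c_n]$ (eventually in $\mathcal{I}_U=[-\epsilon,\epsilon]$), while by Assumption~\ref{as:misbehaving} and the filtering argument preceding it, the misbehaving errors satisfy $|e_i^m|\ge c_m$. Hence for $i\in\mathcal{V}_F,\ j\in\mathcal{V}_M$ we have $|\theta_{ij}(k)|=|e_i(k)-e_j^m(k)|\ge c_m-c_n$ always, and $|\theta_{ij}(k)|\ge c_m-\epsilon$ once UUB is reached. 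This is the bridge between the hypothesis ``LRUUB holds'' and a concrete lower bound on the argument of $F_i$.

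Next I would compute the derivatives of $g(x):=F_i(x)$ on $x>0$: $g'(x)=(1-2\sigma x^2)e^{-\sigma x^2}$, $g''(x)=2\sigma x(2\sigma x^2-3)e^{-\sigma x^2}$, and $g'''(x)=2\sigma(3-12\sigma x^2+4\sigma^2 x^4)e^{-\sigma x^2}$. The first bullet asks that the misbehaving influence term \emph{decay}, i.e.\ $|a_{ij}\theta_{ij}|$ is on the decreasing branch of the (scaled) $F_i$ curve, which amounts to $g'(|\theta_{ij}|)<0$, i.e.\ $2\sigma|\theta_{ij}|^2>1$. Since $|\theta_{ij}|\ge c_m-c_n$, it suffices that $2\sigma(c_m-c_n)^2>1$, which is exactly $\sigma\ge \frac{1}{2(c_m-c_n)^2}$ (the argument runs: if this bound failed one could exhibit a misbehaving configuration, consistent with the stated bounds, on which the influence term does not decay, contradicting the resilient mechanism implicit in LRUUB). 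For the second bullet, ``decay rapidly'' is defined as $g'<0,\ g''>0,\ g'''<0$ at $|\theta_{ij}|$, i.e.\ $2\sigma x^2>1$, $2\sigma x^2>3$, and $4\sigma^2 x^4-12\sigma x^2+3<0$. The binding constraint is the cubic-in-$(\sigma x^2)$ one: writing $s=\sigma x^2$, $g'''<0 \iff 4s^2-12s+3<0 \iff s<\frac{3+\sqrt{6}}{2}$ --- wait, that gives an \emph{upper} bound on $s$, so I must instead track the \emph{other} root or re-examine the sign; in fact $g'''(x)<0$ holds for $s$ in the interval $\bigl(\frac{3-\sqrt6}{2},\frac{3+\sqrt6}{2}\bigr)$, so the correct reading is that after UUB the relevant argument satisfies $|\theta_{ij}|\ge c_m-\epsilon$ and one needs $\sigma(c_m-\epsilon)^2$ to sit in the window where all three sign conditions hold; the constraint $\sigma\ge\frac{\sqrt6+3}{2(c_m-\epsilon)^2}$ is what the paper extracts, so I would verify that at $x=c_m-\epsilon$ the product $\sigma x^2\ge \frac{\sqrt6+3}{2}$ forces the desired signs --- re-deriving $g'''$ carefully to confirm the sign pattern is the place to be cautious.

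The main obstacle I anticipate is precisely this sign bookkeeping for $g''$ and $g'''$: the thresholds $\frac{3}{2\sigma}$ and the roots $\frac{3\pm\sqrt6}{2\sigma}$ of the quartic must be lined up so that a single lower bound on $\sigma$ simultaneously puts $x^2=(c_m-\epsilon)^2$ past all the relevant thresholds, and one has to be careful that $g'''$ does not change sign again for large $x$ (it does: $g'''$ is negative only on a bounded $x$-window), so the argument must also use an \emph{upper} bound on $|\theta_{ij}|$ or argue that the relevant regime is the one near the UUB bound. The cleanest route is a proof by contrapositive for each bullet: assume the $\sigma$-inequality fails, then construct (or point to, via Assumption~\ref{as:misbehaving}) an admissible misbehaving trajectory with $|\theta_{ij}|$ close to the stated lower bound for which the claimed monotonicity/convexity fails, and show this is incompatible with the error system \eqref{eq:error1} remaining in $\mathcal{I}_S$ (resp. reaching $\mathcal{I}_U$) --- i.e.\ the influence term $|a_{ij}\theta_{ij}|$ would not be attenuated enough to prevent a normal agent from being dragged outside the safety interval. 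The remaining steps --- substituting $|\theta_{ij}|\ge c_m-c_n$ and $|\theta_{ij}|\ge c_m-\epsilon$ into the derivative inequalities and solving for $\sigma$ --- are routine algebra.
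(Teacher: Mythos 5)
Your overall strategy is the same as the paper's: lower-bound $|\theta_{ij}|$ by $c_m-c_n$ (resp.\ $c_m-\epsilon$ after UUB) and read the two $\sigma$-thresholds off the sign pattern of the derivatives of $F_i(x)=x\exp(-\sigma x^2)$. But there are two concrete problems. First, your third derivative has the wrong sign: $F_i'''(x)=-2\sigma\left(4\sigma^2x^4-12\sigma x^2+3\right)e^{-\sigma x^2}$, so with $s=\sigma x^2$ one gets $F_i'''<0$ exactly when $4s^2-12s+3>0$, i.e.\ for $s<\tfrac{3-\sqrt6}{2}$ or $s>\tfrac{3+\sqrt6}{2}$ --- an \emph{unbounded} set. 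Your worry that $F_i'''<0$ only on a bounded window (and that one would therefore need an upper bound on $|\theta_{ij}|$, which is unavailable) is an artifact of that sign slip; once corrected, the single condition $\sigma(c_m-\epsilon)^2\ge\tfrac{3+\sqrt6}{2}$ puts every admissible $|\theta_{ij}|\ge c_m-\epsilon$ past all three thresholds $\tfrac12<\tfrac32<\tfrac{3+\sqrt6}{2}$ simultaneously, and the second bullet closes. As written, your plan for this bullet does not close, because you explicitly leave the sign pattern unresolved and your tentative resolution is based on the erroneous formula.

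Second, for the first bullet you only argue that the inequality is \emph{sufficient} for the influence term to sit on the decreasing branch of $F_i$; the necessity direction (LRUUB $\Rightarrow$ $\sigma\ge\tfrac{1}{2(c_m-c_n)^2}$) is asserted via ``contradicting the resilient mechanism implicit in LRUUB'' without the quantitative step that makes it work. The paper's argument is: if $c_m-c_n<(2\sigma)^{-1/2}$, the adversary can realize $|\theta_{ij}|=(2\sigma)^{-1/2}$ (any value in $[c_m-c_n,\infty)$ is attainable since $|e_j^m|$ is only bounded below), so the per-neighbor influence attains the peak value $(2\sigma)^{-1/2}e^{-1/2}/q_i^s$; combining $\Vert\mathbf{e}(k+1)-A(k)\mathbf{e}(k)\Vert<2\epsilon$ (valid once UUB is reached) with $\Vert L_m\mathbf{e}^m-L_a\mathbf{e}\Vert\le f\max_{i,j}|a_{ij}\theta_{ij}|$ then forces $\sigma\ge\tfrac18\bigl(fe^{-1/2}/(\epsilon q_i^s)\bigr)^2$, which for small $\epsilon$ is incompatible with the standing hypothesis $\sigma<\tfrac{1}{2(c_m-c_n)^2}$. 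That emptiness-of-the-feasible-interval argument is the missing idea: the mere failure of monotonicity of $F_i$ at $c_m-c_n$ does not by itself contradict LRUUB, so your contrapositive as sketched would not go through without supplying this quantitative link.
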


\begin{proof}
1) We will prove that $\sigma \geq \frac{1}{2(c_m-c_n)^2}$. 
We start with the error dynamics which is shown in \eqref{eq:error1} and it must hold that
\begin{equation}\label{eq:pro1}
\Vert L_m(k)\mathbf{e}^m(k)- L_a(k) \mathbf{e}(k)\Vert = \Vert \mathbf{e}(k+1) - A(k) \mathbf{e}(k)\Vert,
\end{equation}
Under Assumption~\ref{as:topology} and \ref{as:misbehaving}, the system~(\ref{dynamic}) with \eqref{foll_al} and \eqref{eq:aij} can achieve LRUUB, that is there must exist one $\sigma$ such that \eqref{eq:pro1} holds for $\Vert \mathbf{e}(k) \Vert \leq \epsilon$, $k\geq k_f$.

Suppose, by contradiction, $c_m-c_n< (2\sigma)^{-\frac{1}{2}}$. According to Theorem~\ref{thm:1}, we have all the eigenvalues of $A(k)$ within the unit open circle in the complex plane and thus, $\Vert \mathbf{e}(k+1) - A(k) \mathbf{e}(k)\Vert \leq \Vert \mathbf{e}(k+1) \Vert + \Vert A(k) \mathbf{e}(k)\Vert <2\epsilon$ for all $k\geq k_f$. In the $f$-local misbehaving network, we have $\Vert L_m(k)\mathbf{e}^m(k)- L_a(k) \mathbf{e}(k)\Vert \leq f \max_{i,j} |a_{ij}(k)\theta_{ij}(k)|.$
For any $e_i(k+1),\ e_i(k)$ in $\mathcal{I}_U$, $k\geq k_f$, to guarantee (\ref{eq:pro1}) always possesses a solution of parameter $\sigma$, we need to ensure 
\begin{equation}\label{eq:pro2}
f \max_{i,j} |a_{ij}(k)\theta_{ij}(k)|\leq 2\epsilon
\end{equation}
always possesses a solution of $\sigma$. Since the system~(\ref{eq:error1}) achieves LRUUB, we have $\Vert \mathbf{e}(k) \Vert \leq c_n$. Under Assumption~\ref{as:topology}, if $c_m-c_n< (2\sigma)^{-\frac{1}{2}}$, there may exist $|e_j^m(k')| > c_m$ such that $$F_i(|\theta_{ij}(k')|)=(2\sigma)^{-\frac{1}{2}} \exp(-\frac{1}{2}) >F_i(c_m-c_n).$$ Applying it into \eqref{eq:pro2}, we have $$f \max_{i,j} |a_{ij}(k)\theta_{ij}(k)|= f (q_{i,s}(k'))^{-1} F_i(|\theta_{ij}(k')|) \leq 2\epsilon.$$ 
The solution of $\sigma$ locates in 
$$[\frac{1}{8}[(f \exp(-\frac{1}{2}))/(\epsilon q_{i,s}(k'))]^2, [2(c_m-c_n)^2]^{-1}),$$
which is an empty interval for a small $\epsilon$. Therefore, $c_m - c_n \geq (2\sigma)^{-\frac{1}{2}}$, which is, $\sigma \geq \frac{1}{2(c_m-c_n)^2}$ holds. 

2) We will demonstrate that if $\sigma \geq \frac{\sqrt{6}+3}{2(c_m-\epsilon)^2}$ holds, the system~\eqref{dynamic} exhibits excellent steady-state resilient performance. Specifically, the term $|a_{ij}\theta_{ij}|$ will decay rapidly after the system~\eqref{dynamic} achieves UUB. In this case, we directly consider the case where $|\theta_{ij}(k)|=|e_j^m(k) - e_i(k)| \geq c_m - \epsilon$ for all $k \geq k_f$, which is the same as the proof in 1).

Calculate that $F_i'(x)=(1-2\sigma x^2)\exp(-\sigma x^2)$ with three poles $(0,1)$ and $(\pm \sqrt{\frac{3}{2\sigma}}, -2\exp(-\frac{3}{2}))$. As long as $c_m - \epsilon \geq \sqrt{\frac{3}{2\sigma}}$, we have $F_i'(|\theta_{ij}|) \leq F_i'(c_m - \epsilon) < 0$. Similarly, as long as $c_m-\epsilon \geq (\sqrt{6}+3)/(2\sigma^2)$, we have $F_i''(|\theta_{ij}|) \geq F_i''(c_m - \epsilon) > 0$ and $ F_i'''(|\theta_{ij}|) \leq F_i^{'''}(c_m - \epsilon) \leq 0$ to enable rapid decay of $a_{ij}\theta_{ij}(k)$ for all $k\geq k_f$.
\end{proof}

The conditions on $\sigma$ in Proposition~\ref{prop:sigma} are actually the $\mathcal{I}_\sigma^f$ in Algorithm~1. These necessary conditions can provide better assurance for the resilient performance.
\subsection{Sufficient condition for LRUUB}
\begin{theorem}[Sufficient condition for LRUUB]\label{theorem2}
Under Assumption~\ref{as:topology} with $f>0$ and Assumption~\ref{as:misbehaving}, if the leaders are static and $\exists \sigma \in \mathcal{I}_\sigma^f \cap \mathcal{I}_\sigma^p$, the system~(\ref{dynamic}) with \eqref{foll_al} and \eqref{eq:aij} using parameters $\{\sigma,\ \eta\}$ obtained from Algorithm~\ref{alg} can achieve LRUUB.
\end{theorem}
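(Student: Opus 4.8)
The plan is to verify directly the two defining properties of LRUUB in Definition~\ref{RAC} --- the safety invariant $e_i(k)\in\mathcal{I}_S$ for all $k$, and the cooperative UUB property --- for the closed loop \eqref{dynamic}, \eqref{foll_al}, \eqref{eq:aij}. Since $\sum_j a_{ij}(k)=1$ and the leaders are static, the error recursion for a normal follower is the convex update $e_i(k+1)=\sum_{j\in\mathcal{V}_i^{in}}a_{ij}(k)\,\widehat e_j(k)=e_i(k)-\sum_{j\in\mathcal{V}_i^{in}}a_{ij}(k)\,\theta_{ij}(k)$, where $\widehat e_j$ is $e_j$ for a normal follower, a constant offset (which for an intact leader equals $P^l-P^{lm}$, treated as a vanishingly small additive disturbance) for a leader, and $e_j^m$ for a misbehaving neighbour. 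Two weight estimates, valid under an inductively maintained bound $\Vert\mathbf{e}(k)\Vert\le\rho$ (with $\rho=c_n$ in the safety phase and $\rho=\epsilon$ in the UUB phase), drive the whole argument: (i) for a non-misbehaving neighbour $|\theta_{ij}(k)|\le 2\rho$, so the monotone low-pass recursion gives $r_{ij}(k)\le 4\rho^2$ and $q_{ij}(k)\ge e^{-4\sigma\rho^2}$, which, together with Assumption~\ref{as:topology} (at least $g$ non-misbehaving neighbours at the relevant distance level), lower-bounds $q_i^s(k)$; (ii) for a misbehaving neighbour, the instantaneous bound $r_{ij}(k)\ge\eta\,\theta_{ij}(k)^2\ge\eta(|e_j^m(k)|-\rho)^2$ combined with the low-pass/DFT estimate behind Assumption~\ref{as:misbehaving} gives $r_{ij}(k)\ge F_m(\rho)$ once past the first detection window, so that $q_{ij}(k)|e_j^m(k)|$ --- and hence the entire misbehaving perturbation $\sum_{j\in\mathcal{V}_i^{in}\cap\mathcal{V}_M}a_{ij}(k)|e_j^m(k)|$ --- is bounded by a quantity exponentially small in $\sigma F_m(\rho)$, even though $|e_j^m|$ itself may be unbounded, briefly normal-looking, or buried in noise. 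The constant $\overline{q_i^s}$ and the functions $F_t,F_m$ of Section~\ref{sec:parameter} are precisely this pair of estimates carried through the $h$ distance levels.

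For the safety condition I would induct on $k$, the base case $\max_i|e_i(0)|=c_n$ being the definition of $c_n$. In the inductive step I split $e_i(k+1)$ into its non-misbehaving part, a sub-convex combination of values in $[-c_n,c_n]$ and hence already in $[-c_n,c_n]$, and its misbehaving part, which by estimate (ii) and the choice $\sigma\in\mathcal{I}_\sigma^f$ (i.e. $\sigma F_m(c_n)\ge F_t(c_n)$) plus the lower bounds on $\eta$ in $\mathcal{I}_\eta^f$ ($\sigma\eta(c_m-c_n)^2\ge\tfrac12$ and $2\sigma\eta h c_m^2\ge(1+\sqrt{h})^2$) is small enough to be absorbed against the slack that the leader/lower-level influence gives the non-misbehaving part --- which is why the $h$ factors and $(\overline{q_i^s})^{h-1}/g^{\,h}$ appear in $F_t$. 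The initial window $k<K_n$ needs a separate, cruder estimate: there the filter may not yet have flagged a misbehaving neighbour, but then that neighbour's instantaneous value already lies in $[-c_n,c_n]$ or the $\eta\theta_{ij}^2$ term already kills its weight, so the same absorption works; this is again where the fast-response lower bounds on $\eta$ enter.

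For the UUB property I would first establish a contraction of $V(k):=\max_{i\in\mathcal{V}_F}|e_i(k)|$ over $h$ consecutive steps: by the standard leader-follower mechanism the influence of the at least $g$ intact static leaders (error $0$) reaches every follower within $h$ steps along the distance partition $\mathcal{V}^{(1)},\dots,\mathcal{V}^{(h)}$ with a product of weights bounded below by estimate (i), while the misbehaving perturbation accumulated over those $h$ steps is the small quantity of estimate (ii); this yields $V(k+h)\le(1-\gamma)V(k)+C B_m$ with $\gamma>0$ and $B_m\propto e^{-\sigma F_m(c_n)}$, so that $V(k)$ enters a neighbourhood of radius $\asymp C B_m/\gamma$, and the condition $\sigma F_m(\epsilon)\ge F_t(\epsilon)$ from $\mathcal{I}_\sigma^f$ together with the third lower bound on $\eta$ in $\mathcal{I}_\eta^f$ is exactly what forces that neighbourhood inside $\mathcal{I}_U=[-\epsilon,\epsilon]$ in finite time. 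I would then show $\mathcal{I}_U$ is forward invariant by repeating the safety argument verbatim with $c_n$ replaced by $\epsilon$ --- now using $\sigma F_m(\epsilon)\ge F_t(\epsilon)$ and, via Proposition~\ref{prop:sigma}, the $\mathcal{I}_\sigma^p$ bound $\sigma\ge\frac{\sqrt{6}+3}{2(c_m-\epsilon)^2}$ that makes the misbehaving influence decay \emph{rapidly} in the steady state --- so that once all $|e_i(k)|\le\epsilon$ at some $k_f$ this persists for $k>k_f$. Together these give LRUUB.

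The step I expect to be the main obstacle is the uniform control of the misbehaving perturbation in the two invariance arguments. One must make $q_{ij}$ (through $\sigma$) small enough for a misbehaving neighbour whose current error may be arbitrarily large, momentarily normal, or hidden in channel noise; keep $r_{ij}$ (through the lower bound on $\eta$) responsive enough that the neighbour is flagged within a short window, yet not so responsive that transient disagreements among normal agents trip it; and guarantee that after this perturbation is amplified across all $h$ distance levels it still fits inside $\epsilon$. These three competing requirements are exactly the (finite, checkable) inequalities defining $\mathcal{I}_\sigma^p\cap\mathcal{I}_\sigma^f$ and $\mathcal{I}_\eta^f$, whose non-emptiness Algorithm~\ref{alg} simply assumes; showing they close the estimates above is where the real work lies. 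By comparison, the contraction of the non-misbehaving part --- argued directly as above, or by invoking Theorem~\ref{thm:1} and Lemma~\ref{lemmades} for the sub-stochastic matrix $A(k)-L_a(k)$ --- and the bookkeeping of the constant leader offset $P^l-P^{lm}$ are routine.
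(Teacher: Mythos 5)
Your outline follows the paper's own route quite closely: the paper also works with the infinity norm $\Psi(\mathbf{e})=\Vert\mathbf{e}\Vert$ as the Lyapunov/gauge function, obtains an $h$-step contraction through the distance partition from the $g$ intact in-neighbours (its Lemma~\ref{xilemma}, converted to a one-step rate via Bernoulli's inequality), bounds the misbehaving perturbation by $f\,\overline{q_i^s}^{-1}\exp(-\sigma(\cdot))(c_m-\Vert\mathbf{e}\Vert)$ using the filter lower bound from Assumption~\ref{as:misbehaving}, and closes the safety and UUB arguments with the two endpoint inequalities $F_m(c_n)\sigma\ge F_t(c_n)$ and $F_m(\epsilon)\sigma\ge F_t(\epsilon)$. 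Your two ``weight estimates'' are exactly the ingredients of the paper's inequality \eqref{thpr0}, and your forward-invariance of $\mathcal{I}_U$ matches the paper's fourth step.

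The genuine gap is in the UUB step. You compress the $h$-step evolution into $V(k+h)\le(1-\gamma)V(k)+CB_m$ with \emph{constants} $\gamma$ and $B_m$, and then argue that the endpoint conditions force the limit ball $CB_m/\gamma$ inside $\mathcal{I}_U$. But both $\gamma$ and $B_m$ are state-dependent: the contraction rate degrades like $\exp(-\sigma h\Vert\mathbf{e}\Vert^2)^{\,h\text{-fold}}$ as $\Vert\mathbf{e}\Vert$ grows, while the misbehaving perturbation behaves like $F_i(c_m-\Vert\mathbf{e}\Vert)=\exp(-\sigma(c_m-\Vert\mathbf{e}\Vert)^2)(c_m-\Vert\mathbf{e}\Vert)$, which is \emph{not monotone} in $\Vert\mathbf{e}\Vert$ and can peak at an interior point of $[\epsilon,c_n]$. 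The conditions defining $\mathcal{I}_\sigma^f$ only certify that the net one-step decrease $F_\Psi(x)\le 0$ at the two endpoints $x=\epsilon$ and $x=c_n$; taking worst-case constants over the whole interval, as your recursion implicitly does, is not justified by those conditions and can fail. The paper bridges this with its Lemma~\ref{mathlemma2}: an analysis of the zeros of $F_c'$ showing that, \emph{provided} $c_m-(2c_4\sigma)^{-1/2}\ge\max\{(2c_3\sigma)^{-1/2},c_n\}$ (this is where the $\mathcal{I}_\sigma^p$ lower bound and the first two lower bounds on $\eta$ in $\mathcal{I}_\eta^f$ are consumed), $F_c$ has at most one critical point in $(\epsilon,c_n)$ and it is a minimum, so nonpositivity at the two endpoints implies nonpositivity on the whole interval. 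Without this interpolation lemma (or an equivalent convexity/zero-counting argument), your proof does not close; you correctly flag ``uniform control of the misbehaving perturbation'' as the hard part, but the specific missing idea is that the problem reduces to a two-point sign check only after establishing the unimodal structure of $F_\Psi$ on $[\epsilon,c_n]$.

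A smaller remark: your safety step is a one-step induction, whereas the paper argues by contradiction via the accumulated geometric series $\sum_\kappa(1-\Xi)^{(\kappa-1)/h}$ of perturbations; these are interchangeable in spirit, but your one-step version again needs $F_\Psi(x)\le 0$ for \emph{all} $x\in[\epsilon,c_n]$ plus invariance of $[0,\epsilon]$, i.e., it also leans on the missing interpolation step rather than avoiding it.
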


Several lemmas are introduced to support the proof of Theorem~\ref{theorem2}.

Above all, according to Theorem 5.1 in \citep{BLANCHINI1995451}, the UUB condition in LRUUB can be guaranteed by the existence of a Lyapunov function outside $\mathcal{I}_U$. 
\begin{lemma} \label{UUBconditionlemma}
If there exists a Lyapunov function outside $\mathcal{I}_U$ for the system~(\ref{dynamic}), then the system~(\ref{dynamic}) is UUB in $\mathcal{I}_U$.
\end{lemma}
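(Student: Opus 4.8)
The plan is to prove the lemma as the discrete-time counterpart of Theorem~5.1 in \citep{BLANCHINI1995451}, by converting the hypothesis ``there exists a Lyapunov function outside $\mathcal{I}_U$'' into a two-phase argument: uniform finite-time entry into $\mathcal{I}_U$, followed by forward invariance of $\mathcal{I}_U$. First I would make the hypothesis precise. By a Lyapunov function outside $\mathcal{I}_U$ I mean a map $V:\mathbb{N}_0\times\mathbb{R}^{N_F}\to\mathbb{R}_{\geq 0}$ together with class-$\mathcal K$ functions $\alpha_1,\alpha_2,\gamma$ such that $\alpha_1(\Vert\mathbf{e}\Vert)\le V(k,\mathbf{e})\le \alpha_2(\Vert\mathbf{e}\Vert)$ uniformly in $k$, and such that the one-step increment $\Delta V(k):=V(k+1,\mathbf{e}(k+1))-V(k,\mathbf{e}(k))$ along the error trajectory of \eqref{eq:error1} satisfies $\Delta V(k)\le-\gamma(\Vert\mathbf{e}(k)\Vert)<0$ whenever $\mathbf{e}(k)\notin\mathcal{I}_U$, i.e. whenever $\Vert\mathbf{e}(k)\Vert>\epsilon$. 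Since $\Vert\cdot\Vert$ is the infinity norm, the event $\mathbf{e}\in\mathcal{I}_U$ coincides with $\Vert\mathbf{e}\Vert\le\epsilon$, so this target is exactly the component-wise cooperative UUB condition of Definition~\ref{RAC}.

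Next I would establish finite-time entry. Suppose the trajectory stays outside $\mathcal{I}_U$ on a window $[k_0,K]$. On this window $\Vert\mathbf{e}(k)\Vert>\epsilon$, so by monotonicity of the class-$\mathcal K$ function $\gamma$ we get the uniform strict-decrease bound $\Delta V(k)\le-\gamma(\epsilon)=:-c_0<0$. Telescoping and using the Safety condition $\Vert\mathbf{e}(k_0)\Vert\le c_n$ to bound the initial value, $0\le V(K)\le V(k_0)-c_0(K-k_0)\le\alpha_2(c_n)-c_0(K-k_0)$, whence $K-k_0\le\alpha_2(c_n)/c_0$. Thus from any admissible start the trajectory enters $\mathcal{I}_U$ within $k_f:=\lceil\alpha_2(c_n)/c_0\rceil$ steps, a bound independent of the initial time, which supplies the ``uniform'' part of UUB.

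Then I would establish ultimate invariance. Setting $b:=\alpha_2(\epsilon)$ and considering the sublevel set $\Omega_b(k)=\{\mathbf{e}:V(k,\mathbf{e})\le b\}$, which contains $\mathcal{I}_U$ by the upper sandwich bound, I would show $\Omega_b$ is forward invariant: on $\Omega_b\setminus\mathcal{I}_U$ the decrease hypothesis gives $V(k+1,\mathbf{e}(k+1))\le V(k,\mathbf{e}(k))\le b$, so the state remains in $\Omega_b$. Combining finite-time entry with invariance yields $V(k,\mathbf{e}(k))\le b$ for all $k>k_f$, and inverting the lower bound gives the explicit norm bound $\Vert\mathbf{e}(k)\Vert\le\alpha_1^{-1}(b)$ for $k>k_f$.

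The main obstacle is the discrete-time overshoot at the boundary: a state lying inside $\mathcal{I}_U$ at step $k$ may be mapped outside $\mathcal{I}_U$ at step $k+1$, and unlike the continuous-time case the monotone decrease of $V$ on the exterior of $\mathcal{I}_U$ does not by itself forbid jumping to a higher level, nor does it force the ultimate-bound set to be $\mathcal{I}_U$ itself rather than the enclosing $\Omega_b$. To close this I would follow the contractivity set-up of Theorem~5.1 in \citep{BLANCHINI1995451} and take $V$ to be the gauge (Minkowski) functional of the C-set $\mathcal{I}_U$, so that its sublevel sets are the scaled copies $\lambda\mathcal{I}_U$ and the ``decrease outside'' hypothesis becomes precisely the contractivity statement that renders $\mathcal{I}_U$ invariant and attracting. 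With this identification $\alpha_1=\alpha_2$, the enclosing set $\Omega_b$ coincides with $\mathcal{I}_U$, Phase~2 collapses to invariance of $\mathcal{I}_U$, and the two phases together give $\mathbf{e}(k)\in\mathcal{I}_U$ for all $k>k_f$, which is the desired UUB conclusion.
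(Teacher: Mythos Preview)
The paper does not supply its own proof of this lemma; it simply states that the result follows from Theorem~5.1 in \citep{BLANCHINI1995451} and then spells out what ``Lyapunov function outside $\mathcal{I}_U$'' means: a gauge function $\Psi$ (taken concretely as $\Psi(\mathbf e)=\Vert\mathbf e\Vert$) satisfying the two conditions \eqref{uub1} (strict decrease outside $\bar{\mathcal N}_k(\Psi,\epsilon)$) and \eqref{uub2} (one-step invariance $\Psi(\mathbf e(k+1))\le\epsilon$ whenever $\Psi(\mathbf e(k))\le\epsilon$). Your two-phase sketch (uniform finite-time entry plus forward invariance, then the gauge-function specialization) is correct and is essentially the argument behind Blanchini's Theorem~5.1, so your approach coincides with the paper's.

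The only point worth noting is a difference in where the invariance comes from. In the paper the overshoot issue you flag never arises, because condition~\eqref{uub2} is \emph{assumed} as part of the hypothesis ``Lyapunov function outside $\mathcal{I}_U$''; the paper does not try to deduce invariance from the decrease condition alone. You instead begin from the generic $\alpha_1/\alpha_2/\gamma$ sandwich formulation, obtain invariance only of the larger sublevel set $\Omega_b$, and then close the gap by specializing $V$ to the Minkowski gauge of $\mathcal{I}_U$ so that $\alpha_1=\alpha_2$ and $\Omega_b=\mathcal{I}_U$. That specialization is exactly the paper's choice $\Psi=\Vert\cdot\Vert$, so the two routes converge; your detour simply makes explicit why the gauge-function hypothesis is the natural one in discrete time.
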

In particular, a Lyapunov function outside $\mathcal{I}_U$ for the system~(\ref{eq:error1}) can be defined as a gauge function $\Psi:\ \mathbb{R}^{N_F} \rightarrow \mathbb{R}_+$. We briefly present some background material necessary for $\Psi$. Following \citep{272351}, we call a function $\Psi$ a gauge function if $\Psi(x) \geq 0$, $\Psi(x)=0 \Leftrightarrow x=0$; for $\mu>0, \Psi(\mu x)=\mu \Psi(x)$; and $\Psi(x+y) \leq \Psi(x)+\Psi(y), \forall x, y \in \mathbb{R}^{N_F}$. A gauge function is convex and it defines a distance of $x$ from the origin which is linear in any direction. If $\Psi$ is a gauge function, we define the closed set (possibly empty) $\bar{N}[\Psi, \epsilon]=$ $\left\{x \in \mathbb{R}^{N_F} | \Psi(x) \leq \epsilon\right\}$. 
%It is easy to show that the set $\bar{N}[\Psi, \epsilon]$ is a C-set for all $\epsilon>0$. On the other hand, any $\mathrm{C}$-set $\mathcal{S}$ induces a gauge function $\Psi_{\mathcal{S}}(x)$ (Known as Minkowski function of $\mathcal{S})$, which is defined as $\Psi(x) \doteq \inf \{\mu>0: x \in \mu \mathcal{S}\}$. Therefore a C-set $\mathcal{S}$ can be thought of as the unit ball $\mathcal{S}=\bar{N}[\Psi, 1]$ of a gauge function $\Psi$ and $x \in \mathcal{S} \Leftrightarrow \Psi(x) \leq 1$.
In particular, $\Psi$ holds the following conditions:
\begin{itemize}
\item If $ \mathbf{e}(k) \notin \bar{\mathcal{N}}_k(\Psi, \epsilon)$, then 
\begin{equation}
\Psi( \mathbf{e}(k+1))-\Psi( \mathbf{e}(k)) <0
\label{uub1}
\end{equation} 
\item If $ \mathbf{e}(k) \in \bar{\mathcal{N}}_k(\Psi, \epsilon)$, then
\begin{equation}
\Psi( \mathbf{e}(k+1)) \leq \epsilon
\label{uub2}
\end{equation} 
\end{itemize}

In our research, let $\Psi( \mathbf{e}(k))=\Vert \mathbf{e}(k) \Vert$ and $\bar{\mathcal{N}}_k(\Psi, \epsilon)=\{ \mathbf{e}(k) \in \mathbb{R}^{N_F}| \Psi( \mathbf{e}(k)) \in \mathcal{I}_U \}$. 

Next, according to Proposition 4.1 in \citep{CSFRL}, the following lemma can be derived directly.

\begin{lemma} 
Under Assumption~\ref{as:topology} with $f>0$ and \ref{as:misbehaving}, for $\sigma \in \mathcal{I}_\sigma^p$, there exists a $\Xi(k)\in (0,1)$, such that \[ \Vert A(k+h-1)\cdots A(k+1) A(k) \mathbf{e}(k) \Vert \leq (1-\Xi(k)) \Vert \mathbf{e}(k)\Vert\] where
$
\Xi(k)= ( g \cdot \underline{a_{l}}(k) )^{h}$ and $h$ is the depth of the graph. Here, $\underline{a_{l}}(k)=\underline{q_{l}}(k)/\overline{q_i^s}$; $\underline{q_{l}}(k)= \min_{k \leq \kappa \leq k+h-1}q_{il}(\kappa)$; $q_{il}(k)=\exp(-\sigma r_{il}(k))$; and $r_{il}(k)$ is the evaluation function of $e_i(k)$, i.e., $r_{il}(k)=r_{il}(k-1)+\eta [e_i(k)^2-r_{il}(k-1)]$, $r_{il}(0)=e_i(0)$ for all $i \in \mathcal{V}_F$.
\label{xilemma}
\end{lemma}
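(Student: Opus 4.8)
\textbf{Proof proposal for Lemma~\ref{xilemma}.}
The plan is to reduce the claim to a direct application of Proposition~4.1 in \citep{CSFRL}, which gives a contraction bound for products of row-stochastic matrices associated with a leader-follower digraph in terms of the minimum edge weight along the leader-to-follower paths and the depth $h$. First I would verify that the hypotheses of that proposition hold in our setting: the matrices $A(k)$ are row-substochastic (row sums $\sum_{j\in\mathcal V_F}a_{ij}(k)\le 1$, with strict deficiency exactly $\sum_{j\in\mathcal V_i^{in}\cap\mathcal V_M}a_{ij}(k)$ for agents adjacent to misbehaving nodes or leaders), their support contains the normal leader-follower subgraph, and for $\sigma\in\mathcal I_\sigma^p$ all nonzero entries $a_{ij}(k)$ are uniformly bounded below over the horizon $[k,k+h-1]$. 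The point is that, after removing misbehaving agents, each normal follower in $\mathcal V^{(n)}$ still has at least $g$ neighbors in $\mathcal V^{(n-1)}$ by Assumption~\ref{as:topology} (the $g+f$ bound minus the at most $f$ misbehaving ones), so the normal subgraph retains a depth-$h$ reachability structure from the (normal) leaders with in-degree at least $g$ at every layer.

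Next I would make the edge-weight lower bound explicit. For a normal edge $(l,i)$ with $l\in\mathcal V^{(n-1)}\cup\mathcal L$, $i\in\mathcal V^{(n)}$, we have $a_{il}(k)=q_{il}(k)/q_i^s(k)$ with $q_{il}(k)=\exp(-\sigma r_{il}(k))$, where the evaluation function $r_{il}$ tracks $e_i(\cdot)^2$ (respectively $\theta_{il}(\cdot)^2$) through the first-order filter. Over the window $[k,k+h-1]$ the numerator is at least $\underline{q_l}(k)=\min_{k\le\kappa\le k+h-1}q_{il}(\kappa)$, while the denominator $q_i^s$ is at most $\overline{q_i^s}$ by the worst-case bound recorded in the Remark following Algorithm~\ref{alg} (the $f$ misbehaving neighbors contribute at most $f\exp(-\underline\sigma_p(c_m-c_n)^2)$ each and the remaining at most $d_{max}-f$ contribute at most $1$). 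Hence $a_{il}(k)\ge\underline{a_l}(k)=\underline{q_l}(k)/\overline{q_i^s}$ along every normal edge in the relevant horizon. With this uniform lower bound and the surviving depth-$h$ structure, Proposition~4.1 of \citep{CSFRL} yields exactly
\[
\Vert A(k+h-1)\cdots A(k)\,\mathbf e(k)\Vert \le \bigl(1-(g\cdot\underline{a_l}(k))^h\bigr)\Vert\mathbf e(k)\Vert,
\]
and it only remains to check $\Xi(k)=(g\cdot\underline{a_l}(k))^h\in(0,1)$: positivity is immediate since $g\ge1$ and $\underline{a_l}(k)>0$, while $\Xi(k)<1$ follows because $g\cdot\underline{a_l}(k)\le\sum_{l}a_{il}(k)\le 1$ with the product of the contraction over $h\ge1$ layers strictly less than one unless every layer already has $\mathbf e$ constant, in which case the bound is trivially an equality with the stated form still valid (or one simply notes $g\underline{a_l}(k)<1$ strictly whenever any agent has a neighbor outside the chosen incoming set, which holds under Assumption~\ref{as:topology}).

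I expect the main obstacle to be the bookkeeping needed to legitimately invoke Proposition~4.1 of \citep{CSFRL}: that result is phrased for a fixed leader-follower graph with a known reachability chain, whereas here the ``effective'' normal subgraph is time-invariant as a set but the weights $a_{ij}(k)$ are time-varying and only lower-bounded over finite windows, and one must argue that the contraction constant can be taken as the running minimum $\underline{q_l}(k)$ rather than a single static value. Making the distance-based partition $\mathcal V^{(1)},\dots,\mathcal V^{(h)}$ interact correctly with the removal of misbehaving agents — in particular confirming that normal followers are still reachable from normal leaders within $h$ steps after deletion, which is where Assumption~\ref{as:topology}'s layerwise $g+f$ bound does the real work — is the delicate point; once that is in place, the weight estimate and the final algebra are routine.
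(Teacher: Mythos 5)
Your proposal follows essentially the same route as the paper: both arguments rest on the distance-based partition $\mathcal V^{(1)},\dots,\mathcal V^{(h)}$, the observation that Assumption~\ref{as:topology} leaves each normal follower in $\mathcal V^{(n)}$ with at least $g$ normal neighbors in $\mathcal V^{(n-1)}$ after the at most $f$ misbehaving ones are discounted, the uniform edge-weight lower bound $a_{il}(\kappa)\ge \underline{q_l}(k)/\overline{q_i^s}$ over the window $[k,k+h-1]$, and an appeal to Proposition~4.1 of \citep{CSFRL} for the layered contraction. Your handling of the time-varying weights (taking the running minimum $\underline{q_l}(k)$ over the horizon) and of the worst-case denominator $\overline{q_i^s}$ matches the paper's.

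There is, however, one concrete step you gloss over: Proposition~4.1 of \citep{CSFRL} does not ``yield exactly'' the constant $(g\,\underline{a_l}(k))^h = g^h(\underline{a_l}(k))^h$. The paper carries out the row-sum induction explicitly (defining $S_i^{(n)}$ as the $i$-th row sum of the partial product) and obtains only $S_i^{(h)}\le [1-g^{h-1}(\underline{a_l}(k))^h]\Vert\mathbf e(k)\Vert$ from the layered recursion: the first layer contributes a single factor $\underline{a_l}(k)$ (one leader edge), and each subsequent layer contributes a factor $g\,\underline{a_l}(k)$, giving $g^{h-1}$, not $g^h$. The final factor of $g$ is recovered by a \emph{separate} argument using the fact that there are at least $g$ normal leaders (so at least $g$ disjoint leader contributions at the first layer), for which the paper invokes Theorem~1 of \citep{FCSGD}. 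Your proof as written would only justify $\Xi(k)=g^{h-1}(\underline{a_l}(k))^h$; to reach the stated $\Xi(k)=(g\,\underline{a_l}(k))^h$ you need to add the multi-leader amplification step. Your closing check that $\Xi(k)\in(0,1)$ is fine once the constant is corrected, since $\overline{q_i^s}\ge d_{max}-f\ge g$ forces $g\,\underline{a_l}(k)\le 1$.
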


% \begin{lemma}\label{mathlemma1}
% Given a constant $c\geq 1$, for all $x \in [0,1)$,
% \[(1-x)^{\frac{1}{c}} \leq 1-\frac{1}{c}x.\] 
% The equality holds when $x=0$.
% \end{lemma}

\begin{lemma} \label{mathlemma2}
Define a continuous function $F_c: [0, c_m] \rightarrow \mathbb{R}$
\begin{equation} \nonumber
F_c(x)=-c_1 \exp(-c_3 \sigma x^2)x+c_2 \exp(-c_4\sigma(c_m-x)^2)(c_m-x),
\end{equation}
where $c_1,c_2,c_3,c_4,c_m,c_n$ are positive constants, $c_m> c_n$ and $\sigma \in \mathcal{I}_\sigma^p$. If the following two conditions hold,
\begin{itemize}
 \item 
$
c_m-(2 c_4\sigma)^{-\frac{1}{2}}\geq \max \{ (2 c_3\sigma)^{-\frac{1}{2}}, c_n\}, 
$
\item there exists two real number $x_1$ and $x_2$ satisfying $0<x_1<x_2<c_n$, such that $F_c(x_1)\leq 0$ and $F_c(x_2)\leq 0$,
\end{itemize}
then $F_c(x) \leq 0$ for all $x \in [x_1, x_2]$.
\end{lemma}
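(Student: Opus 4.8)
The function $F_c$ on $[0,c_m]$ is a difference of two "influence-type" terms, each of the form $\pm\, c\exp(-c'\sigma t^2)t$, evaluated at $t=x$ and $t=c_m-x$ respectively. I would first record the shape of the single-bump map $\phi_{c,c'}(t)=c\exp(-c'\sigma t^2)t$ on $[0,c_m]$: it increases on $[0,(2c'\sigma)^{-1/2}]$ and decreases thereafter, with a unique interior maximum at the pole $t=(2c'\sigma)^{-1/2}$. Writing $F_c(x)=-\phi_{c_1,c_3}(x)+\phi_{c_2,c_4}(c_m-x)$, the first condition $c_m-(2c_4\sigma)^{-1/2}\ge\max\{(2c_3\sigma)^{-1/2},c_n\}$ is precisely what is needed to place the whole interval $[0,c_n]$ (hence $[x_1,x_2]$) on the \emph{monotone} branches of both terms: on $[0,c_n]$ the map $x\mapsto\phi_{c_1,c_3}(x)$ is still to the left of (or at) its pole if $c_n\le(2c_3\sigma)^{-1/2}$, and $x\mapsto\phi_{c_2,c_4}(c_m-x)$ has $c_m-x\ge c_m-c_n\ge(2c_4\sigma)^{-1/2}$, so $c_m-x$ sits on the decreasing branch, i.e. $x\mapsto\phi_{c_2,c_4}(c_m-x)$ is itself decreasing in $x$. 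The subtlety is that $-\phi_{c_1,c_3}(x)$ need not be monotone on $[0,c_n]$ unless $c_n\le(2c_3\sigma)^{-1/2}$; this is exactly the content of the $\max$ in the hypothesis, so under the first condition $-\phi_{c_1,c_3}$ is nonincreasing on $[0,c_n]$ as well.

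\textbf{Key steps.} First I would show $F_c$ is concave on $[0,c_n]$, or failing that, that it has no interior local minimum on $(0,c_n)$ — the latter is the property actually needed. Concretely: $F_c'(x)=-\phi_{c_1,c_3}'(x)-\phi_{c_2,c_4}'(c_m-x)$. On $[0,c_n]$ we have $\phi_{c_1,c_3}'(x)=c_1(1-2c_3\sigma x^2)\exp(-c_3\sigma x^2)\ge0$ (since $x\le c_n\le(2c_3\sigma)^{-1/2}$) and $\phi_{c_2,c_4}'(c_m-x)=c_2(1-2c_4\sigma(c_m-x)^2)\exp(-c_4\sigma(c_m-x)^2)\le0$ (since $c_m-x\ge(2c_4\sigma)^{-1/2}$), so $-\phi_{c_1,c_3}'(x)\le0$ and $-\phi_{c_2,c_4}'(c_m-x)\le0$, giving $F_c'(x)\le0$ on all of $[0,c_n]$. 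Thus $F_c$ is \emph{nonincreasing} on $[0,c_n]\supseteq[x_1,x_2]$. Then for any $x\in[x_1,x_2]$ we get $F_c(x)\le F_c(x_1)\le0$, which is the claim; the second hypothesis about $x_2$ is then redundant for the conclusion on $[x_1,x_2]$ but is presumably retained because the lemma is applied with a two-sided bracketing elsewhere. If instead the intended reading is that $F_c$ could still be non-monotone, I would fall back on: $F_c$ has at most one sign change from $+$ to $-$ to $+$ being impossible because $F_c'\le0$, and conclude via the two given values $F_c(x_1),F_c(x_2)\le0$ and monotonicity.

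\textbf{Main obstacle.} The only real care is verifying that the hypothesis $c_m-(2c_4\sigma)^{-1/2}\ge\max\{(2c_3\sigma)^{-1/2},c_n\}$ does indeed force $x\le(2c_3\sigma)^{-1/2}$ for all $x\in[0,c_n]$ and $c_m-x\ge(2c_4\sigma)^{-1/2}$ for all $x\in[0,c_n]$ simultaneously; the first is immediate since $c_n\le c_m-(2c_4\sigma)^{-1/2}\le$ (not obviously) $(2c_3\sigma)^{-1/2}$ — wait, the hypothesis gives $c_m-(2c_4\sigma)^{-1/2}\ge(2c_3\sigma)^{-1/2}$ and separately $\ge c_n$, so it does \emph{not} directly give $c_n\le(2c_3\sigma)^{-1/2}$. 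So the genuine obstacle is handling the case $c_n>(2c_3\sigma)^{-1/2}$, where $-\phi_{c_1,c_3}$ is increasing on a sub-interval; there I would split $[0,c_n]$ at $(2c_3\sigma)^{-1/2}$, note $F_c'\le0$ still on $[(2c_3\sigma)^{-1/2},c_n]$ (both terms nonpositive), and on $[0,(2c_3\sigma)^{-1/2}]$ bound $F_c$ above by $-\phi_{c_1,c_3}((2c_3\sigma)^{-1/2})+\phi_{c_2,c_4}(c_m)$ or by comparing against $F_c$ at the endpoints $x_1,x_2$ using the fact that $x_1,x_2<c_n$ and are chosen to straddle this region — this is where the \emph{two} hypotheses $F_c(x_1)\le0$ and $F_c(x_2)\le0$ finally earn their keep. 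I expect the clean write-up to proceed by the monotonicity argument on the reduced interval $[\max\{(2c_3\sigma)^{-1/2},0\},c_n]$ plus a convexity/endpoint argument on the initial increasing stretch, so that any interior value is dominated by $\max\{F_c(x_1),F_c(x_2)\}\le0$.
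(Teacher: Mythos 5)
Your argument has a fatal sign error at its central step. Writing $F_c(x)=-\phi_{c_1,c_3}(x)+\phi_{c_2,c_4}(c_m-x)$ with $\phi_{c,c'}(t)=c\,t\exp(-c'\sigma t^2)$, the chain rule gives $\frac{d}{dx}\phi_{c_2,c_4}(c_m-x)=-\phi_{c_2,c_4}'(c_m-x)$. Under the first hypothesis $c_m-x\ge c_m-c_n\ge(2c_4\sigma)^{-1/2}$, so $\phi_{c_2,c_4}'(c_m-x)\le 0$ and therefore $-\phi_{c_2,c_4}'(c_m-x)\ge 0$: the second term of $F_c$ is \emph{increasing} in $x$ on $[0,c_n]$, not decreasing as you claim (you pass from ``$\phi'\le 0$'' to ``$-\phi'\le 0$''). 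Consequently your conclusion $F_c'\le 0$ on $[0,c_n]$ does not follow; on $[(2c_3\sigma)^{-1/2},c_n]$ (when nonempty) both contributions to $F_c'$ are in fact nonnegative, so $F_c$ is nondecreasing there, and on $[0,(2c_3\sigma)^{-1/2}]$ the two contributions have opposite signs, so no monotonicity can be read off termwise. The structure of the lemma itself signals this: if $F_c$ were nonincreasing on $[x_1,x_2]$ the hypothesis $F_c(x_2)\le 0$ would be redundant, whereas the whole point is that $F_c$ may dip to a minimum and rise again between $x_1$ and $x_2$, and one must rule out its rising above zero. Your ``fallback'' for the case $c_n>(2c_3\sigma)^{-1/2}$ reuses the same erroneous sign (``both terms nonpositive'') and never supplies the missing ingredient.

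That missing ingredient is a count of the critical points of $F_c$, which is how the paper proceeds: it writes the equation $F_c'(z)=0$ as $F_1(z)=F_2(z)$ for an increasing-on-branches rational function $F_1$ and a positive decreasing exponential $F_2$, and uses the first hypothesis to show that all solutions lie in $[0,(2c_3\sigma)^{-1/2})\cup(c_m-(2c_4\sigma)^{-1/2},c_m]$, hence that at most one zero $z_1$ of $F_c'$ lies in $(0,c_n)$. Since that lone interior critical point is a local minimum, the maximum of $F_c$ over $[x_1,x_2]\subset(0,c_n)$ is attained at an endpoint, and both endpoint values are $\le 0$ by hypothesis. To repair your proof you would need to carry out this (or an equivalent) analysis of where $F_c'$ can vanish; the termwise monotonicity argument cannot be salvaged.
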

The proofs of Lemma~\ref{xilemma} and \ref{mathlemma2} are in the Appendix.

Now proceed with the proof of Theorem~\ref{theorem2}.
\setcounter{pot}{1}
\begin{pot}
As long as the system~(\ref{eq:error1}) achieves LRUUB, the system~(\ref{dynamic}) with \eqref{foll_al} and \eqref{eq:aij} also ahieves LRUUB. Referring to Definition~\ref{RAC} and Lemma~\ref{UUBconditionlemma}, the proof of LRUUB is dissembled into four parts.

\noindent $\bullet$ Firstly, we derive the one-step recursive formula of \eqref{eq:error1}. 

Under Assumption~\ref{as:topology}, there always exist at least $g$ paths from $g$ normal leaders to any normal follower. In this case, according to the Lemma~\ref{xilemma} and Proposition 4.1 in \citep{CSFRL}, since $a_{ij}$ is bounded, for all $k\in \mathbb{N}$ and $k'\in \mathbb{N}_0$, we have 
$$\Vert A(k+k')A(k+k'-1)\cdots A(k)\mathbf{e}(k) \Vert \leq (1-\Xi(k))^{\frac{k'}{h}} \Vert \mathbf{e}(k) \Vert .$$ 
According to Bernoulli's inequality, for $\Xi(k)\in (0,1)$, we have $$1-(1-\Xi(k))^{\frac{1}{h}}>\frac{\Xi(k)}{h}.$$
Based on norm properties, we have:
\begin{equation}\nonumber
\begin{split}
&\Vert \mathbf{e}(k+1) \Vert \leq \Vert A(k) \mathbf{e}(k) \Vert + \Vert L_m(k)\mathbf{e}^m(k)-L_{q}(k) \mathbf{e}(k)\Vert\\
&\leq (1-\Xi(k))^{\frac{1}{h}} \Vert \mathbf{e}(k) \Vert + \Vert L_m(k)\mathbf{e}^m(k)-L_{q}(k) \mathbf{e}(k)\Vert\\
&\leq (1-\frac{\Xi(k)}{h})\Vert \mathbf{e}(k) \Vert + \Vert L_m(k)\mathbf{e}^m(k)-L_{q}(k) \mathbf{e}(k)\Vert.
\end{split}
\end{equation}

Define $\theta_{im}(k)=e_i(k)-c_m$, $r_{im}(k)= (1-\eta)r_{im}(k-1)+\eta \theta_{im}(k)$ and $q_{im}(k)=\exp(-\sigma r_{im}(k))$. Let $\overline{q}_{im}(k)=\max_{i\in \mathcal{V}_F} q_{im}(k)$. In these variables, we do not distinguish the labels of misbehaving agents. Under Assumption~\ref{as:topology}, for each normal agent, there exist at least $g$ normal neighbors and at most $f$ misbehaving neighbors. We estimate that 
\begin{equation} \label{thpr0}
\begin{split}
\Vert \mathbf{e}(k+1)\Vert-\Vert \mathbf{e}(k)\Vert & \leq -\frac{1}{h}(g\overline{q_i^s}^{-1}(k)\underline{q_l}(k))^{h}\Vert \mathbf{e}(k)\Vert\\
& + f \overline{q_i^s}^{-1}(k) \overline{q}_{im}(k) (c_m-\Vert \mathbf{e}(k) \Vert) 
\end{split}
\end{equation}

\noindent $\bullet$ Secondly, we will prove the safety condition in Definition~\ref{RAC}, that is $\Vert \mathbf{e}(k)\Vert \leq c_n$ for all $k \in \mathbb{N}$. 

We first illustrate $\Vert \mathbf{e}(1) \Vert \leq \Vert \mathbf{e}(0) \Vert$, such that $\Vert \mathbf{e}(1) \Vert \leq c_n$. In Algorithm~1, $F_m(c_n)\sigma \geq F_t(c_n)$ always holds and thus, we have $\exp(F_m(c_n)\sigma) \geq \exp(F_t(c_n))$. Substituting $F_m$ and $F_t$ into it, we obtain
\begin{equation}
\begin{split}
 &-\frac{1}{h}[\overline{q_i^s}^{-1} g \exp(-\sigma c_n^2)]^{h} c_n \\
 &+ \overline{q_i^s}^{-1} f \exp(-\sigma(c_m-c_n)^2)(c_m-c_n) \leq 0. 
\end{split}
\label{thpr1}
\end{equation} 

Under Assumption~\ref{as:misbehaving}, we have $\Vert \mathbf{e}^m(0) \Vert \geq c_m$ and $\Vert \mathbf{e}(0) \Vert \leq c_n$. Substituting (\ref{thpr1}) into (\ref{thpr0}), there is $\Vert \mathbf{e}(1) \Vert \leq \Vert \mathbf{e}(0) \Vert $. 

Then we prove the safety condition.
Suppose, by contradiction, there exists a time-step $k^*$ such that 
$\Vert \mathbf{e}(k^*)\Vert =e^*>c_n$ while $\Vert \mathbf{e}( k^*-1)\Vert \leq c_n$. According to (\ref{thpr1}), we have 
\begin{equation}\nonumber
\begin{aligned}
 a_{im}(0) \theta_{im}(0) &\leq \overline{q_i^s}^{-1} \exp(-\sigma(c_m-c_n)^2)(c_m-c_n)\\
 &\leq \frac{1}{h}(\overline{q_i^s}^{-1} g \exp(-\sigma c_n^2))^{h} c_n = \frac{\Xi(0)}{h}c_n
\end{aligned}
\end{equation} 
Applying the recursion of \eqref{thpr0} to $k^*$ time step, we deduce that
\begin{equation}\nonumber
\begin{aligned}
e^* =&\Vert \mathbf{e}(k^*) \Vert \leq \Vert A(k^*-1) \Vert\ \Vert \mathbf{e}(k^*-1)\Vert +
 \Vert L_m(k^*-1) \mathbf{e}^m(k^*-1)\\
&\qquad -L_{q}(k^*-1) \mathbf{e}(k^*-1)\Vert\\
&\leq \Vert A(k^*-1) \cdots A(0) \Vert\ \Vert \mathbf{e}(0)\Vert + \Vert L_m(k^*-1) \mathbf{e}^m(k^*-1)\\
&\qquad -L_{q}(k^*-1) \mathbf{e}(k^*-1)\Vert+ \sum_{\kappa=2}^{k^*} \Vert \prod \limits_{c=1}^{\kappa-1} A(k^*-c)\Vert\\
& \qquad \Vert L_m(k^*-\kappa)\mathbf{e}^m(k^*-\kappa)-L_{q}(k^*-\kappa) \mathbf{e}(k^*-\kappa)\Vert\\
\end{aligned}
\end{equation}
 \begin{equation}\nonumber
 \begin{aligned}
 &\leq (1-\Xi(0))^{\frac{k^*}{h}} \Vert \mathbf{e}(0)\Vert + \sum_{\kappa=1}^{k^*} (1-\Xi(0))^{\frac{\kappa-1}{h}} \Vert L_m(k^*-\kappa)\\
&\qquad \mathbf{e}^m(k^*-\kappa)-L_{q}(k^*-\kappa) \mathbf{e}(k^*-\kappa)\Vert\\
&\leq(1-\Xi(0))^{\frac{k^*}{h}}\Vert \mathbf{e}(0) \Vert+\frac{1-(1-\Xi(0))^{\frac{k^*}{h}}}{1-(1-\Xi(0))^{\frac{1}{h}}} f a_{im}(0) \theta_{im}(0)\\
&\leq [(1-\Xi(0))^{\frac{k^*}{h}}+\frac{1-(1-\Xi(0))^{\frac{k^*}{h}}}{1-(1-\Xi(0))^{\frac{1}{h}}}\frac{\Xi(0)}{h}]c_n
 \end{aligned}
 \end{equation}

Based on Bernoulli's inequality, we have $e^*<c_n$, which is contrary to our assumption and thus, the safety condition is satisfied.

\noindent $\bullet$ Thirdly, we prove the condition (\ref{uub1}) when $\mathbf{e} \notin \mathcal{I}_U$.

Let $\underline{r_{im}}(k)=\min_{i\in \mathcal{V}_F} r_{im}(k)$, $\overline{r}_{il}(k)=\max_{i \in \mathcal{V}_F} r_{il}(k)$. For any $i \in \mathcal{V}_F$, we have
\begin{eqnarray*}
&r_{im}\geq (1-\eta)\underline{r_{im}}(k-1)+ \eta(c_m- \Vert \mathbf{e}(k)\Vert)^2\\
&r_{il}\leq (1-\eta)\overline{r}_{il}(k-1)+\eta \Vert \mathbf{e}(k)\Vert^2.
\end{eqnarray*}
According to the one-step recurrence formula (\ref{thpr0}), we have 

\begin{equation}
\begin{split}
&\Psi( \mathbf{e}(k+1))-\Psi( \mathbf{e}(k)) \leq \\
&-\frac{1}{h} \Vert \mathbf{e}(k)\Vert [\overline{q_i^s}^{-1} g \exp[-\sigma((1-\eta)\overline{r}_{il}(k-1)+\eta \Vert \mathbf{e}(k)\Vert^2)]]^{h}\\
& + \overline{q_i^s}^{-1} f \exp[-\sigma( (1-\eta)\underline{r_{im}}(k-1)+ \eta(c_m- \Vert \mathbf{e}(k)\Vert)^2)] \\
& (c_m- \Vert \mathbf{e}(k)\Vert) \triangleq F_D(\Vert \mathbf{e}(k)\Vert, \overline{r}_{il}(k-1),\underline{r_{im}}(k-1)).\\
\end{split}
\label{FD}
\end{equation}

\noindent The scale result is defined as a function $F_D: [ \epsilon, c_n] \times \mathbb{R}_+ \times \mathbb{R}_+ \rightarrow \mathbb{R}$.

Since $\Vert \mathbf{e}(k)\Vert \leq c_n$ for all $k$, we have $\overline{r_{il}}(k)\leq c_n^2$ and $\underline{r_{im}}(k) \geq (c_m-c_n)^2$ for all $k$. Define 
\begin{equation}\nonumber
F_\Psi( \Vert \mathbf{e}(k)\Vert)\triangleq F_D\vert _{\substack{\overline{r}_{il}(k-1)=c_n^2},\\ \underline{r_{im}}(k-1)=(c_m-c_n)^2}
\end{equation}
which is a $F_c$-like function in Lemma~\ref{mathlemma2} with the parameters $c_3=\eta h$, $c_4=\eta$. 

Under $\sigma \geq \underline{\sigma}_p$, for $ \sigma \geq \frac{1}{2(c_m-c_n)^2}$, there always exists $\eta \in [\frac{1}{2\sigma (c_m-c_n)^2}, 1]$, such that $c_m-(2 c_4\sigma)^{-\frac{1}{2}} \geq c_n$; For $ \sigma \geq \frac{\sqrt{6}+3}{2 c_m^2} \geq \frac{(1+\sqrt{\frac{1}{h}})^2}{2 c_m^2}$, there always exists $\eta \in [\frac{(1+\sqrt{h})^2}{2\sigma h c_m^2}, 1]$, such that $c_m-(2 c_4\sigma)^{-\frac{1}{2}} \geq (2 c_3\sigma)^{-\frac{1}{2}}$. Overall, for $\sigma \geq \underline{\sigma}_p$, there always exists $\eta \geq \max \{ \frac{(1+\sqrt{h})^2}{2\sigma h c_m^2}, \frac{1}{2\sigma (c_m-c_n)^2}\}$, such that
$c_m-(2 c_4\sigma)^{-\frac{1}{2}}\geq \max \{ (2 c_3\sigma)^{-\frac{1}{2}}, c_n\}$. Similarly, as if there exists $\sigma$ satisfying $F_m(\epsilon)\sigma \geq F_t(\epsilon)$, it can guarantee the existence of $\eta \in [\frac{\sigma^{-1}F_t(\epsilon)-F_m(c_n)}{(c_n-\epsilon)[2c_m+(h-1)(c_n+\epsilon)]},1]$, such that 

\begin{equation}
\begin{split}
&\ln[ \frac{1}{h} \epsilon (\overline{q_i^s}^{-1} g \exp[-\sigma((1-\eta)c_n^2+\eta \epsilon ^2)])^{h}]\geq \ln [ (c_m-\epsilon)\\ 
& \overline{q_i^s}^{-1} f \exp[-\sigma( (1-\eta) (c_m-c_n)^2+\eta(c_m- \epsilon)^2)]]. 
\end{split}
\label{thpr2}
\end{equation}

Hence, we have $F_\Psi(\epsilon)\leq 0$. In addition, $F_\Psi(c_n)\leq 0$ has been illustrated by (\ref{thpr1}).
Due to Lemma~\ref{mathlemma2}, it can be asserted that $F_\Psi(\Vert \mathbf{e}(k)\Vert) \leq 0$ for all $\Psi( \mathbf{e}(k)) \in (\epsilon, c_n]$. Then (\ref{uub1}) is established.
 
\noindent $\bullet$ Fourthly, for $\Vert \mathbf{e}(k)\Vert \leq \epsilon$, similar to the first part, it is not hard to ensure the condition (\ref{uub2}) and thus the proof is omitted. The only difference is that let $\Vert \mathbf{e} \Vert \leq \epsilon$ instead of $\Vert \mathbf{e} \Vert\leq c_n$. On the boundary of $\mathcal{I}_U$, the difference of $\Psi( \mathbf{e})$ is always negative under (\ref{thpr2}) and thus, (\ref{uub2}) is established.

Theorem~2 is established.\hfill $\blacksquare$
\end{pot}
\begin{remark}
In the foregoing proof, the inequality (\ref{FD}) is fairly conservative, because we utilize $r_{im}(k-1) \geq (c_m-c_n)^2$ and $q_i^s(k) \leq \overline{q_i^s}$ when $\Vert \mathbf{e}(k)\Vert=\epsilon$. Actually, the ultimate bound of $ \mathbf{e}$ is much less than $\epsilon$. For the system~(\ref{eq:error1}), $\Vert \mathbf{e} \Vert$ can reach $\mathcal{I}_U$ in finite time steps $k_f$, where
\begin{equation} \label{finite_time}
k_f <\frac{c_n-\epsilon}{\min \{F_\Psi(c_n), F_\Psi(\epsilon) \}}.
\end{equation}
For $k\geq k_f$, we can estimate that $ q_i^s(k) \leq d_{max}-f$ by $q_{im}(k) \ll q_{ij}(k)$. So $\Xi(k) \in [(\frac{g}{d_{max}-f})^h,1)$. In addition, for $k\geq k_f$, $A(k)$ tends to stabilize at a constant matrix $A_f$ satisfying $||A_f|| < 1$. It follows from (\ref{eq:error1}) that $|| \mathbf{e}(k)|| \leq ||A_f||^{k-k_f} ||\mathbf{e}(k_f)|| +b$ and thus, the actual ultimate bound is the remainder term $b$. By recursion, we have 

\begin{equation}
\begin{split}
b &\leq (q_i^s)^{-1} f F_i(c_m-\epsilon) \sum_{k=0}^\infty (1-\Xi(k))^{\frac{k}{h}}\\
&\leq (q_i^s)^{-1} f F_i(c_m-\epsilon) h \sum_{k=0}^\infty (1-\Xi(k))^{k}\\
&\leq (q_i^s)^{-1} f F_i(c_m-\epsilon) h \Xi^{-1}(k) \\
&\leq g^{-h}(d_{max}-f)^{h-1}f h F_i(c_m-\epsilon).
\end{split}
\label{B}
\end{equation}

\end{remark}
\noindent Generally, the actual ultimate bound $b \ll \epsilon$.

For better controller performance and wider application range, two corollaries are proposed.
The first corollary gives the LRUUB condition for the fastest restoration of the system when misbehaving events occur after the system converges. 
\begin{corollary}\label{co1}
Consider that a misbehaving event occurs at $k_0$ time step instead of occurring at $0$ time step, where $k_0 \geq k_f$ in (\ref{finite_time}). Under Assumption~\ref{as:topology} with $f>0$ and \ref{as:misbehaving}, if  parameters $\{ \sigma, \eta \}$ obtained from Algorithm~1 satisfy
\begin{equation}\label{co1eq1}
\eta \geq \frac{ \ln ( f (c_m-\epsilon))-\ln (\overline{q_i^s} c_n)}{\sigma (c_m-\epsilon)^2},\ \text{and}
\end{equation}
\begin{equation}\label{co1eq2}
\eta \geq \frac{(c_m-c_n)^2-h(c_n^2-\epsilon^2)}{(c_m-\epsilon)^2},
\end{equation}
the system~(\ref{dynamic}) with \eqref{foll_al} and \eqref{eq:aij} achieve LRUUB and normal followers restore at only $k_0+1$ time step. 
\end{corollary}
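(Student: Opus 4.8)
\emph{Plan of proof.} The guiding observation is that, since the misbehaving event is postponed to some $k_0\ge k_f$, the whole trajectory up to time $k_0-1$ is generated by the fault-free dynamics; hence by Theorem~\ref{thm:1} (the fault-free error system converges to $\mathbf 0$, so it lies inside $\mathcal I_U$ by time $k_f$ and a fortiori by $k_0$) we already have $\mathbf e(k_0)\in\mathcal I_U$, and moreover every evaluation function satisfies $0\le r_{ij}(k_0-1)\le c_n^2$. Consequently the proof runs through the same four blocks as the proof of Theorem~\ref{theorem2} --- the one-step recursion~\eqref{thpr0}, the safety bound $\|\mathbf e(k)\|\le c_n$, the strict decrease~\eqref{uub1} outside $\mathcal I_U$, and the invariance~\eqref{uub2} of $\mathcal I_U$ --- and the only place where something genuinely new must be done is the single transition from $k_0$ to $k_0+1$. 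Once $\mathbf e(k_0+1)\in\mathcal I_U$ has been secured, blocks (ii)--(iv) are copied from the proof of Theorem~\ref{theorem2} with $\mathcal I_U$ playing the role of the invariant set, which is exactly the ``restoration at $k_0+1$'' claimed.

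\emph{The decisive step $k_0\mapsto k_0+1$.} At time $k_0$ the misbehaving states switch on, so by Assumption~\ref{as:misbehaving} $\|\mathbf e^m(k_0)\|\ge c_m$ and, because $\|\mathbf e(k_0)\|\le\epsilon$, we have $|\theta_{ij}(k_0)|\ge c_m-\epsilon$ for every misbehaving neighbour $j$ of a normal agent $i$. Using only $r_{ij}(k_0-1)\ge0$, the filter update then yields $r_{ij}(k_0)\ge\eta(c_m-\epsilon)^2$, hence $\overline q_{im}(k_0)\le\exp(-\sigma\eta(c_m-\epsilon)^2)$; symmetrically, $r_{il}(k_0)\le(1-\eta)c_n^2+\eta\epsilon^2$ for the (at least $g$) normal neighbours, which lower-bounds the confidence of the normal channels. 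Substituting these two estimates into the one-step recursion~\eqref{thpr0} at $k=k_0$ and taking logarithms exactly as in the passage from~\eqref{thpr1} to~\eqref{thpr2}, the inequality $\|\mathbf e(k_0+1)\|\le\|\mathbf e(k_0)\|$ (equivalently $\mathbf e(k_0+1)\in\mathcal I_U$) is seen to hold provided
\[
\sigma\eta(c_m-\epsilon)^2 \geq \ln\frac{f(c_m-\epsilon)}{\overline{q_i^s}\,c_n}
\quad\text{and}\quad
\eta(c_m-\epsilon)^2+h(c_n^2-\epsilon^2) \geq (c_m-c_n)^2 ,
\]
which, after rearrangement, are precisely conditions~\eqref{co1eq1} and~\eqref{co1eq2}. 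Intuitively the first inequality caps the amplitude of the single burst of misbehaving influence that the filter has not yet learned to suppress, while the second guarantees that the gain obtained in the contraction term --- the normal-channel confidences at $k_0$ are governed by the clean value $(1-\eta)c_n^2+\eta\epsilon^2$ rather than by $c_n^2$ --- compensates for the loss incurred in the misbehaving term by having only $r_{ij}(k_0)\ge\eta(c_m-\epsilon)^2$ in place of the $(c_m-c_n)^2$ that is available throughout the proof of Theorem~\ref{theorem2}.

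\emph{Conclusion and the hard part.} For $k\ge k_0+1$ the misbehaving agents are active and $\mathbf e(k_0+1)\in\mathcal I_U$, so we are in the situation of the fourth block of the proof of Theorem~\ref{theorem2}: on the boundary of $\mathcal I_U$ the increment of $\Psi(\mathbf e(\cdot))$ is nonpositive by~\eqref{thpr2}, which holds because $\{\sigma,\eta\}$ are produced by Algorithm~\ref{alg}, so $\mathcal I_U$ is forward invariant; safety for $k>k_0$ follows from the same contradiction argument used in the second block, now started at $k_0$ with $\mathbf e(k_0+1)\in\mathcal I_U\subseteq\mathcal I_S$ already in hand. Together these give LRUUB with the normal followers restored to $\mathcal I_U$ at step $k_0+1$. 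The main obstacle is exactly this single step $k_0\mapsto k_0+1$: the $h$-step contraction coefficient $\Xi$ of Lemma~\ref{xilemma} cannot be invoked over one step, and the bound $\underline{r_{im}}\ge(c_m-c_n)^2$ that drives the proof of Theorem~\ref{theorem2} is unavailable at $k_0$ because the confidence filter of each misbehaving channel has just been reset to an a priori unknown nonnegative value; making the trade-off between the improved normal-channel confidences and this degraded misbehaving bound quantitative, and checking that it is settled exactly by the two extra conditions~\eqref{co1eq1}--\eqref{co1eq2}, is the crux of the argument, everything else being a rerun of the proof of Theorem~\ref{theorem2}.
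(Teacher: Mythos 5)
Your overall architecture (fault-free convergence up to $k_0$ by Theorem~\ref{thm:1}, so $\mathbf e(k_0)\in\mathcal I_U$ and the filter states are ``clean''; then a single decisive transition $k_0\mapsto k_0+1$; then a rerun of Theorem~\ref{theorem2}) matches the paper, and you correctly recover the algebraic form of both conditions~\eqref{co1eq1} and~\eqref{co1eq2}. But there is a genuine gap in what you claim those conditions deliver. Your first displayed inequality, $\sigma\eta(c_m-\epsilon)^2\ge\ln\bigl(f(c_m-\epsilon)/(\overline{q_i^s}\,c_n)\bigr)$, rearranges to $\overline{q_i^s}^{-1}f\exp(-\sigma\eta(c_m-\epsilon)^2)(c_m-\epsilon)\le c_n$ --- the bound on the one-step burst is $c_n$, not $\epsilon$. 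So \eqref{co1eq1} gives only the \emph{safety} bound $\|\mathbf e(k_0+1)\|\le c_n$; it does not give $\|\mathbf e(k_0+1)\|\le\|\mathbf e(k_0)\|\le\epsilon$, and hence does not put $\mathbf e(k_0+1)$ back in $\mathcal I_U$. Since $\epsilon$ may be arbitrarily small relative to $c_n$, the burst can genuinely eject the error from $\mathcal I_U$, and your concluding block --- which treats $\mathcal I_U$ as forward invariant from $k_0+1$ and reduces everything to the fourth block of Theorem~\ref{theorem2} --- does not apply as written.

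The roles of the two conditions are therefore different from what you assign them. In the paper's argument, \eqref{co1eq1} alone secures $e^*:=\|\mathbf e(k_0+1)\|\le c_n$ (safety after the burst), while \eqref{co1eq2} is used one step later: it guarantees the ratio inequality \eqref{eq:co1}, i.e.\ that $F_D(e^*,\epsilon^2,\eta(c_m-\epsilon)^2)\le F_\Psi(e^*)\le 0$, where the normal-channel filter is still at the clean value $(1-\eta)\epsilon^2+\eta(\cdot)^2$ but the misbehaving-channel filter has only reached $\eta(c_m-\epsilon)^2$ rather than $(c_m-c_n)^2$. This shows $\Psi(\mathbf e(k_0+2))-\Psi(\mathbf e(k_0+1))\le 0$, i.e.\ the monotone decrease \emph{resumes} at $k_0+1$; ``restore at $k_0+1$'' means the disturbance upsets only one step, after which $k_0+1$ is treated as a fresh initial time with $\|\mathbf e(k_0+1)\|\le c_n$ and Theorem~\ref{theorem2} is invoked. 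To repair your proof, replace the claim ``$\|\mathbf e(k_0+1)\|\le\|\mathbf e(k_0)\|$, equivalently $\mathbf e(k_0+1)\in\mathcal I_U$'' by the two-stage statement above; your trade-off discussion of clean normal-channel confidences versus the degraded misbehaving bound is the right intuition for \eqref{co1eq2}, but it must be applied to the decrement at $k_0+1$, not to re-entry into $\mathcal I_U$ at $k_0+1$.
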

\begin{proof}
If the misbehaving event occurs at $k_0$ time step, the system is disturbed by $r_{im}(k_0)\geq (1-\eta)\epsilon^2+\eta(c_m-\epsilon)^2$. Assume that $\Vert \mathbf{e}(k_0+1)\Vert =e^*$ estimated by
$e^*\leq | f a_{im}(k_0)\theta_{im}(k_0)| \leq \overline{q_i^s}^{-1} f \exp(-\sigma \eta (c_m-\epsilon)^2)(c_m-\epsilon)$. 
%According to (\ref{thpr2}), we have 
%$e^* \leq \exp[-\sigma(1-\eta)(c_m-c_n)^2] \frac{1}{h} \epsilon (\overline{q_i^s}^{-1} g \exp[-\sigma((1-\eta)c_n^2+\eta \epsilon ^2)])^{h}$.
Passing (\ref{co1eq1}) into it, $e^*\leq c_n$ is established and the safety condition is satisfied. 

Next, we will prove the condition~\eqref{uub1} and illustrate the restoration of normal followers, which is ensured by 
\begin{equation}\nonumber
\Psi( \mathbf{e}(k+1))-\Psi( \mathbf{e}(k)) \leq F_D(e^*, \epsilon^2, \eta(c_m-\epsilon)^2)\leq 0.
\end{equation} 
for $k\geq k_0+1$. The following inequality
\begin{equation}\label{eq:co1}
\begin{split}
&\frac{\exp[-\sigma h((1-\eta)\epsilon^2+\eta x^{*2})]}{\exp[-\sigma((1-\eta)\eta(c_m-\epsilon)^2+\eta(c_m-e^*)^2)]}\geq \\
&\frac{\exp[-\sigma h((1-\eta)c_n^2+\eta x^{*2})]}{\exp[-\sigma((1-\eta)(c_m-c_n)^2+\eta(c_m-e^*)^2)]}. 
\end{split}
\end{equation}
can be ensured by (\ref{co1eq2}).
Applying it into (\ref{FD}), we can directly obtain $\Psi( \mathbf{e}(k_0+2))-\Psi( \mathbf{e}(k_0+1)) \leq F_\Psi(e^*)\leq 0$. Then $k_0+1$ can be regarded as the initial time step and the initial states satisfy $\Vert \mathbf{e}(k_0+1) \Vert \leq c_n$. According to Theorem~\ref{theorem2}, the system can achieve LRUUB.
\end{proof}

According to Lemma~\ref{xilemma}, the agent closer to the leader generally converges faster. In this case, the second corollary gives a more general topological condition to relax Assumption~\ref{as:topology}.

% Replace the certain edges in $\mathcal{E}'=\{(j,i) \in \mathcal{E}| i\in\mathcal{V}^{(n)},j \in \mathcal{V}^{(n-1)},\ n=2,\cdots,h\} $ by the same amount new edges in $\mathcal{E}"=\{ (p,i)| i\in\mathcal{V}^{(n)}, p \in \mathcal{V}^{(n')},\ n=2,\cdots,h,\ n'>n-1\} $.
\begin{corollary}\label{coro}
Under Assumption~\ref{as:misbehaving}, consider a networks of agents in $\mathcal{G}=\{\mathcal{V},\ \mathcal{E}\}$, of which some agents $i$ in $\mathcal{V}^{(n)}$ have less than $g+f$ neighbors in $\mathcal{V}^{(n-1)}$. Let $\mathcal{E}'=\{(j,i)| (j,i) \in \mathcal{E},i\in\mathcal{V}^{(n)},j \in \mathcal{V}^{(n-1)},\ n=2,\cdots,h\} $ and $\mathcal{E}"=\{ (p,i)| (p,i)\notin \mathcal{E}, i\in\mathcal{V}^{(n)}, p \in \mathcal{V}^{(n')},\ n=2,\cdots,h,\ n'>n-1\}$, which have the same number of entries. If the leaders are static and $\mathcal{G}'=\{\mathcal{V},\ \mathcal{E}\setminus \mathcal{E}' \cup \mathcal{E}" \}$ satisfies Assumption~\ref{as:topology}, the origin in $\mathcal{G}$ can achieve LRUUB with parameters $\{ \sigma,\eta\}$ designed by Algorithm~1 using input parameters of $\mathcal{G}'$.
\end{corollary}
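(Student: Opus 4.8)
The plan is to show that the error dynamics on $\mathcal{G}$ are, step by step, dominated (in the infinity norm) by the error dynamics on the auxiliary graph $\mathcal{G}'$, which by construction satisfies Assumption~\ref{as:topology} and hence achieves LRUUB by Theorem~\ref{theorem2}. The key observation underlying Corollary~\ref{co1} and Lemma~\ref{xilemma} is that an agent $i\in\mathcal{V}^{(n)}$ contracts toward the leaders at a rate governed by how many of its incoming links originate from strictly closer layers; what matters in the contraction estimate \eqref{thpr0} is only the count $\lvert\mathcal{V}_i^{in}\cap(\bigcup_{m<n}\mathcal{V}^{(m)})\rvert$ of ``shortcut toward the leader'' neighbors, not their precise layer. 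First I would make this precise: replacing an edge $(j,i)\in\mathcal{E}'$ with $j\in\mathcal{V}^{(n-1)}$ by an edge $(p,i)\in\mathcal{E}''$ with $p\in\mathcal{V}^{(n')}$, $n'>n-1$, only weakens the per-step contraction available to $i$ (the influence of a layer-$n'$ neighbor on a layer-$n$ agent is no better than that of a layer-$(n-1)$ neighbor when propagated over $h$ steps), so any uniform decay rate $\Xi$ valid for $\mathcal{G}'$ is also valid for $\mathcal{G}$.

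Concretely, I would argue as follows. Since $\mathcal{E}'$ and $\mathcal{E}''$ have the same cardinality and the substitution is done so that $\mathcal{G}'$ meets Assumption~\ref{as:topology}, each normal follower in $\mathcal{G}$ still has at least $g$ normal neighbors lying in layers no farther than its own minus one after accounting for the rerouted links; hence for each $i\in\mathcal{V}_F$ there remain at least $g$ directed paths of length at most $h$ from normal leaders to $i$ in $\mathcal{G}$. Invoking Lemma~\ref{xilemma} and Proposition~4.1 of \citep{CSFRL} on $\mathcal{G}$ with the \emph{same} $\overline{q_i^s}$, $\underline{q_l}$ bounds used for $\mathcal{G}'$ (these bounds depend only on $d_{max}$, $g$, $f$, $h$, $c_m$, $c_n$, which are unchanged), we obtain the one-step recursion \eqref{thpr0} verbatim for the error vector $\mathbf{e}(k)$ on $\mathcal{G}$. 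From here the four-part argument of the proof of Theorem~\ref{theorem2}—the initial-step estimate via $F_m(c_n)\sigma\ge F_t(c_n)$, the inductive safety proof, the Lyapunov decrease \eqref{uub1} outside $\mathcal{I}_U$ via Lemma~\ref{mathlemma2}, and the boundary condition \eqref{uub2}—carries over unchanged, because every inequality in that proof is expressed solely in terms of the graph parameters of $\mathcal{G}'$ and the interval bounds. Therefore the origin of the error system on $\mathcal{G}$ is LRUUB with the parameters $\{\sigma,\eta\}$ produced by Algorithm~\ref{alg} on the input parameters of $\mathcal{G}'$.

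The main obstacle is the domination step in the first paragraph: making rigorous the claim that rerouting an incoming edge of a layer-$n$ agent to a \emph{farther} layer does not destroy the depth-$h$ contraction estimate. One must check that in the telescoped product $A(k+h-1)\cdots A(k)$ the relevant entry remains bounded below by $(g\cdot\underline{a_l})^h$: a path from a normal leader to $i$ that now passes through a layer-$n'$ vertex with $n'>n-1$ is longer, but since the depth of $\mathcal{G}'$ is still at most $h$ (this needs to be verified, or $h$ replaced by the depth of $\mathcal{G}'$, which should be taken as the algorithm input), such a path still has length $\le h$, and the product of $h$ weights each $\ge\underline{a_l}$ gives the required bound. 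A secondary subtlety is that $d_{max}$ for $\mathcal{G}$ may differ from that of $\mathcal{G}'$; one should either assume the in-degrees are preserved by the swap (natural, since $\lvert\mathcal{E}'\rvert=\lvert\mathcal{E}''\rvert$ and the swap is applied agent-by-agent) or take $d_{max}$ as the maximum over both graphs. Once these bookkeeping points are settled, no new analysis is needed beyond Theorem~\ref{theorem2}.
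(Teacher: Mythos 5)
Your proposal reaches the paper's conclusion by a genuinely different route. You re-establish the depth-$h$ contraction estimate (Lemma~\ref{xilemma} and the one-step recursion \eqref{thpr0}) directly on $\mathcal{G}$ via a path-length argument --- every leader-to-follower path in $\mathcal{G}$ is no longer than its counterpart in $\mathcal{G}'$, so the relevant entries of $A(k+h-1)\cdots A(k)$ remain bounded below by $(g\,\underline{a_{l}})^h$ --- and then rerun the four-part proof of Theorem~\ref{theorem2} wholesale on $\mathcal{G}$. The paper never re-derives the matrix-product bound on $\mathcal{G}$: it keeps the scalar per-agent decrement bound $|e_p(k+1)|-|e_p(k)|\leq -\frac{1}{n'}(g\,\underline{a_{l}}(k))^{n'}|e_p(k)|+(\cdot)$, observes that this bound is monotone increasing in the layer index $n'$ (agents in closer layers contract at least as fast), and argues that swapping a farther neighbor $p$ for a closer neighbor $j$ can only accelerate agent $i$, so LRUUB transfers from $\mathcal{G}'$ to $\mathcal{G}$ by comparison. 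Your route buys a self-contained verification on $\mathcal{G}$, but it carries a burden you only partially discharge: the induction proving Lemma~\ref{xilemma} uses ``at least $g$ neighbors in $\mathcal{V}^{(n-1)}$'' at every layer, which is precisely what the deficient agents of $\mathcal{G}$ violate, so the lemma cannot be invoked verbatim; your path argument controls the \emph{length} of each leader-to-$i$ path but not the multiplicity factor $g^{h-1}$ that the layer-by-layer induction accumulates, and that count would need to be redone for mixed-layer neighborhoods (your interim claim that each follower of $\mathcal{G}$ retains $g$ normal neighbors in strictly closer layers is false for the deficient agents as the corollary is stated). To be fair, the paper's comparison step --- ``a faster convergence neighbor prompts faster convergence of agent $i$'' --- is asserted rather than proved, so the two arguments are at a comparable level of rigor; your identification of the telescoped-product entry as the main obstacle, and your bookkeeping remarks on the depth and $d_{max}$ of the two graphs, are accurate and are points the paper also leaves implicit.
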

\begin{proof}
According to Lemma~\ref{xilemma}, since $\mathcal{G}'$ satisfies Assumption~\ref{as:topology}, the convergence rate of any $p \in \mathcal{V}^{(n')}$ in the graph $\mathcal{G}'$ can be scaled by $ |e_p(k+1)|-|e_p(k)|\leq -\frac{1}{n'} (g \underline{a_{l}}(k))^{n'} |e_p(k)|+\overline{q_i^s}^{-1} f \exp(-\sigma(c_m-c_n)^2)(c_m-c_n)$. According to Theorem~\ref{theorem2}, the system in $\mathcal{G}'$ with parameters $\{\sigma,\eta\}$ designed by Algorithm~1 can achieve LRUUB and $|e_p(k+1)|-|e_p(k)|<0$ for $e_p(k) \notin \mathcal{I}_U$. Since $\frac{\partial |e_p(k+1)|-|e_p(k)|}{\partial n'} >0$ for all $n'>1$, the agent closer to the leader converges faster. For $j\in \mathcal{V}^{(n-1)}$ and $p \in \mathcal{V}^{(n')}, n'>n-1$, we have $|e_j(k+1)|-|e_j(k)| < |e_p(k+1)|-|e_p(k)|< 0$. Let $e_i'$ be the error in the case where $i$ with $(p,i) \in \mathcal{E}\setminus \mathcal{E}' \cup \mathcal{E}"$ in $\mathcal{G}'$, and $e_i$ be the error in the case where $(j,i) \in \mathcal{E}$ in $\mathcal{G}$. We have $|e_i(k+2)|-|e_i(k+1)|\leq |e_i'(k+2)|-|e_i'(k+1)|<0$. A faster convergence neighbor prompts faster convergence of agent $i$. Hence, as if the system in $\mathcal{G}'$ can achieve LRUUB, the origin in $\mathcal{G}$ can achieve LRUUB as well.
\end{proof}
% According to (\ref{FD}), a faster convergence of $i$ can also be proved with misbehaving agents.

\section{Simulations}
We conduct simulations in MATLAB/Simulink to demonstrate the effectiveness and resilience of our proposed protocol in the MG system shown in Fig.~\ref{fig:electrical}. The electrical parameters and loads are the same as in \citep{IEEE33}, with the 10-11 bus link changed to the 4-11 bus link. After opening Breaker 1, the IEEE 33-bus system transitions to islanded operation mode. The droop coefficients of DESs in $\{19,11,7,20,13,8,21,15,9,17,10 \}$ are set as $5\cdot 10^{-4}\ rad/(s \cdot kW)$ and those in $\{28,30,32,22,23,24\}$ are set as $1\cdot 10^{-3}\ rad/(s \cdot kW)$. In a ($f=3$)-local misbehaving network, misbehaving events may occur in the $5$-robust ($g=2$) digraph $\mathcal{G}=\{\mathcal{V}, \ \mathcal{E} \}$ in Fig.~\ref{fig:communication}. The tertiary controller transmits the reference values in 5 channels to each DES in $V^{1}$. The sampling time of the secondary control is set to $t_s=0.01s$.

We first design parameters according to Algorithm~1. The system parameters are defined as $c_n=\pi\ rad/s$, $c_m=3.1\pi\ rad/s$, and $\epsilon=0.001\ rad/s$ according to Chinese national standards. The basic scope of $\sigma$ is $\mathcal{I}_\sigma^p=[0.0287, \infty]$. We calculate $\mathcal{I}_\sigma^f=[0.8911,\infty)$. Setting $\sigma=0.9$, we have $\mathcal{I}_\eta=[0.0427, 1]$ and set $\eta=0.4$. Then we conduct the following simulation scenarios:

1) $0s \leq t< 2s$: All DESs are intact and start with initial frequencies within the safety interval. Breaker 1 is open and Breaker 2 is closed. The reference values are set to $w^{l}=100 \pi\ rad/s$ and $P^{l}=0.1852\ rad/s$.

2) $2s \leq t< 4s$: The first class of misbehaving events occur on DESs 19, 20, and 21. The second class of misbehaving events occur on DESs 22, 23 and 24, and on at most 2 channels of the tertiary controller to each DES in $V^{1}$. In this stage, the network is a 2-local misbehaving network. The $w^l$ is set to $99.6 \pi\ rad/s$ and $P^{l}=0.24755\ rad/s$. 

3) $4s \leq t< 6s$: Breaker 2 is open, such that the loads on bus 19, 20, 21 are cut off from the MG. In addition to the misbehaving events in the previous stage, the second class of misbehaving events also occurs on DESs 28 and 32, making the network a 3-local misbehaving network. The $w^l$ is set to $100.4 \pi\ rad/s$ and $P^{l}=0.2249\ rad/s$. 

\begin{figure}[!t]
\centering
\includegraphics[width=2.8in]{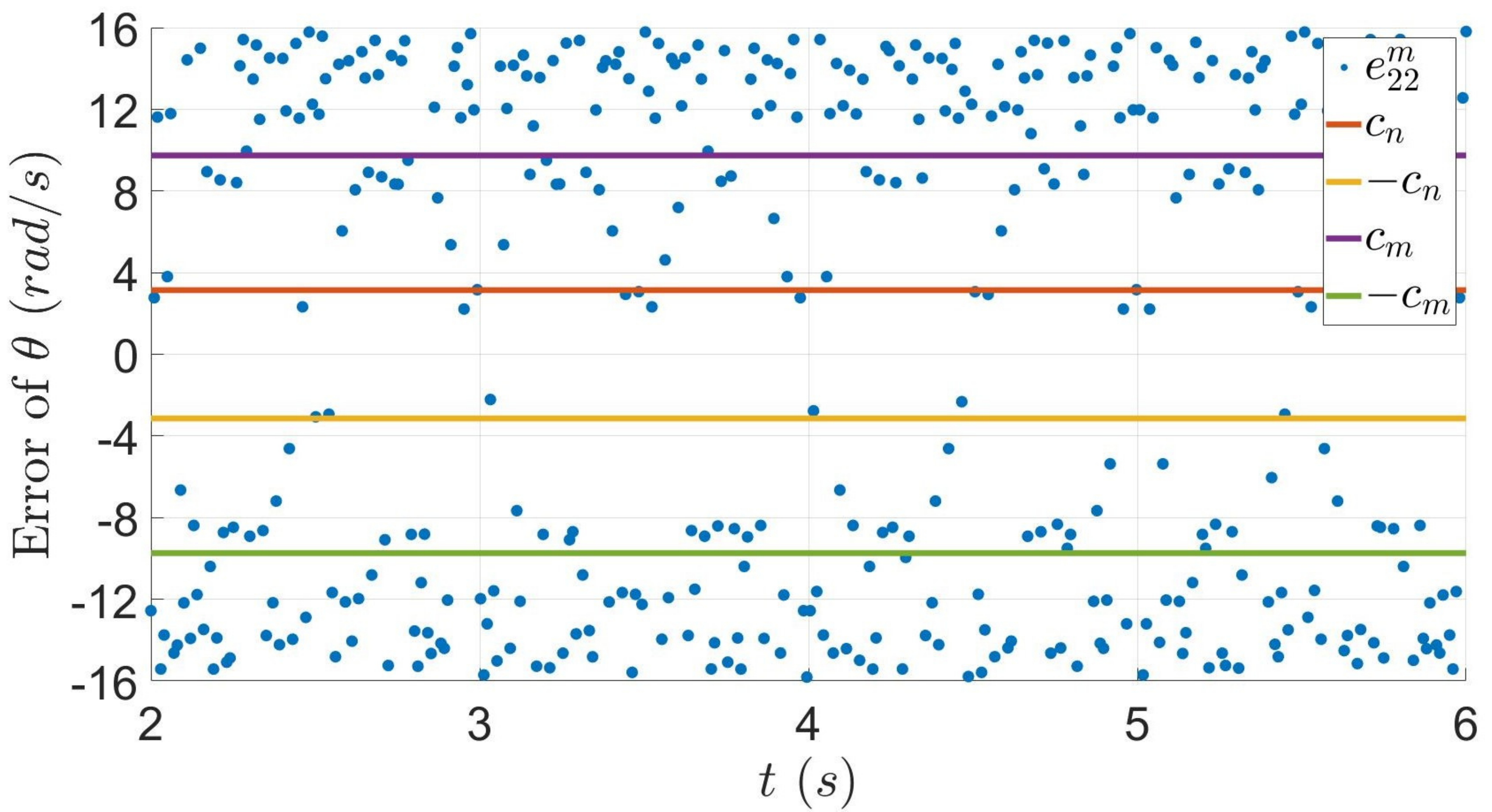}\\
\caption{The misbehaving errors of DES 22.}
\label{fig:noise}
\end{figure}
As shown in Fig.~\ref{fig:noise}, we take DES 22 as an example to describe the misbehaving errors. Suppose the square of misbehaving errors can be represented by a superposition of three frequency components: $\omega_0=0$, $\omega_1=2\pi\cdot 20.5$ and $\omega_2=2\pi\cdot 75.5$. For all $i\in \mathcal{V}_M$, they satisfy

$(e_i^m(k))^2\geq |120+20\exp(j\cdot 41\pi t_s k)+11 \exp(j\cdot 151\pi t_s k)|.$

\begin{figure}[!b]
\centering
\subfigure[The angular frequency $\omega$.]
{\includegraphics[width=2.8in]{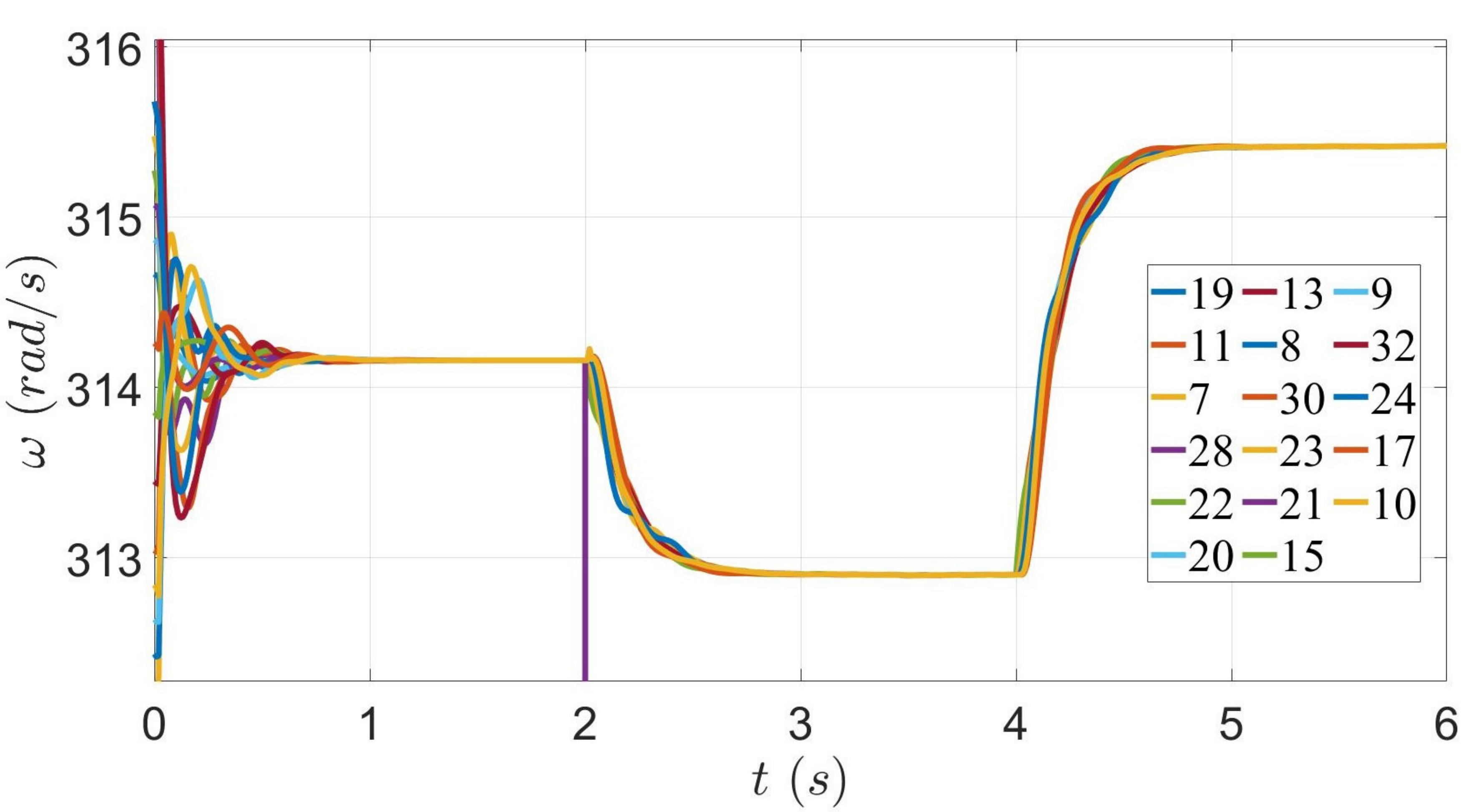}}
\subfigure[The active power $P$.]
{\includegraphics[width=2.8in]{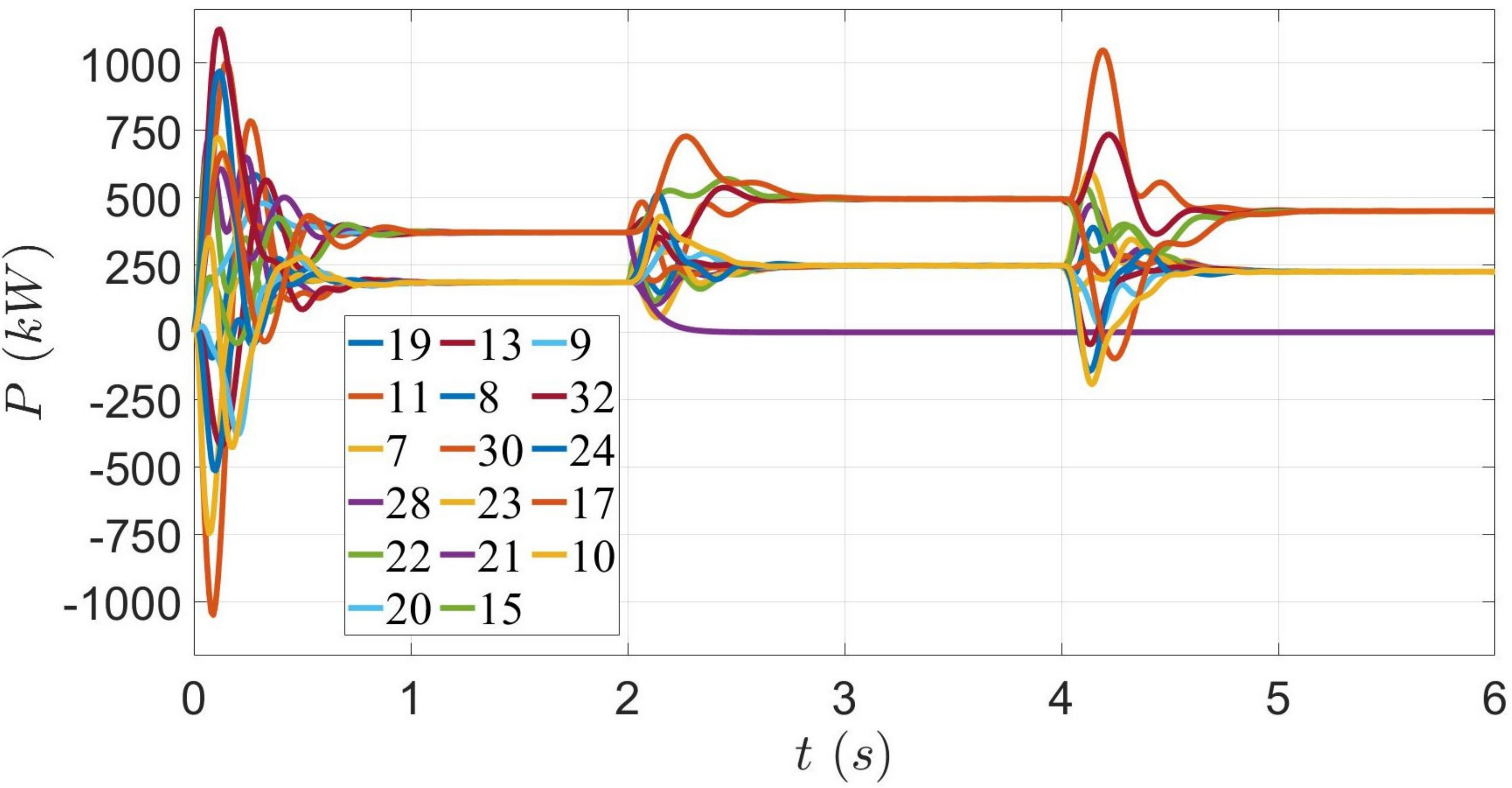}\label{fig:power performance}} 
\caption{The performance of DESs using DACC with $\{ \sigma=0.9,\ \eta=0.4 \}$.}\label{fig:DACC performance}
\end{figure}

Under $\eta=0.4$, we calculate the condition in Assumption~\ref{as:misbehaving}: $\chi_0-\sum_{\omega=\omega_0}^{K_n \omega_0}\chi_\omega[1+((\eta^{-1}-1)t_s \omega)^2]^{-\frac{1}{2}}\geq 95.4985>c_m^2$. Under Assumption~\ref{as:misbehaving}, there exists $|e_i^m(k)| < c_m$ for some $k$.

In the above simulation scenarios, the performance of DESs using DACC with $\{ \sigma=0.9,\ \eta=0.4 \}$ is depicted in Fig.~\ref{fig:DACC performance}, where the proportions of active power adhere closely to $P_{i}/P_{j}=m_{i}/m_{j}$. The results show that DACC can achieve LRUUB against the HDMA. The actual ultimate bound in \eqref{B} is calculated by $b\leq 2\times 10^{-6}$, and the errors of normal agents reach the bound of $\mathcal{I}_U$ in about $1.5s$. In contrast, if we adopt the control parameters $\{ \sigma=0.4,\ \eta=0.4 \}$, which do not satisfy Algorithm~1, the system fails to achieve UUB, as shown in Fig.~\ref{fig:diff_paras}. Similarly, if we do not adopt the discounted accumulation mechanism with $\eta=1$, which dose not satisfy Assumption~\ref{as:misbehaving}, the system cannot maintain UUB either. Hence, both Algorithm~1 and the necessary conditions must be satisfied.
\begin{figure}[!t]
\centering
\includegraphics[width=2.8in]{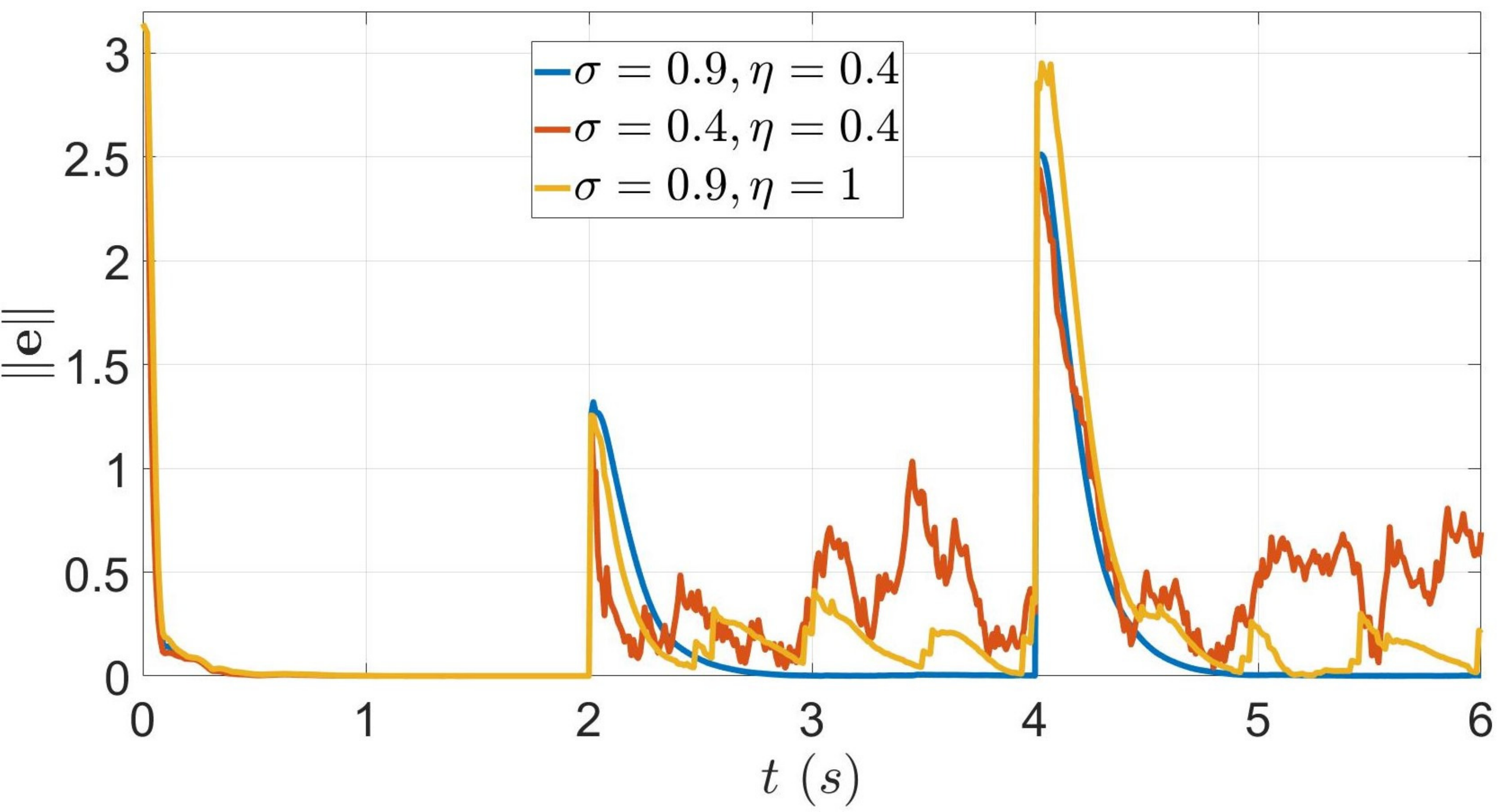}\\
\caption{Comparison results of DACC with different parameters.}
\label{fig:diff_paras}
\end{figure}

Considering the same simulation scenarios, we compare the performance of DACC with Tradition consensus (TC), W-MSR in \citep{RACR}, QW-MSR in \citep{RRQC} and Hidden layer method in \citep{FDIARD}. TC employs fixed weights $a_{ij}$ as constants, represented by the system~(\ref{dynamic}) with \eqref{foll_al} and \eqref{eq:aij} with $\sigma=0$. Both W-MSR and QW-MSR require a $(2f+1)$-robust graph, prompting a control trial in a 2-local misbehaving network. Hidden layer method requires the misbehaving states to be upper bounded and an absolutely secure observer. Note that there exists a 2-local misbehaving network in the second stage and a 3-local misbehaving network in the third stage. Due to the large errors after system divergence, a function $\ln(1+\Vert \mathbf{e} \Vert)$ is used to compare the results in this experiment. As shown in Fig.~\ref{fig:diff_methods}, DACC achieves LRUUB, while TC and Hidden layer method fail to achieve UUB in the both second and third stage. Furthermore, although W-MSR successfully achieves UUB in the second stage, it is evident that DACC outperforms MSR-type methods in the presence of HDMA. Besides, since QW-MSR is used in resilient quantized consensus problems, it may produce fluctuations in the convergence of decimal places. Hence, DACC has outstanding advantages in the context of our concerned misbehaving agents. 

\begin{figure}[!t]
\centering
\includegraphics[width=2.8in]{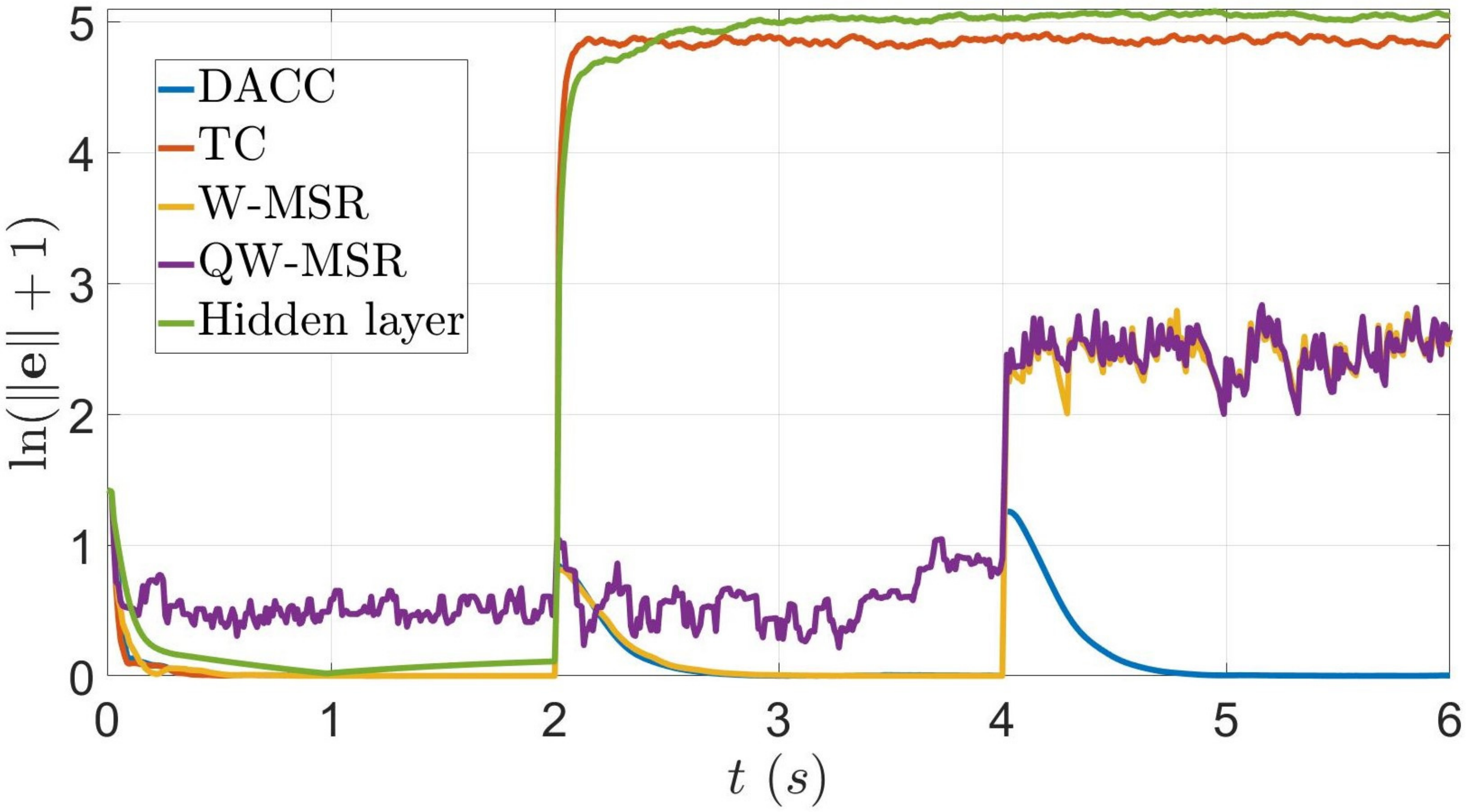}
\caption{Comparison results of ln$(1+\Vert \mathbf{e} \Vert)$ using different methods.}
\label{fig:diff_methods}
\end{figure}

To verify Corollary~\ref{co1}, whose condition is satisfied by $\eta=0.4$, we construct simulations where the misbehaving events occur at different time steps. As shown in Fig.~\ref{fig:different_time}, if misbehaving events occur during the process of the system approaching a steady state, the fluctuations generated by misbehaving agents are smaller. Especially when the system has already achieved UUB, the misbehaving events have almost no impact. The results demonstrate that DACC guarantees resilient performance regardless of when the misbehaving events occur.
% Since $\Vert B \Vert \ll \epsilon$, the states of normal followers barely change when the misbehaving event occurs. 
\begin{figure}[!t]
\centering
\includegraphics[width=2.8in]{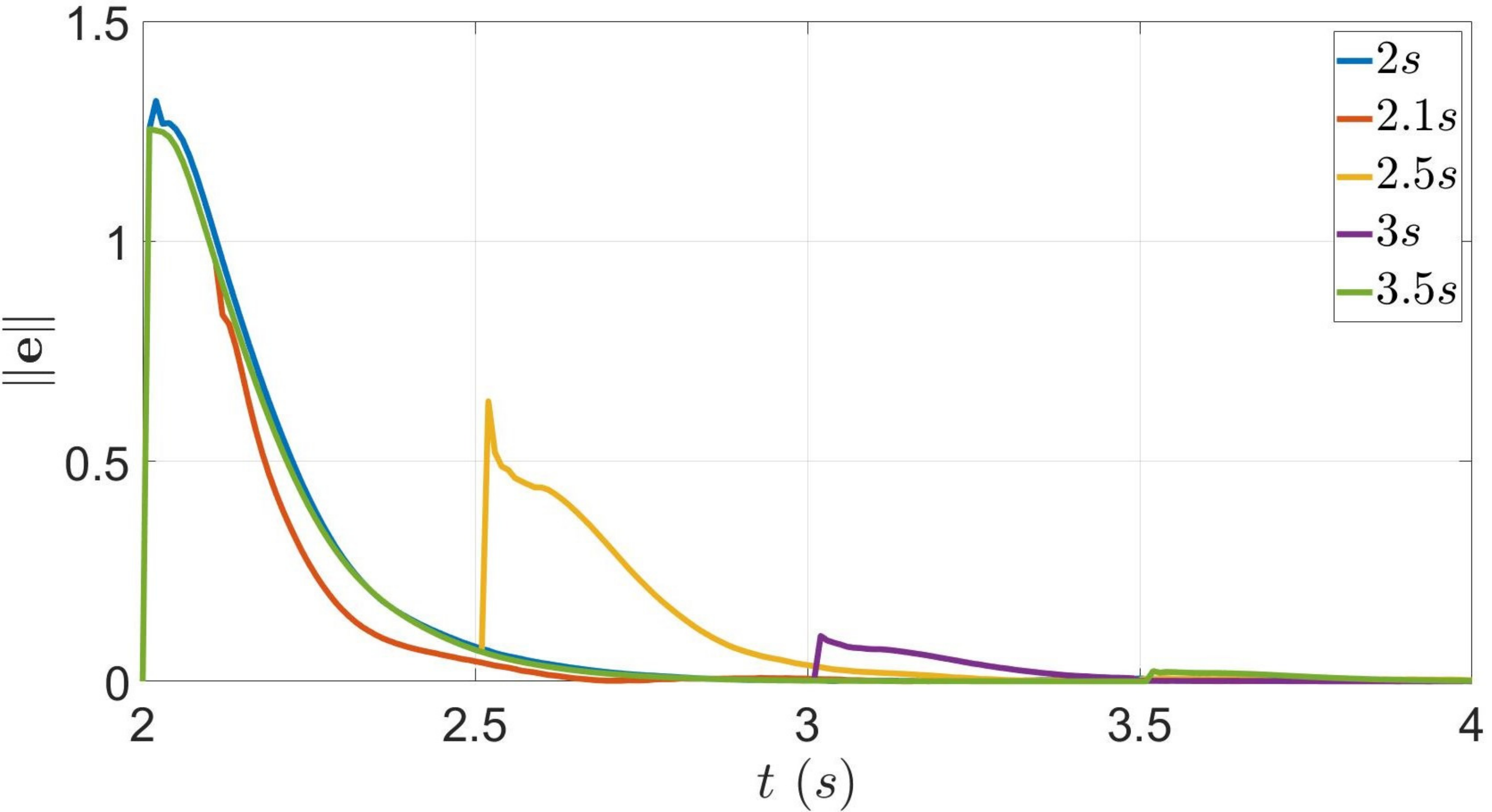}
\caption{DACC performance in the case where misbehaving events occur at different time steps in the second stage.}
\label{fig:different_time}
\end{figure}

To investigate how Corollary~\ref{coro} relaxes restrictions on the graph, we consider two cases in comparison with the original digraph $\mathcal{G}$. In Case 1, we modify $\mathcal{G}$ to form a new digraph $\mathcal{G}_1$ by replacing the edge $(13,15)$ with a new edge $(11, 15)$. In Case 2, we further modify $\mathcal{G}_1$ to create a digraph $\mathcal{G}_2$ by replacing the edge $(9,10)$ with a new edge $(7, 10)$. We maintain the same control parameters and simulation scenarios as in Fig.~\ref{fig:DACC performance} for all three cases. As shown in Fig.~\ref{difftopo}, the convergence rates are faster for agents closer to the leaders in the second stage. When designing control parameters under a graph $\mathcal{G}'$ that does not satisfy Assumption~\ref{as:topology}, such as in Case 1 and Case 2, we can construct a virtual graph according to Corollary~\ref{coro} and design the control parameters under this virtual graph. If the system under the virtual graph converges, the system under $\mathcal{G}'$ can also converge.
\begin{figure}[!t]
\centering
\includegraphics[width=2.8in]{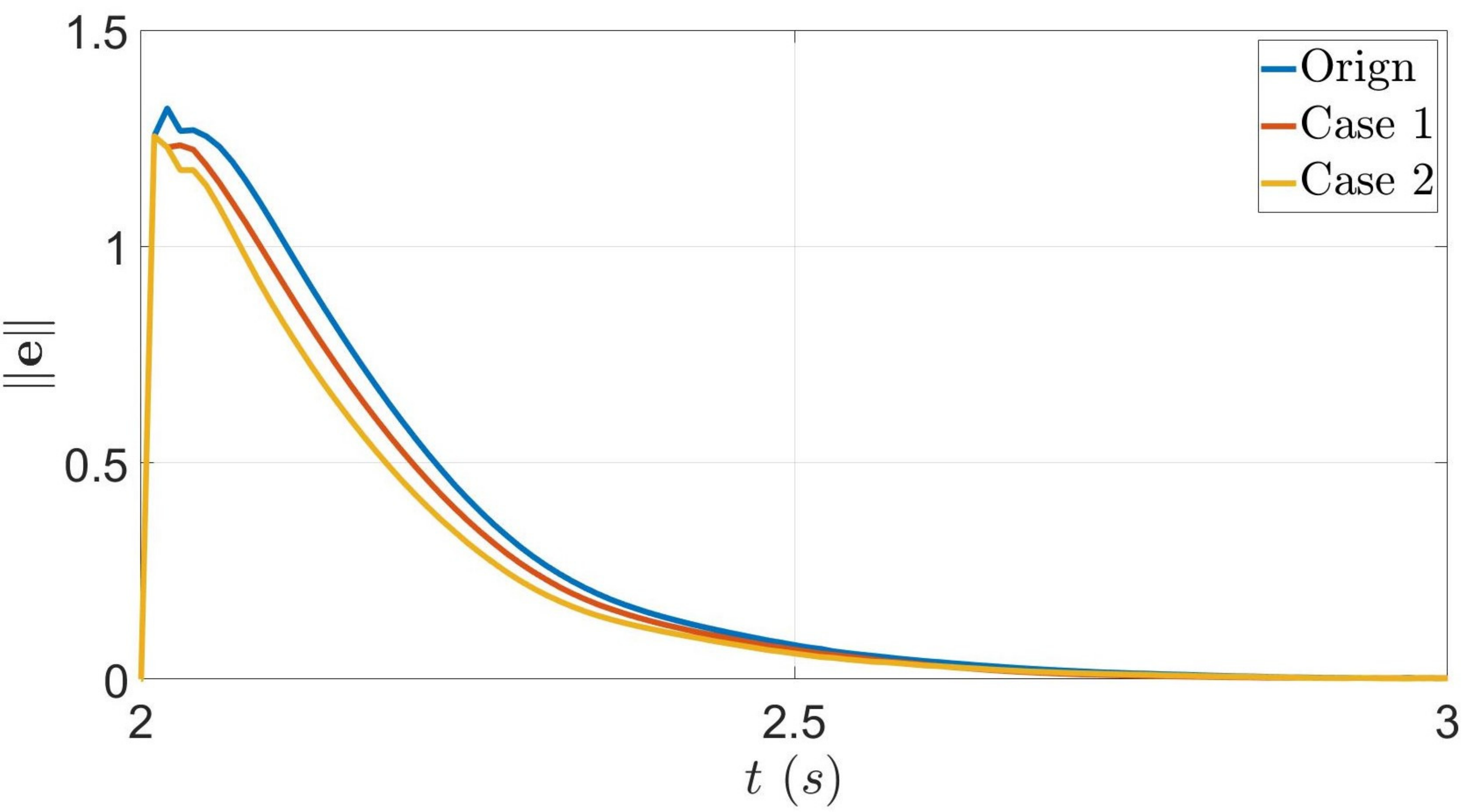}
\caption{Comparison results of $\Vert \mathbf{e} \Vert$ performance in different graph in the second stage.}
\label{difftopo}
\end{figure}

\section{Conclusion}
The attack-resilient distributed secondary control for islanded AC MGs has been explored. This paper presented the conditions for agents with discrete-time dynamics to resiliently track a reference signal propagated by a set of leaders against HDMA with channel noises. The proposed protocol maintains the bounded system stability and guarantees the UUB synchronization performance. In the simulation, we observe that our protocol successfully deceives normal agents into tracking the reference values even against HDMA. Facing with the sparse communication network and wider false data range, the ultimate bound of DACC is within $2 \cdot 10^{-6}$, while MSR-type and hidden layer method fail. Future work includes considering time-delay graphs and more sophisticated attacks in MGs. 
\vspace{-12pt}
\appendix
\section*{Proof of Theorem~1}
\begin{proof}
% Its derivative function is $F_i'(x)=(1-2\sigma x^2)e^{-\sigma x^2}$, with three poles $(0,1)$ and $(\pm \sqrt{\frac{3}{2\sigma}}, -2e^{-\frac{3}{2}})$. So the Lipschitz constant of $F_i$ is $1$.
As long as the system~(\ref{eq:error0}) is asymptotically stable at the equilibrium point $\textbf{e} = \textbf{0}$, the system~(\ref{dynamic}) with \eqref{foll_al} and \eqref{eq:aij} is asymptotically stable at the equilibrium point $x = x_0$. Let $F(k, \textbf{e})=A(k)\textbf{e}$, which is a continuous function of $\textbf{e}$. 

We first validate the boundness of the Jacobian matrix $J(k, \textbf{e})=\frac{ \partial F(k, \textbf{e})}{ \partial \textbf{e} }$ for $\Vert \textbf{e} \Vert \in \mathcal{I}_S$. Let $J_{ij}(k, \textbf{e})$ be the $(i,j)$ entry of $J$. We have
$$\vert J_{ii}(k, \textbf{e}) \vert =\vert\frac{\partial [F(k, \mathbf{e}(k))]_i}{\partial e_i(k)}\vert\leq 1.$$ 
Then we consider the case where $i\neq j$. If $j\notin \mathcal{V}_i^{in}$, we have $J_{ij}(k, \textbf{e})= 0$. If $j \in \mathcal{V}_i^{in}$, it hold that
\begin{equation}\nonumber
\begin{split}
&J_{ij}(k, \textbf{e})= \partial \frac{q_{ij}(k)e_j(k)}{ q_{ij}(k)+\sum_{p \in \mathcal{V}_i^{in}\setminus j}q_{ip}(k)}/ \partial e_j(k) \\
&= \frac{1}{q_i^s(k)^2}[(1+e_j(k))\sum_{p \in \mathcal{V}_i^{in}\setminus j}q_{ip}(k)+q_{ij}(k)]\frac{ \partial q_{ij}(k)}{ \partial e_j(k) },
\end{split}
\end{equation}
where
\begin{equation}
\begin{split}
\frac{ \partial q_{ij}(k)}{ \partial e_j(k)}= & \exp(-(1-\eta)\sigma r_{ij}(k-1)) \cdot\\
&2\eta \sigma \theta_{ij}(k)\exp(-\eta \sigma \theta_{ij}(k)^2). 
\end{split}
\end{equation}
Since $\Vert \textbf{e} \Vert \in \mathcal{I}_S$, we have $r_{ij}(k-1) \leq 4c_n^2$, $q_{ij}(k) \in [\exp(-4\sigma c_n^2),1]$, and $q_i^s(k) \in[ d_i \exp(-4\sigma c_n^2),d_i]$. Due to the boundness of $\theta_{ij}(k)\exp(-\eta \sigma \theta_{ij}(k)^2)$, $\frac{ \partial q_{ij}(k)}{ \partial e_j(k)}$ is bounded. Besides, $J_{ij}$ is differentiable with respect to $e_j$ in $\mathcal{I}_S$. Therefore, $F$ is Lipschitz, and $J$ is bounded and Lipschitz in $\mathcal{I}_S$, uniformly in $k$.

Substituting $ \mathbf{e}=\mathbf{0}$ into $J$, we have $J_{ij}(k,\mathbf{0})< d_i^{-1}$ and thus, all row sum of $J(k,\mathbf{0})$ is less than $1$. According to the sub-multiplicative property, we have $\Vert \mathbf{e}(k+1) \Vert _1=\Vert J(k,\mathbf{0}) \mathbf{e}(k) \Vert_1\leq \Vert J(k,\mathbf{0})\Vert_1 \Vert \mathbf{e}(k) \Vert_1< \Vert \mathbf{e}(k)\Vert_1$, where $\Vert . \Vert _1$ is the induced one norm of a matrix. Then the linear system $ \mathbf{e}(k + 1) = J(k,\mathbf{0}) \mathbf{e}(k)$ is asymptotically stable. Hence, the origin is asymptotically stable at $ \mathbf{e}=\mathbf{0}$ in $\mathcal{I}_S$ in accordance with Lemma~\ref{lemmades}. 
\end{proof}
\vspace{-12pt}
\section*{Proof of Lemma~\ref{xilemma}}
\begin{proof}
% In this proof, the first part derives the relationship between $A(k+h-1)\cdots A(k+1)A(k)$ and $\underline{a_{ij}}(k)$. The second part illustrate that the relationship between $\Vert A(k+h-1)\cdots A(k+1)A(k) \mathbf{e}(k)\Vert$ and $\underline{q_{l}}(k)$.
The $\Xi(k)$ is calculated by induction. Under Assumption~\ref{as:topology}, for any $i \in \mathcal{V}^{(n)}, n \geq 1$, the upper bound of $\Vert A(k+n-1)\cdots A(k+1)A(k)\mathbf{e}(k) \Vert$ is taken from the worst case where there are $f$ misbehaving agents and $d_{max}-f$ normal agents as its neighbors, and only $g$ normal agents are in $\mathcal{V}^{(n-1)}$. In this case, under Assumption~\ref{as:topology} and $\sigma \in \mathcal{I}_\sigma^p$, the lower bound of $a_{ij}(\kappa)$ for all $i,j \in \mathcal{V}_F, \kappa \in [k,k+h-1]$ is taken by $\underline{a_l}(k)$. 

By $S_i^{(n)}$ we denote the $i$-row sum of product $A(k+n-1)\cdots A(k+1)A(k)\mathbf{e}(k)$. If there is only one leader, consider $A(k)\mathbf{e}(k)$. It is immediate that
\begin{equation}\nonumber
S_i^{(1)}= \sum_{j=1}^w a_{ij}(k) \leq (1- \underline{a_{l}}(k))\Vert \mathbf{e}(k)\Vert \qquad \text{if } i \in \mathcal{V}^{(1)}
\end{equation}
and
\begin{equation}\nonumber
S_i^{(1)}\leq \Vert \mathbf{e}(k)\Vert \qquad \text{if } i \notin \mathcal{V}^{(1)}.
\end{equation}
The inequality is due to the fact that if $i \in \mathcal{V}^{(1)}$, the gap between $S_i^{(1)}$ and $1$ is exactly a weight of the edge between a normal follower and a normal leader $a_{il}$.

Consider $A(k)A(k+1)$. If $i\in \mathcal{V}^{(1)}$,
\begin{equation}\nonumber
S_i^{(2)}= \sum_{j=1}^w a_{ij}(k)S_j^{(1)} \leq (1- \underline{a_{l}}(k)) \Vert \mathbf{e}(k)\Vert
\end{equation}

If $i\in \mathcal{V}^{(2)}$, at least $g$ neighbors exist in $\mathcal{V}^{(1)}$. Then
\begin{equation}\nonumber
\Vert \mathbf{e}(k)\Vert-S_i^{(2)}\geq a_{ij}(k+1)(\Vert \mathbf{e}(k)\Vert-S_j^{(1)})\geq g (\underline{a_{l}}(k))^2 \Vert \mathbf{e}(k)\Vert.
\end{equation}

If $i\in \cup_{n=3}^h \mathcal{V}^{(n)}$, we have $a_{ij}(k)=0$ for all $j \in \mathcal{V}^{(1)}$. Then 

\begin{equation}\nonumber
S_i^{(2)}= \sum_{j\in \mathcal{V}^{(1)}} a_{ij}(k)S_j^{(1)}+\sum_{j\in \cup_{n=2}^h \mathcal{V}^{(n)}} a_{ij}(k)S_j^{(1)} \leq \Vert \mathbf{e}(k)\Vert.
\end{equation}

Then we have $S_i^{(2)} \leq (1-g (\underline{a_{l}}(k))^2 \Vert \mathbf{e}(k)\Vert$. The subsequent recursion is the same as Proposition 4.1 in \citep{CSFRL}. We have $$S_i^{(h)}\leq [1- g^{h-1} (\underline{a_{l}}(k))^{h}] \Vert \mathbf{e}(k)\Vert$$.

According to Assumption~\ref{as:topology}, there exist at least $g$ normal leaders. Combined with the Theorem 1 in \citep{FCSGD}, the gap $\Xi$ can be multiplied by $g$. 
\begin{equation}\nonumber
S_i^{(h)}\leq [1- (g\underline{a_{l}}(k))^{h}]\Vert \mathbf{e}(k)\Vert
\end{equation}
Thus the lemma is established.
\end{proof}

% \section*{Proof of Lemma~\ref{mathlemma1}}
% \begin{proof}
% Let $F(x)=(1-x)^{\frac{1}{c}} -( 1-\frac{1}{c}x)$.
% To show \[(1-x)^{\frac{1}{c}} \leq 1-\frac{1}{c}x\]
% it is equivalent to prove $F(x)\leq 0$.
% Obviously $F(0)=0$ and
% \begin{equation*}
% F'(x)=\frac{1}{c}[1-(1-x)^{\frac{1}{c}-1}].
% \end{equation*}
% For any $x\in [0,1)$ and $c \geq 1$, we have $F'(x) \leq 0$.
% Thus $F(x)$ is non-increasing.
% That is for $x\in [0,1)$,
% \[F(x)\leq F(0)=0.\]
% \end{proof}
% you can choose not to have a title for an appendix
% if you want by leaving the argument blank
\vspace{-12pt}
\section*{Proof of Lemma~\ref{mathlemma2}}
\begin{proof}
The function $F_c(x)$ can be reinterpreted as $$F_c(x)=- \frac{c_1}{\sqrt{c_3}}F_i(\sqrt{c_3}x)+\frac{c_2}{\sqrt{c_4}}F_i(\sqrt{c_4}(c_m-x)),$$
where $F_i$ is defined in \eqref{eq:fi}, as depicted in Fig.~\ref{fi}. As shown in Fig.~\ref{fpsi}, the shape of $F_c$ is composed of a $F_i$ flipped based on the $\mathbf{x}$-axis and a $F_i$ flipped based on a vertical line $x=c_m$.
\begin{figure}[hpt]
\centering
\includegraphics[width=2.8in]{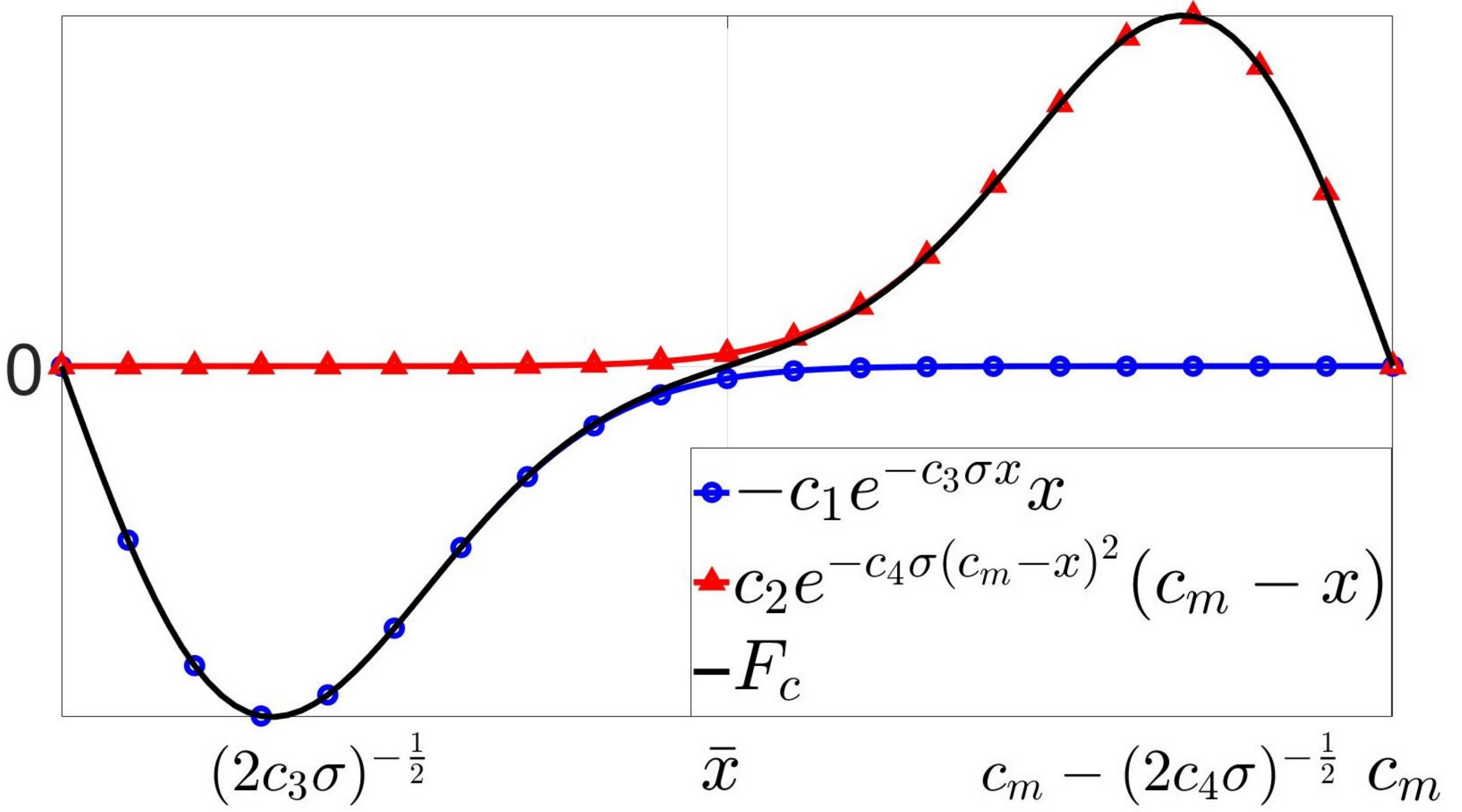}
\caption{The shape of $F_c$ and its components.}
\label{fpsi}
\end{figure}

The derivative function $F_c'(x)$ is 
\begin{equation}\nonumber
\begin{split}
F_c'(x)=&c_1(2 c_3 \sigma x^2-1)\exp(- c_3\sigma x^2)\\
&+c_2 [2c_4 \sigma ( c_m-x)^2-1]\exp(- c_4 \sigma ( c_m-x)^2).
\end{split}
\end{equation}
Denote zero as $z$. Given $x \in [0,c_m]$, there exits two zeros $z_1$ and $z_2$ for $F_c'(x)$. For $x \in [x_1,x_2]$ satisfying $0<x_1<x_2<c_n < z_2$, there are three poles of $F_c(x)$: $x_1$, $z_1$ and $x_2$. Since $z_1$ is the minimum point, $F_c(x) \leq 0$ for all $x \in [x_1,x_2]$ can be ensured by $F_c(x_1) \leq 0$ and $F_c(x_2) \leq 0$. Hence, all we need is to prove $c_n <z_2$. 

We first consider the case where one of the zeros for $F_c'(x)$ is $ (2c_3 \sigma)^{-\frac{1}{2}}$. It is established by $c_m-(2c_4 \sigma)^{-\frac{1}{2}}=(2c_3 \sigma)^{-\frac{1}{2}}$. According to $(2c_3 \sigma)^{-\frac{1}{2}}\geq c_n$, this zero is excluded from $(0,c_n)$ and at most one zero is left in $(0,c_n)$, which means $c_n <z_2$. 

Next, we consider the case where the zeros for $F_c'(x)$ are in $(0,(2c_3 \sigma)^{-\frac{1}{2}})\cup ((2c_3 \sigma)^{-\frac{1}{2}},c_m)$. Then $F_c'(z)=0$ can be equivalent to $F_1(z)=F_2(z)$, where $F_1: [0,(2c_3 \sigma)^{-\frac{1}{2}})\cup ((2c_3 \sigma)^{-\frac{1}{2}},c_m] \rightarrow \mathbb{R}$, and $F_2: [0,c_m] \rightarrow \mathbb{R}_+$. They are expressed as
\begin{equation}\nonumber
\begin{split}
&F_1(x) =-\frac{c_2[2c_4\sigma(c_m-x)^2-1]}{c_1(2c_3\sigma x^2-1)},\\
&F_2(x) =\exp[c_4\sigma(c_m-x)^2-c_3\sigma x^2].
\end{split}
\end{equation}
Upon analysis, it is observed that $F_1'(x)> 0$ for $x$ in either $[0,(2c_3 \sigma)^{-\frac{1}{2}})$ or $((2c_3 \sigma)^{-\frac{1}{2}},c_m]$. Under $c_m-(2c_4 \sigma)^{-\frac{1}{2}}\geq (2c_3 \sigma)^{-\frac{1}{2}}$, $F_1(x)>0$ only if $x$ in either $[0, (2c_3 \sigma)^{-\frac{1}{2}})$ or $(c_m-(2c_4 \sigma)^{-\frac{1}{2}}, c_m]$. Regarding $F_2$, we find that $F_2'(x)<0$ and $F_2(x)>0$ for all $x$ in $[0,c_m]$. Consequently, the solutions for $F_1(x)=F_2(x)$ are only possible in $[0, (2c_3 \sigma)^{-\frac{1}{2}})$ and $(c_m-(2c_4 \sigma)^{-\frac{1}{2}}, c_m]$. Since $c_m-(2c_4 \sigma)^{-\frac{1}{2}}\geq c_n$, we can conclude that there exists at most one solution for $F_1(x)=F_2(x), x \in (0, c_n)$. Therefore, $c_n <z_2$ holds. 
 
The lemma is established.
\end{proof}

\bibliographystyle{elsarticle-num}
\bibliography{mybib_resilient.bib}

\end{document}